\def\eqref#1{equation~(\ref{#1})}
\def\Algref#1{Algorithm~\ref{#1}}
\def\1{\bm{1}}
\DeclareMathAlphabet{\mathsfit}{\encodingdefault}{\sfdefault}{m}{sl}
\SetMathAlphabet{\mathsfit}{bold}{\encodingdefault}{\sfdefault}{bx}{n}
\newcommand{\R}{\mathbb{R}}
\newtcolorbox{AIbox}[2][]{aibox,title=#2,#1}
\renewcommand{\SetKwInOut}[2]{%
  \sbox\algocf@inoutbox{\KwSty{#2}\algocf@typo:}%
  \expandafter\ifx\csname InOutSizeDefined\endcsname\relax
    \newcommand\InOutSizeDefined{}\setlength{\inoutsize}{\wd\algocf@inoutbox}%
    \sbox\algocf@inoutbox{\parbox[t]{\inoutsize}{\KwSty{#2}\algocf@typo:\hfill}~}\setlength{\inoutindent}{\wd\algocf@inoutbox}%
  \else
    \ifdim\wd\algocf@inoutbox>\inoutsize%
    \setlength{\inoutsize}{\wd\algocf@inoutbox}%
    \sbox\algocf@inoutbox{\parbox[t]{\inoutsize}{\KwSty{#2}\algocf@typo:\hfill}~}\setlength{\inoutindent}{\wd\algocf@inoutbox}%
    \fi%
  \fi
  \algocf@newcommand{#1}[1]{%
    \ifthenelse{\boolean{algocf@inoutnumbered}}{\relax}{\everypar={\relax}}%
    {\let\\\algocf@newinout\hangindent=\inoutindent\hangafter=1\parbox[t]{\inoutsize}{\KwSty{#2}\algocf@typo:\hfill}~##1\par}%
    \algocf@linesnumbered
  }}%
\newtheorem{theorem}{Theorem}
\newtheorem{assumption}{Assumption}
\newtheorem{definition}{Definition}
\newtheorem{remark}{Remark}
\newcommand{\bvec}[1]{\mathbf{#1}}
\newcommand{\calL}{\mathcal{L}}
\newcommand{\calF}{{F}}
\newcommand{\calG}{\mathcal{G}}
\newcommand{\grad}{\nabla_{\theta}}
\newcommand{\gradg}{\bvec{g}_{\theta_0}}
\newcommand{\calS}{\mathcal{S}}
\newcommand{\commenton}{1}
\newcommand{\revision}[1]{{\color{blue}#1}}
\newcommand{\revision}[1]{#1}
\title{Guiding the Recommender: Information-Aware Auto-Bidding for Content Promotion}
\author{
Yumou Liu\thanks{Shanghai Jiao Tong University}
\and
Zhenzhe Zheng\footnotemark[1] \thanks{Zhenzhe Zheng is the corresponding author.}
\and
Jiang Rong\thanks{Xiaohongshu}
\and
Yao Hu\footnotemark[3]
\and
Fan Wu\footnotemark[1]
\and
Guihai Chen\footnotemark[1]
}
\begin{document}

\maketitle

\begin{abstract} 
Modern content platforms offer paid promotion to mitigate cold start by allocating exposure via auctions. Our empirical analysis reveals a counterintuitive flaw in this paradigm: while promotion rescues low-to-medium quality content, it can harm high-quality content by forcing exposure to suboptimal audiences, polluting engagement signals and downgrading future recommendation. We recast content promotion as a dual-objective optimization that balances short-term value acquisition with long-term model improvement. 
To make this tractable at bid time in content promotion, we introduce a decomposable surrogate objective, gradient coverage, and establish its formal connection to Fisher Information and optimal experimental design. We design a two-stage auto-bidding algorithm based on Lagrange duality that dynamically paces budget through a shadow price and optimizes impression-level bids using per-impression marginal utilities. To address missing labels at bid time, we propose a confidence-gated gradient heuristic, paired with a zeroth-order variant for black-box models that reliably estimates learning signals in real time. We provide theoretical guarantees, proving monotone submodularity of the composite objective, sublinear regret in online auction, and budget feasibility. Extensive offline experiments on synthetic and real-world datasets validate the framework: it outperforms baselines, achieves superior final AUC/LogLoss, adheres closely to budget targets, and remains effective when gradients are approximated zeroth-order. These results show that strategic, information-aware promotion can improve long-term model performance and organic outcomes beyond naive impression-maximization strategies. 
\end{abstract}

\section{Introduction}

Content creation platforms, such as TikTok \cite{tt} and Xiaohongshu \cite{xhs}, have become central to modern digital landscape, with recommendation systems serving as the primary arbiters of visibility for millions of new posts daily \cite{mediastats}. A fundamental challenge within this ecosystem is ``cold start'' problem~\cite{panda2022approaches,yuan2023user}: new content with limited interaction data struggles to be accurately assessed by the platform's recommendation algorithms. Therefore, the initial exposure is often limited and stochastic, meaning that the potential high-quality content can be prematurely pruned if it fails, by chance, to resonate with its initial, algorithmically-assigned audience. This dynamic is particularly acute for content platforms, where the lifecycle of a post is ephemeral, often lasting less than 24 hours, in stark contrast to long-lived items, such as products in e-commerce platforms.

To empower content creators and mitigate the algorithmic lottery of the cold start, platforms offer paid promotion services (such as ``Shutiao'' in Xiaohongshu \cite{Xiaohongshu} and TikTok content promotion \cite{TikTok}). These services allow creators to financially secure initial impressions, transforming them from passive recipients of algorithmic judgment to active agents who can influence their content's distribution. This paid intervention mirrors the mechanics of online advertising \cite{aggarwal2024auto}, where an auction is employed to efficiently allocate the scarce resource of user attention. 
However, our empirical analysis reveals a critical and counterintuitive flaw in this paradigm: while paid promotion can rescue low-to-medium quality content, it often has a negative long-term impact on the high-quality content. By forcing distribution to a broader, less optimal audience, the paid campaign can ``pollute'' the content's performance metrics with low-engagement signals, causing the recommendation algorithm to downgrade its initially high assessment.
This finding motivates our work. We posit that a naive bidding strategy borrowed from online advertising focused solely on maximizing immediate impressions is suboptimal and potentially harmful. A truly effective strategy must instead adopt a more sophisticated, dual-objective approach: balancing the short-term goal of acquiring high-value engagement with the long-term goal of strategically reducing the recommendation model's uncertainty. By providing the model with informative training data to reduce uncertainty, a creator can improve its ability to recognize their high-quality content organically in the future. 

However, designing and implementing such a strategy presents several non-trivial technical challenges.
The first challenge arises from {formulating a tractable long-term objective}. 
The goal of reducing model uncertainty is formally quantified by information-theoretic approaches like A-/D-/I-optimal design with Fisher Information Matirx (FIM)-related measures \cite{huan2024optimal}.
However, directly optimizing a FIM-based objective is computationally prohibitive \cite{allen2021near} in a real-time bidding environment. 
The calculation involves the set of all impressions won so far, making it non-decomposable and requiring expensive matrix operations (e.g., inversion) for every potential bid.\footnote{Please refer to the details in Section \ref{sec:oed}.} 
This is infeasible given the millisecond-scale latency requirements~\cite{grbovic2013large} of auctions in online platforms. 

Second, the bidding strategy must address the challenge of balancing the conflicting objectives under a budget. The short-term goal of maximizing immediate value (exploitation) often favors bidding on ``safe'' impressions where the model is already confident. In contrast, the long-term goal of uncertainty reduction (exploration) favors bidding on ``risky'' or novel impressions where the model is uncertain, which may not yield immediate clicks. This inherent conflict must be managed dynamically during the bidding process. The difficulty is compounded by the online nature of auctions, where impression opportunities arrive sequentially. The bidder must make irrevocable decisions with a finite budget and no foresight into the quality or cost of future impressions, making it exceptionally challenging to design a pacing mechanism that allocates budget intelligently over the entire campaign.

Finally, a core difficulty lies in real-time gradient estimation with missing labels. Objectives aimed at reducing model uncertainty should quantify the informativeness of a potential impression, which is represented by its loss gradient. However, computing this gradient requires the ground-truth label (i.e., whether a user will click), which is unknown at the moment of bidding. This fundamental ``missing label'' problem makes direct calculation impossible. Simple approximations, such as taking an expectation over the predicted probability, are brittle and can be highly inaccurate, especially for the most informative samples where the model is confident but incorrect. The challenge is to devise a robust heuristic that can accurately estimate this essential gradient in real-time, without the true label, to guide the bidding decision.

To address these challenges, we propose a principled and integrated bidding framework. To create a tractable objective, we introduce a novel surrogate called ``gradient coverage,'' which maximizes the similarity between the gradients of the acquired impressions and a representative set of validation data. To balance this long-term goal with short-term value acquisition under a budget, we develop a two-stage bidding algorithm based on Lagrange duality \cite{karlsson2021adaptive}. This framework uses a dual variable, acting as a dynamic shadow price for the budget, to control spending and optimally solve the composite objective. Finally, to overcome the missing label issue, we design a practical confidence-gated heuristic that uses the model's own prediction entropy to either assign a high exploration value to uncertain impressions or, for confident ones, to approximate the true gradient by selecting the hypothetical gradient (click vs. no-click) with the smaller L2-norm.

Our main technical contributions in this work are as follows:
\begin{itemize}
    \item \textbf{Modeling:} We formulate the bidding problem for content promotion as a dual-objective optimization that balances short-term predicted Click-through Rate (pCTR) maximization with long-term model improvement. We introduce a novel and computationally tractable ``gradient coverage'' function as a surrogate for reducing model uncertainty.
    
    \item \textbf{Algorithm Design:} We propose a two-stage bidding framework that uses Lagrange duality for campaign-level budget pacing and impression-level bid optimization. A key component is our confidence-gated heuristic for real-time gradient estimation without labels.
    
    \item \textbf{Theoretical Analysis:} We provide a rigorous analysis of our framework. We prove the formal relationship between our surrogate objective and the I-optimal experimental design, establish the submodularity of our composite objective function, and derive sublinear regret and budget feasibility guarantees for our online algorithm.
    
    \item \textbf{Evaluation:} We conduct extensive experiments that first validate each component of our method individually and then demonstrate the superior performance of the end-to-end framework on both synthetic and large-scale real-world datasets, confirming its ability to improve long-term model performance more effectively than standard bidding strategies.
\end{itemize}

\section{Preliminaries}
\subsection{Content Promotion Paradigm}
In the creator economy of online platforms, creators earn profit from user engagement with their content notes. Typically, a creator uploads content note, and the platform's recommendation system matches it with users. Creators face a strategic decision: either rely solely on the platform's organic recommendation, which is based on content quality and relevance, or pay to sponsor their content notes for additional impressions, a service known as \textit{Content Promotion}.\footnote{This work is based on Shutiao, which is the content promotion service in Xiaohongshu. ``Shutiao'' and ``content promotion'' will be used interchangeably.} This introduces a monetary dimension to a creator's strategy, allowing them to actively purchase impressions to improve their key performance indicators (KPIs) of a specific content.

The motivation for creators to participate in content promotion fundamentally differs from that of traditional advertisers. While advertisers typically seek direct, short-term returns, creators using content promotion are focused on the performance of the content throughout its full life cycle. They are less concerned with the immediate return on spending and more with the long-term effect of the promotion, specifically, how it can improve a single content's KPIs and subsequent organic reach after the content promotion campaign has concluded.
This forward-looking behavior is crucial: conventional advertisers in Online Advertising value immediate, campaign-level returns, whereas paid creators in content promotion value the total-lifecycle success of an individual piece of content note. 
To allocate the scarce resource of promotional impression, online platforms employ auctions, a well-established method for efficient allocation in a competitive environment \cite{myerson1981optimal}. 
This transforms promotional opportunities into a marketplace where creators bid for impressions. 
Our work is situated in this context, focusing on designing an optimal bidding strategy for creators, maximizing the KPIs of content throughout the life cycle.

\paragraph{System Model.}
We now formalize the environment in content promotion. The environment consists of a set of creators $\mathcal{S}$ participating in the content promotion program. Each creator $i \in \mathcal{S}$ produces a piece of content note, represented by a feature vector $\mathbf{x}_i$. The platform employs a pCTR (\textit{i.e.}, predicted Click-Through Rate) model, $\mathcal{M}$, to estimate the relevance of content to a user. For content $i$, the model predicts its click probability $\hat{\sigma}_i = \mathcal{M}(\mathbf{x}_i)$. This prediction is the platform's estimate of the true, unknown CTR, $\sigma_i$, and the model's predictive error is a primary source of uncertainty.

When a promotional impression slot becomes available, the platform conducts an auction among the creators $\mathcal{S}$. Each creator $i$ submits a per-click bid $b_i$. The ranking is based on an eCPM-like (expected Cost Per Mille) score that combines the content's quality with the bid:
\( \text{score}(i) = \hat{\sigma}_i \cdot b_i. \)
The slot is allocated to the creator $i^*$ with the highest score, $i^* = \arg\max_{i \in \mathcal{S}} \text{score}(i)$, with ties broken randomly. 
\revision{The platform employs a first-price, pay-per-impression payment rule, reflecting the dominant standard in modern online platforms \cite{Googlefirstprice}. A winning creator $i^*$ pays her submitted bid, $p_{i^*} = \hat{\sigma}_{i^*}\cdot b_{i^*}$. Otherwise, the payment is zero.
However, we emphasize that our core contribution, measuring the information of an impression, is mechanism-agnostic.\footnote{While we derive our bidding strategy for the more challenging first-price setting (which requires bid shading \cite{zhou2021efficient}), our method for estimating the ``uncertainty reduction value'' ($\Delta_t$) can be directly applied to second-price (VCG) auction mechanisms with only minor modifications to the final bid calculation (Please refer to the details in Remark \ref{rem:generalize-second}).}}

\paragraph{Problem Formulation.}
Each creator $i$ has a private value $v_i$ for a click, representing the benefit they can derive from that engagement. A creator's objective is to maximize her utility. When creator $i \in \mathcal{S}$ wins an auction and receives a click, her short-term utility from that impression is her value net of cost, $u_i = v_i - b_i$. 

Beyond this immediate utility, a creator has a long-term objective. 
The inherent uncertainty in the platform's pCTR model leads to noisy and unreliable rewards, which can hinder a creator's ability to systematically improve the performance of their content throughout the life cycle. 
Therefore, creators can use the paid promotion service not just for immediate reach, but also to provide the platform with valuable training data. 
By strategically winning impressions, a creator can help reduce model uncertainty. This, in turn, improves the platform's ability to accurately assess the creator's content, leading to better performance in the future.\footnote{Please refer to a toy model illustrating this in Appendix \ref{sec:insight-toy-model}.}
The central problem for a creator is to design a bidding strategy that optimally balances these dual objectives, short-term value acquisition and long-term model uncertainty reduction, within a given campaign budget. Our work focuses on developing such a strategy.

\subsection{The Dilemma of Content Promotion}
To ground our work in real-world observations and motivate our approach, we first analyze the performance of a typical content promotion service, Shutiao in Xiaohongshu~\cite{Xiaohongshu}. Our findings reveal a critical, counterintuitive flaw that forms the central problem this work addresses.

\subsubsection{An Empirical Performance Dichotomy in Content Promotion}\label{sec:measurement}
\begin{figure}[t]
     \centering
     \begin{subfigure}[t]{0.495\textwidth}
         \centering
         \includegraphics[width=\textwidth]{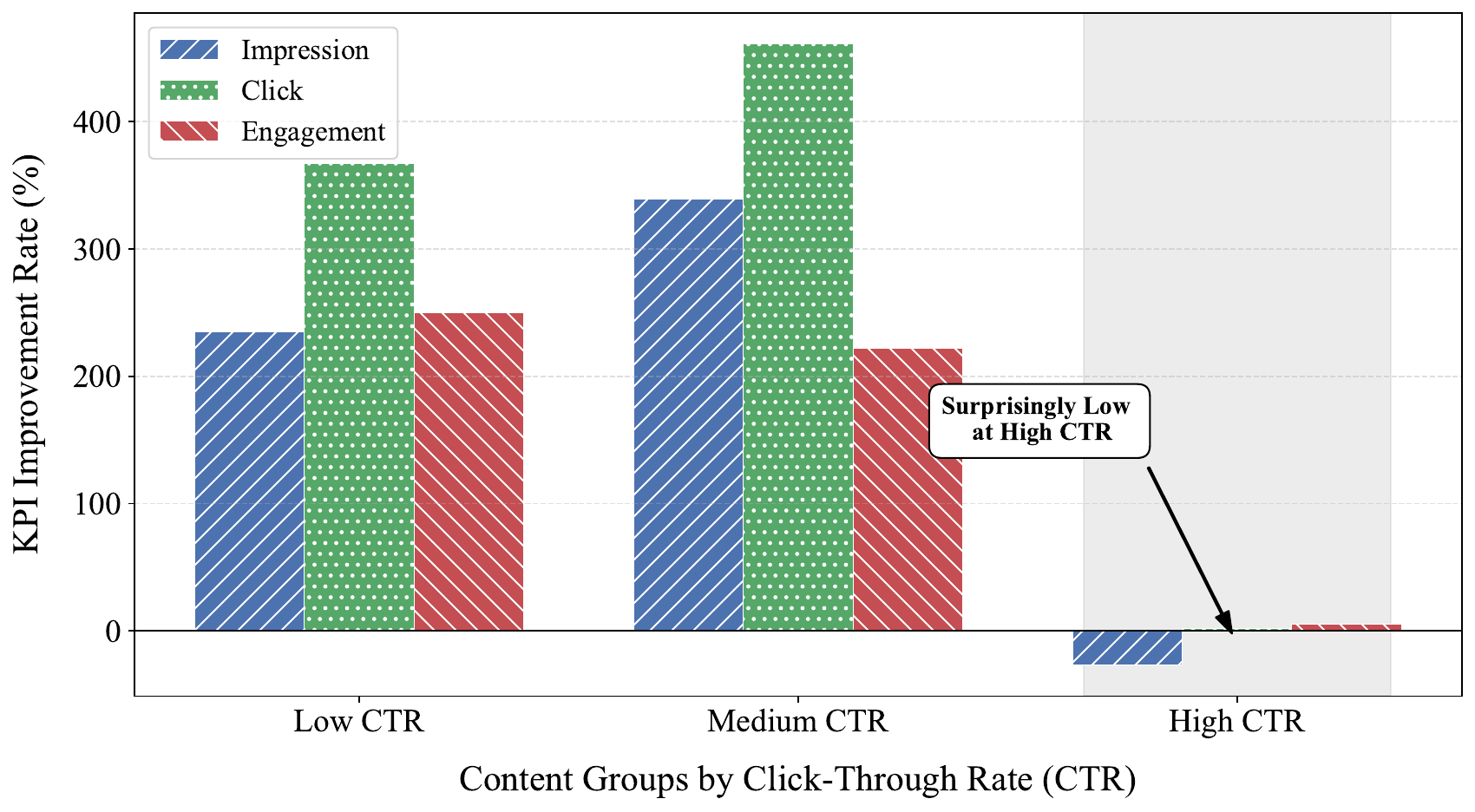}
         \caption{7 days after Shutiao.}
         \label{fig:improv-rate-7}
     \end{subfigure}
     \hfill
     \begin{subfigure}[t]{0.495\textwidth}
         \centering
         \includegraphics[width=\textwidth]{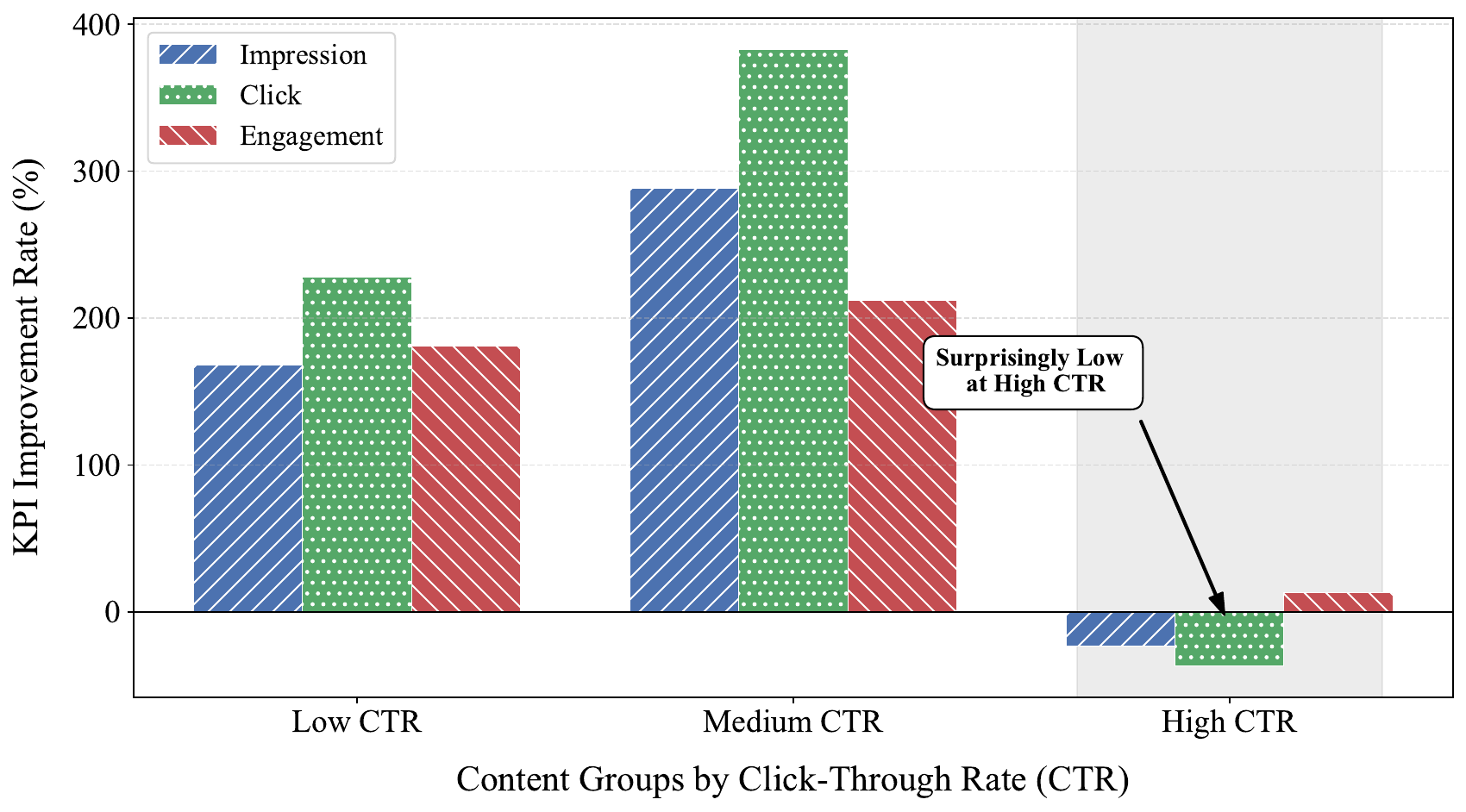}
         \caption{14 days after Shutiao.}
         \label{fig:improv-rate-14}
     \end{subfigure}
        \caption{
        KPI Improvement ratio by Shutiao stratified by content CTR, compared with organic content.
        }
        \label{fig:improv-rate}
\end{figure}
We begin by measuring the performance of the existing Shutiao promotion service to understand its true impact on a content note's KPIs throughout its lifecycle. We investigate a critical question: \emph{Does a short-term content promotion campaign consistently lead to long-term gains in visibility and engagement?}

We conduct a comparative study between a large group of promoted content notes and a control group of organic content notes. For the promoted group, we select content notes that participated in a single Shutiao campaign with a budget 30 units. For the control group, we randomly sample content notes that not utilized the promotion service.\footnote{Detailed configurations of the sampling process are provided in Appendix~\ref{sec:measure-details}.} From a one-week period (March 27, 2025, to April 2, 2025), we sampled 2,560 content notes for each group. To analyze the effect of initial content quality, we stratify all content notes based on their organic CTR measured before the promotion campaign began.\footnote{The proportions of three groups content notes 13.47\%, 56.95\% and 29.57\%, respectively.} We then track the cumulative impressions, clicks, and engagements for each group at 7 and 14 days post-campaign. We compute the KPI improvement rate defined as $\frac{\text{KPI}_{promoted}-\text{KPI}_{organic}}{\text{KPI}_{organic}}\times 100\%$ for each KPI and each group of content notes, respectively.

The results from Figure~\ref{fig:improv-rate} reveal a striking dichotomy. While Shutiao provides a substantial boost for most content, it has a negative long-term impact on high-quality posts (those with a high initial CTR). 
Specifically, as shown in Figure \ref{fig:improv-rate-14}, while content in the low-CTR bucket sees a post-promotion engagement lift of over 200\%, the high-CTR group actually suffers a decline in organic metrics, with click and impression improvement rates dropping to approximately -25\% compared to the control group.
We posit this phenomenon arises because different types of content have different sensitivities to the \emph{quality} of impression obtained from content promotion. During the initial ``cold-start'' phase, all content receives limited organic impressions.
\begin{itemize}
    \item \textbf{For Low-to-Medium Quality Content:} The performance of this content is relatively \textbf{insensitive} to the specifics of the bidding strategy; its primary bottleneck is the lack of sufficient impressions. Paid promotion serves as a brute-force solution to acquire impressions. This new data helps the model revise its initial low pCTR estimate, giving the content a chance to find an audience and thus improving its long-term organic reach.
    
    \item \textbf{For High-Quality Content:} This content is highly \textbf{sensitive} to impression quality. A naive paid promotion campaign, focused on spending a budget, often forces distribution to a broad, less-optimal audience. This poorly-targeted impression yields a lower engagement rate than the organic system would have achieved. The influx of low-engagement signals ``pollutes'' the realized performance, causing the pCTR model to question its initial high-quality assessment and reduce future organic distribution. Here, paid promotion interferes with an already effective organic matching process.
\end{itemize}
This empirical result shows that naive promotion is a double-edged sword. A creator's investment can either rescue their content or inadvertently sabotage it, underscoring the need for an intelligent bidding strategy that looks beyond simply buying impressions.

\subsubsection{Redefining Impression Quality as Informative Data}

The empirical analysis reveals that the goal of paid promotion should not be to maximize raw impression volume, but to acquire \emph{high-quality impression}. We propose a long-term, model-centric definition of quality.

\revision{We define ``long-term success'' as maximizing \textit{Total Lifetime Value} (TLV) of a piece of content. In recommender systems, new content faces a ``cold start'' period where the platform gathers data to estimate its quality. Relying solely on organic impression to exit this high-uncertainty phase is risky due to two factors: (1) \textbf{Time-Discounted Utility:} Organic data accumulation is slow. Since future rewards are discounted, a delayed discovery of quality significantly reduces the creator's total utility. (2) \textbf{Algorithmic Pruning:} Modern recommenders (often contextual bandits) act as ``filters.'' If content fails to accumulate positive signals quickly enough, its predicted Upper Confidence Bound (UCB) drops below the system's selection threshold. Once this happens, the system stops recommending the content entirely—effectively ``killing'' it before it can prove its true worth.

From this viewpoint, high-quality impression consists of impressions that are most useful for training the platform’s pCTR model to quickly reduce variance. By strategically winning impressions in uncertain regions, a creator accelerates the model's convergence, shortening the cold-start duration. This ensures the content survives the initial ``filter'' and unlocks high-volume organic impression earlier in its lifecycle, thereby maximizing discounted total utility.
}
\subsection{Optimal Experimental Design}\label{sec:oed}
The proposed bidding framework is grounded in the principles of Optimal Experimental Design (OED) \cite{pukelsheim2006optimal}, a field of statistics concerned with selecting the most informative data points to minimize the uncertainty of model parameter estimates. Given a model parameterized by $\theta$ and a set of candidate observations $O = \{x_z\}_{z \in \mathcal{S}}$, the informativeness of these observations is typically quantified by the FIM \cite{wittman2025fisher}:
$$
I(O) = \sum_{z \in \mathcal{S}} g_\theta(x_z, y_z; \theta) g_\theta(x_z, y_z; \theta)^\top,
$$
where $g_\theta (\cdot)$ denotes the gradient of the loss function. Classical OED criteria aim to optimize different scalar properties of the FIM to achieve specific variance reduction goals:
\begin{itemize}
    \item \textbf{D-optimality}: Maximizes $\det(I(O))$, which minimizes the volume of the confidence ellipsoid for the parameters $\theta$.
    \item \textbf{A-optimality}: Minimizes $\text{tr}(I(O)^{-1})$, effectively minimizing the average variance of the parameter estimates.
    \item \textbf{I-optimality} (or V-optimality): Minimizes the integrated prediction variance over a region of interest, defined as $\int w(x) g_\theta(x)^\top I(O)^{-1} g_\theta(x) dx$.
\end{itemize}
While these criteria provide rigorous measures for uncertainty reduction, directly optimizing them in a real-time bidding environment is computationally prohibitive due to the need for frequent matrix inversions and non-decomposable updates. This motivates our design of a tractable surrogate objective in Section \ref{sec:method}.
\section{Methods}\label{sec:method}



\begin{figure}[t]
    \centering
    \resizebox{\columnwidth}{!}{\begin{tikzpicture}[
    node distance=1.5cm and 0.8cm,
    block/.style={rectangle, draw, thick, text centered, minimum height=2.5em, text width=2.4cm, fill=white},
    input/.style={rectangle, draw, dashed, text centered, text width=2.0cm, font=\small},
    output/.style={rectangle, draw, thick, text centered, rounded corners, minimum height=3em, text width=3.2cm, fill=green!5},
    arrow/.style={-Stealth, thick},
    labelnode/.style={font=\bfseries\small, text=gray}
]

\node [input] (in1) {Budget \& Expenditure};
\node [input, below=0.5cm of in1] (in2) {Model $\mathcal{M}$ \& Feature $x_t$};

\node [block, right=1.2cm of in1] (pacing) {\textbf{Pacing}\\(Sec \ref{sec:two-stage-bidding})};
\node [right=0.4cm of pacing, font=\large] (lambda) {$\lambda_t$};

\node [block, right=1.2cm of in2] (est) {\textbf{Estimation}\\(Sec \ref{sec:gradient-estimation})};
\node [right=0.4cm of est, font=\large] (gt) {$\hat{g}_t$};

\node [block, right=1cm of gt] (utility) {\textbf{Utility}\\(Sec \ref{sec:objective})};
\node [right=0.4cm of utility, font=\large] (delta) {$\Delta_t$};

\node [output, right=2.61cm of lambda] (bid) {Bid $b_t^*$: \\ $\arg \max [W_a(b)(\Delta_t - \lambda_t b)]$};

\begin{scope}[on background layer]
    \node[fill=blue!5, rounded corners, fit=(pacing) (lambda), label={[labelnode]above:Shadow Price}] {};
    
    \node[fill=yellow!15, rounded corners, fit=(est) (gt) (utility) (delta) (bid.east |- delta), label={[labelnode]below:Surrogate Calculation}] {};
\end{scope}

\draw [arrow] (in1) -- (pacing);
\draw [arrow] (in2) -- (est);
\draw [arrow] (pacing) -- (lambda);
\draw [arrow] (est) -- (gt);
\draw [arrow] (gt) -- (utility);
\draw [arrow] (utility) -- (delta);

\draw [arrow] (lambda.east) -| ([xshift=-0.5cm]bid.west) -- (bid.west);
\draw [arrow] (delta.north)  -| (bid.south);

\end{tikzpicture}} 
    
    
    \caption{System Diagram of Information-Aware Auto-Bidding.}
    \label{fig:diagram}
\end{figure}
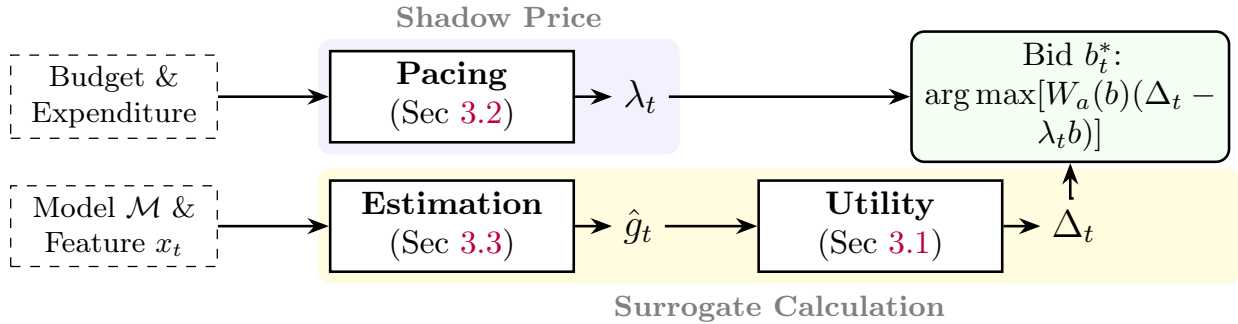

In this section, we present our proposed bidding methodology designed to optimize content promotion for creators. Our approach is structured into three main parts, as shown in Figure \ref{fig:diagram}. First, we formally define our bidding objective, which innovatively combines the immediate value of an impression with a surrogate for long-term model uncertainty reduction, and establish its crucial submodularity property (Sec.~\ref{sec:objective}). Next, we detail a two-stage bidding framework that leverages this submodular objective to efficiently solve the budget-constrained optimization problem via Lagrange duality (Sec.~\ref{sec:two-stage-bidding}). Finally, we address a key practical challenge by introducing a confidence-gated heuristic for estimating the necessary loss gradients in real-time, even when the true outcome label is not yet available (Sec.~\ref{sec:gradient-estimation}).

\subsection{Surrogate Uncertainty}
\label{sec:objective}
The core of our bidding strategy is to select a set of impressions, $\mathcal{S}$, that optimally balances two competing goals: the short-term goal of maximizing immediate campaign returns and the long-term goal of improving the model by reducing its uncertainty. To this end, we formulate a composite objective function that captures this trade-off.

The first component of our objective, $V(\mathcal{S})$, quantifies the immediate value accrued from the campaign. We define this as the total expected value from the winning impressions, which can be modeled as the sum of their pCTRs:
\begin{equation*}
    V(\mathcal{S}) = \sum_{\mathbf{z} \in \mathcal{S}} \hat{\sigma}(\mathbf{z}),
    \label{eq:value_term}
\end{equation*}
where $\hat{\sigma}(\mathbf{z})$ is the pCTR for the impression $\mathbf{z}$. This term incentivizes the bidder to acquire impressions that are likely to yield immediate positive user engagement.

The second component, $U(\mathcal{S})$, serves as a computationally tractable surrogate for model uncertainty reduction. It is designed to encourage the selection of a diverse and representative set of training samples. Let $\mathcal{D}_{\text{val}}$ be a fixed, representative set of validation samples. We define this uncertainty surrogate as:
\begin{equation}
    U(\mathcal{S}) := \sum_{\mathbf{x} \in \mathcal{D}_{\text{val}}} \max_{\mathbf{z} \in \mathcal{S}} \exp \left(-\lambda \|\mathbf{g}_{\theta_0}(\mathbf{x}) - \mathbf{g}_{\theta_0}(\mathbf{z})\|^2\right),
    \label{eq:uncertainty_surrogate}
\end{equation}
where $\mathbf{g}_{\theta_0}(\cdot)$ is the gradient of the model's loss function. This objective, which we term ``gradient coverage,'' rewards the selection of a set $\mathcal{S}$ whose samples' gradients are collectively close to the gradients of the validation data, intuitively covering the space of necessary learning signals.
For any data point $\bvec{x}$, $\gradg(\bvec{x})$ denotes the gradient of the model's loss function with respect to the model parameters $\theta_0$\footnote{A discussion is given in Appendix~\ref{sec:parameter-alignment} on the practicality of computing the gradient in industry.}. It represents the ``learning signal'' from that sample, which indicates the direction in parameter space that could reduce the loss for that specific example.
The term $\exp(-\lambda \|\cdot\|^2)$ is a Gaussian kernel that measures the similarity between two gradient vectors. It yields a value close to 1 if the gradients are nearly identical and decays to 0 as they become more distant. The hyperparameter $\lambda > 0$ controls the sensitivity of this similarity measure.
For each sample $\bvec{x}$ in the validation set, the inner expression, $\max_{\bvec{z} \in \calS} \exp(\dots)$, finds the training sample $\bvec{z}$ in our selected set $\calS$ whose gradient is most similar to the gradient of $\bvec{x}$. This can be interpreted as the ``best coverage'' that $\calS$ provides for the learning signal required by $\bvec{x}$.

We combine these two components into a single, unified objective function, $F(\mathcal{S})$, using a hyperparameter $\beta \in [0, 1]$ to control the balance:
\begin{equation}
    F(\mathcal{S}) := (1-\beta) \cdot U(\mathcal{S}) + \beta \cdot V(\mathcal{S}).
    \label{eq:new_objective}
\end{equation}
Here, $\beta=0$ prioritizes purely long-term uncertainty reduction, while $\beta=1$ focuses exclusively on short-term pCTR maximization.

In essence, maximizing $\calF(\calS)$ makes the bidder to acquire a portfolio of impressions whose learning signals collectively blanket the space of signals needed to improve performance on the overall data distribution. This property of ``gradient coverage'' serves as our intuitive and computationally friendly proxy for the more complex goal of direct uncertainty minimization. In the subsequent analysis section, we will formally establish the relationship between this surrogate and standard information-theoretic measures of model uncertainty.

\revision{\paragraph{Advantages over classical metrics.} Objectives such as conventional A-/D-optimal designs introduced in Section \ref{sec:oed} require repeated matrix updates and inversions and are thus non-decomposable and impractical for millisecond-latency bidding \cite{kiefer1959,pukelsheim2006optimal,ChalonerVerdinelli1995}. 
In contrast, our gradient-coverage objective is a facility-location–style \cite{owen1998strategic,soland1974optimal} max-kernel in gradient space: it yields per-impression marginal gains by updating running maxima over a fixed validation set, avoiding any matrix inversion and enabling real-time computation. The construction induces monotone submodularity, providing diminishing returns that support efficient online selection \cite{guestrin2005near,sener2018active}. 
Compared to expected-gradient/model-change criteria from active learning \cite{settles2008analysis,settles2009active} and gradient-embedding methods such as BADGE \cite{Ash2020Deep} which are based on the techniques in Section \ref{sec:oed}, our formulation explicitly targets long-term variance reduction via a provable link to Fisher information (Theorem~\ref{thm:surrogate-relation}) while remaining decomposable at impression granularity, which is essential for auction-time bidding. 
Computationally, computing our surrogate requires $O(|\mathcal{D}_{\text{val}}|)$ kernel evaluations and max updates per impression, and can be further reduced via batching or cached partial maxima—without forming or inverting any Fisher matrices.}

\subsection{Two-Stage Bidding}\label{sec:two-stage-bidding}
In this subsection, we optimize our objective as a budget-constrained bidding (BCB) problem.
The goal is to select a set of impressions $S$ to win via bidding that maximizes an uncertainty-reduction utility function, $\calF(S)$, subject to a total budget $B_i$.
\begin{equation*}
    \max_{S} \calF(S) \quad \text{s.t.} \quad \sum_{t \in S} b_t \leq B_i.
\end{equation*}
As $\calF(S)$ can be shown to be a monotone submodular function (Section \ref{sec:analysis}), this problem can be effectively solved using Lagrange duality \cite{karlsson2021adaptive}. The Lagrangian is:
\begin{equation*}
    \calL(S, \lambda) = \calF(S) - \lambda \left( \sum_{t \in S} b_t - B_i \right).
\end{equation*}
This allows us to decompose the global problem into a per-impression bidding decision. For each impression $x_t$, the marginal utility gain is $\Delta_t = \calF(S_{t-1} \cup \{x_t\}) - \calF(S_{t-1})$. The expected budget-aware surplus from bidding $b_t$ is $W_a(b_t) \cdot (\Delta_t - \lambda b_t)$, where $W_a(b_t)$ is the win probability for a bid $b_t$. The optimal bid $b_t^*$ is then:
\begin{equation*}
    b_t^* = \arg\max_{b_t \geq 0} \left[ W_a(b_t) \cdot (\Delta_t - \lambda b_t) \right].
\end{equation*}
This forms a two-stage bidding process:
\begin{enumerate}
    \item \textbf{Campaign-Level Pacing:} The dual variable $\lambda$, which represents the shadow price of the budget, is controlled at the campaign level. It can be updated dynamically using a multiplicative weights update rule:
    \begin{equation}\label{eq:dual-update}
        \lambda_k = \lambda_{k-1} \cdot \exp\left(\eta \cdot \frac{\text{Cost}_{k-1} - \text{Paced\_Budget}_{k-1}}{B_i}\right),
    \end{equation}
    where $k$ indexes pacing periods and $\eta$ is a learning rate. The term $\text{Paced\_Budget}_{k-1}$ denotes the target cumulative expenditure up to period $k-1$. To ensure the budget spans the entire campaign duration, we typically employ a linear pacing schedule defined as $\text{Paced\_Budget}_{k} = B_i \cdot \frac{k}{K}$, where $K$ is the total number of pacing intervals. Equation (\ref{eq:dual-update}) allows the algorithm to ``learn'' the appropriate value she would pay for the impressions relative to its remaining budget.
    \item \textbf{Impression-Level Bidding:} At each auction, the optimal bid $b_t^*$ is computed based on the current $\lambda$ and the estimated marginal utility $\Delta_t$.
\end{enumerate}

\revision{\begin{remark}\label{rem:generalize-second}
While Eq. (\ref{eq:dual-update}) derives the bid for First-Price Auctions \cite{karlsson2021adaptive}, our framework adapts easily to Second-Price Auctions (SPA). In an SPA setting with budget constraints, the truthful bidding strategy is modified by the shadow price $\lambda$. The optimal bid simplifies to a scaled truthful bid: $b^*_t=\frac{\Delta_t}{\lambda}$ (where $\lambda\geq 1$). See details in Appendix \ref{sec:spa}.
\end{remark}}

The central challenge is to define and compute $\Delta_t$ in real-time, which we address next.

\begin{algorithm}[t]
    \caption{Confidence-Gated Marginal Utility Estimation}
    \label{alg:utility_estimation}
    \DontPrintSemicolon
    \Input{
        Current impression features $\bvec{x}_t$;
        CTR prediction model $M(\cdot; \theta_t)$ with parameters $\theta_t$;
        Set of gradients from previously won impressions $\calG_S = \{\bvec{g}_z\}_{z \in S_{t-1}}$;
        Confidence (entropy) threshold $\zeta_t$;
        High-uncertainty utility value $\bar{u}$;
        Loss function $\mathcal{L}(\cdot, \cdot)$;
        Utility function hyperparameters (e.g., $\lambda$ from $\calF$);
    }
    \Output{Estimated marginal utility $\Delta_t$;}
    
    \BlankLine
    
    \Comment{1. Assess model confidence}
    $p_t \leftarrow M(\bvec{x}_t; \theta_t)$; \Comment*[r]{Get predicted CTR}
    $H(p_t) \leftarrow -p_t \log_2(p_t) - (1-p_t)\log_2(1-p_t)$; \Comment*[r]{Compute prediction entropy}
    
    \BlankLine
    
    \uIf{$H(p_t) > \zeta_t$}{
        \Comment{2a. Low-confidence case: assign high fixed utility}
        \Return $\bar{u}$;\;
    }
    \Else{
        \Comment{2b. High-confidence case: use gradient heuristic}
        
        $\bvec{g}_0 \leftarrow \grad \mathcal{L}(M(\bvec{x}_t; \theta_t), 0)$; \Comment*[r]{Gradient assuming label is 0}
        $\bvec{g}_1 \leftarrow \grad \mathcal{L}(M(\bvec{x}_t; \theta_t), 1)$; \Comment*[r]{Gradient assuming label is 1}
        
        \uIf{$\|\bvec{g}_0\|_2 < \|\bvec{g}_1\|_2$}{
            $\hat{\bvec{g}}_t \leftarrow \bvec{g}_0$;\;
        }
        \Else{
            $\hat{\bvec{g}}_t \leftarrow \bvec{g}_1$;\;
        }
        
        \BlankLine
        \Comment{Compute marginal utility gain using the proxy gradient}
        $\calF_{\text{current}} \leftarrow \text{ComputeUtility}(\calG_S)$;\;
        $\calF_{\text{new}} \leftarrow \text{ComputeUtility}(\calG_S \cup \{\hat{\bvec{g}}_t\})$;\;
        $\Delta_t \leftarrow \calF_{\text{new}} - \calF_{\text{current}}$;\;
        
        \Return $\Delta_t$;\;
    }
    
    \BlankLine
    \SetKwFunction{FMain}{ComputeUtility}
    \SetKwProg{Fn}{Function}{}{end}
    \Fn{\FMain{$\calG_{set}$}}{
        \Comment{Helper to compute total utility for a set of gradients}
        \Comment{Implements $\calF(S) = \sum_{x \in D_{test}} \max_{z \in S} \exp(-\lambda \|g_x - g_z\|^2)$}
        \KwRet value based on the definition of $\calF$;\;
    }
\end{algorithm}

\subsection{Gradient Estimation}\label{sec:gradient-estimation}

A critical challenge in applying our framework is the real-time computation of the marginal utility, $\Delta_t = \calF(S_{t-1} \cup \{x_t\}) - \calF(S_{t-1})$, at bid time. The utility function $\calF$ in Equation~(\ref{eq:new_objective}) depends on the loss gradient of a potential training sample $x_t$. However, at the moment of bidding, the true label (i.e., whether the user will click) is unknown. This ``missing label'' problem prevents the direct computation of the gradient via standard backpropagation.

To overcome this, we propose a practical hybrid strategy, detailed in \Algref{alg:utility_estimation}, which uses the model's own confidence as a signal to switch between two estimation modes. The model's confidence in its prediction for an impression $x_t$ is measured by the entropy of its pCTR, $\hat{\sigma}_t$. A high entropy signifies high uncertainty (low confidence), while low entropy signifies high confidence.

\paragraph{Confidence-Gated Heuristic}
Our approach is governed by a dynamic confidence threshold, $\zeta_t$. For each impression, we compare its prediction entropy to this threshold.

\noindent\textbf{High-Confidence (Low-Entropy) Case:} If the entropy $H(p_t) \leq \zeta_t$, the model is relatively certain about its prediction. In this regime, we can devise a heuristic to approximate the true gradient. We compute two \textit{hypothetical} gradients: $\bvec{g}_0$, assuming the label is $y=0$, and $\bvec{g}_1$, assuming the label is $y=1$. The core assumption is that for a well-trained and confident model, the loss gradient corresponding to the \textit{correct} (and therefore more likely) label will be smaller in magnitude. The model state is already near a local minimum for that class, requiring a smaller update. We thus select the gradient with the smaller L2-norm as our proxy for the true gradient:
\begin{equation*}
    \hat{\bvec{g}}_t = \arg\min_{\bvec{g} \in \{\bvec{g}_0, \bvec{g}_1\}} \|\bvec{g}\|_2.
\end{equation*}

\revision{This heuristic assumes the model is well-calibrated. In ``confident-but-wrong'' scenarios, the heuristic might select the uninformative gradient. However, our architecture mitigates this risk via the entropy threshold $\zeta_t$. High-entropy samples—where the heuristic is least reliable—are effectively filtered out and assigned a fixed high exploration utility $\bar{u}$, ensuring we acquire the label rather than relying on a noisy gradient estimate.}
    
\noindent\textbf{Low-Confidence (High-Entropy) Case:} If the entropy $H(p_t) > \zeta_t$, the model is highly uncertain about the outcome. From an active learning perspective, such samples are intrinsically valuable because they reside in regions of the feature space where the model is uncertain. Correctly labeling and training on these samples offers the highest potential for model improvement and future uncertainty reduction. Therefore, instead of estimating a precise gradient-based utility, we assign a high, fixed utility value, $\bar{u}$, to winning this impression. This value represents the strategic importance of exploring the model's uncertain data.



This dual-mode approach is robust: it prioritizes exploration when the model is uncertain and relies on a reasonable heuristic for exploitation and refinement when the model is confident. The threshold $\zeta_t$ and the utility constant $\bar{u}$ are system hyperparameters.

Label-free utilities often rely on an expected-gradient or expected model-change under the model’s predicted label distribution \cite{settles2008analysis,settles2009active}, or on expected output change \cite{freytag2014selecting}. These estimators can fail in the confident-but-wrong regime: the posterior mass collapses on the incorrect label and the expected gradient points in an uninformative direction. Our confidence-gated heuristic is tailored to this failure mode: we explore aggressively when entropy is high, and when entropy is low we approximate the true gradient by the smaller-norm hypothetical gradient, consistent with local optimality around the more likely label. Moreover, our zeroth-order (two-point) variant enables use with black-box CTR models, leveraging established ZO/SPSA estimators \cite{duchi2015optimal,nesterov2017random,spall1992multivariate,liu2020primer,malladi2023fine}. This combination makes information-aware bidding feasible when analytical gradients and labels are unavailable at bid time.
\section{Theoretical Analysis}\label{sec:analysis}

In this section, we conduct theoretical analysis to validate the soundness of our proposed bidding framework. We structure our analysis around four key pillars. First, we establish a formal connection between our tractable surrogate objective and a standard information-theoretic measure of model uncertainty, thereby justifying our problem formulation (Sec.~\ref{sec:surrogate-relationship}). Second, we prove that our composite objective function is submodular, a fundamental property that makes the optimization problem computationally tractable (Sec.~\ref{sec:submodularity}). Building on this, we analyze the algorithm's performance in its natural online environment, providing a sublinear regret bound that demonstrates its competitiveness against an optimal offline solution (Sec.~\ref{sec:regret-analysis}). Finally, we prove a budget feasibility guarantee, ensuring that the algorithm is fiscally responsible and predictable (Sec.~\ref{sec:budget-feas}). Together, these results formally establish our method as effective and reliable.

\subsection{Surrogate Relation}\label{sec:surrogate-relationship}
In this subsection, we prove the correctness of our proposed objective function in Section \ref{sec:objective}.
Part of the objective function $U(\mathcal{S})$ is designed as as a computationally efficient surrogate for reducing model uncertainty. A crucial question, however, is whether maximizing this ``gradient coverage'' objective truly corresponds to the fundamental goal of improving the model's predictive certainty. 
To validate our approach, we establish a formal connection between our tractable surrogate $F(\mathcal{S})$ and this information-theoretic measure of uncertainty. 
The following theorem proves that maximizing our surrogate $U(\mathcal{S})$ is indeed a principled approach for minimizing the true model uncertainty.

\begin{theorem}[Regularized Fisher-Coverage Relationship]
\label{thm:surrogate-relation}
Let $\mathcal{D}_{\mathrm{val}}$ be a fixed validation set of size $k$, and for each $x\in\mathcal{D}_{\mathrm{val}}$ let $g(x)\in\mathbb{R}^d$ denote the loss gradient at a common anchor parameter $\theta_{\mathrm{anchor}}$. For a selected training set $S\subseteq\mathcal{D}_{\mathrm{train}}$ with gradients $\{g(z)\}_{z\in S}$, define the regularized empirical Fisher
\[
\mathcal{I}_\gamma(S) \;:=\; \sum_{z\in S} g(z)\,g(z)^\top \;+\; \gamma I_d,\quad \gamma>0,
\]
and the regularized total uncertainty (analogy to \cite{lu2024daved})
\[
G_\gamma(S) \;:=\; \sum_{x\in\mathcal{D}_{\mathrm{val}}} g(x)^\top \mathcal{I}_\gamma(S)^{-1} g(x).
\]
Let the gradient-coverage surrogate be
\[
U_\lambda(S) \;:=\; \sum_{x\in\mathcal{D}_{\mathrm{val}}} \max_{z\in S} \exp\!\big(-\lambda\,\|g(x)-g(z)\|^2\big),\qquad \lambda>0.
\]
Assume (i) bounded gradients: $\|g(v)\|\le L$ for all $v\in \mathcal{D}_{\mathrm{val}}\cup S$, and (ii) non-degenerate norms on $S$: $\|g(z)\|\ge m>0$ for all $z\in S$. Then for any choice of threshold $\tau\in(0,2m^2]$,
\[
G_\gamma(S) \;\le\; \frac{k\,L^2}{\gamma} 
\;-\; \frac{\big(2m^2-\tau\big)^2}{4\,\gamma^2\,\big(1+\tfrac{L^2}{\gamma}\big)}\cdot 
\frac{U_\lambda(S)-k\,e^{-\lambda\tau}}{1-e^{-\lambda\tau}}.
\]
In particular, increasing $U_\lambda(S)$ (for fixed $\lambda,\tau,\gamma$) tightens the upper bound on $G_\gamma(S)$, so $U_\lambda(S)$ is a monotone surrogate for reducing $G_\gamma(S)$.
\end{theorem}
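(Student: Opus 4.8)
The plan is to prove the inequality pointwise over $\mathcal{D}_{\mathrm{val}}$ and then sum. Write $G_\gamma(S)=\sum_{x\in\mathcal{D}_{\mathrm{val}}} g(x)^\top \mathcal{I}_\gamma(S)^{-1}g(x)$ and $U_\lambda(S)=\sum_{x\in\mathcal{D}_{\mathrm{val}}} m_x$ with $m_x:=\max_{z\in S}\exp(-\lambda\|g(x)-g(z)\|^2)\in(0,1]$. Set $C:=\dfrac{(2m^2-\tau)^2}{4\gamma^2(1+L^2/\gamma)}\ge 0$ and $c_x:=\dfrac{m_x-e^{-\lambda\tau}}{1-e^{-\lambda\tau}}$, noting $c_x\le 1$ since $m_x\le 1$ and $1-e^{-\lambda\tau}>0$. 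It then suffices to show the pointwise bound $g(x)^\top\mathcal{I}_\gamma(S)^{-1}g(x)\le \tfrac{\|g(x)\|^2}{\gamma}-C\,c_x$ for every $x$: summing over the $k$ points and using $\|g(x)\|\le L$ together with $\sum_x c_x=\dfrac{U_\lambda(S)-k e^{-\lambda\tau}}{1-e^{-\lambda\tau}}$ gives exactly the stated bound, and the ``monotone surrogate'' conclusion is immediate because $C\ge 0$ and the coefficient multiplying $U_\lambda(S)$ is $-C/(1-e^{-\lambda\tau})\le 0$ (strictly negative when $\tau<2m^2$).

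For the pointwise bound, fix $x$ and let $z^\star\in S$ attain the maximum in $m_x$. Discarding the remaining rank-one terms gives $\mathcal{I}_\gamma(S)\succeq \gamma I_d+g(z^\star)g(z^\star)^\top\succ 0$, hence $g(x)^\top\mathcal{I}_\gamma(S)^{-1}g(x)\le g(x)^\top(\gamma I_d+g(z^\star)g(z^\star)^\top)^{-1}g(x)$ by antitonicity of matrix inversion. Sherman--Morrison then yields
\[
g(x)^\top(\gamma I_d+g(z^\star)g(z^\star)^\top)^{-1}g(x)=\frac{\|g(x)\|^2}{\gamma}-\frac{\langle g(x),g(z^\star)\rangle^2}{\gamma(\gamma+\|g(z^\star)\|^2)}\;\le\;\frac{\|g(x)\|^2}{\gamma}-\frac{\langle g(x),g(z^\star)\rangle^2}{\gamma^2(1+L^2/\gamma)},
\]
using $\|g(z^\star)\|\le L$ in the denominator. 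So everything reduces to lower-bounding $\langle g(x),g(z^\star)\rangle^2$ in terms of the coverage value $m_x$.

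Now case-split on whether $m_x\ge e^{-\lambda\tau}$, which, since $\lambda>0$, is equivalent to $\|g(x)-g(z^\star)\|^2\le\tau$. In that ``well-covered'' case, the polarization identity $\langle g(x),g(z^\star)\rangle=\tfrac12(\|g(x)\|^2+\|g(z^\star)\|^2-\|g(x)-g(z^\star)\|^2)$ combined with the non-degeneracy $\|g(x)\|,\|g(z^\star)\|\ge m$ and $\|g(x)-g(z^\star)\|^2\le\tau\le 2m^2$ gives $\langle g(x),g(z^\star)\rangle\ge m^2-\tfrac{\tau}{2}=\tfrac{2m^2-\tau}{2}\ge 0$, hence $\langle g(x),g(z^\star)\rangle^2\ge\tfrac{(2m^2-\tau)^2}{4}$, so the subtracted term is at least $C$ and $g(x)^\top\mathcal{I}_\gamma(S)^{-1}g(x)\le\tfrac{\|g(x)\|^2}{\gamma}-C\le\tfrac{\|g(x)\|^2}{\gamma}-C c_x$, the last step because $C\ge0$ and $c_x\le1$. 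In the complementary ``poorly-covered'' case, $m_x<e^{-\lambda\tau}$ so $c_x<0$, and the trivial bound $g(x)^\top\mathcal{I}_\gamma(S)^{-1}g(x)\le\|g(x)\|^2/\gamma$ (from $\mathcal{I}_\gamma(S)\succeq\gamma I_d$) already satisfies $\le\tfrac{\|g(x)\|^2}{\gamma}-C c_x$ since $-C c_x\ge0$. Either way the pointwise inequality holds; summing over $x\in\mathcal{D}_{\mathrm{val}}$ completes the proof.

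The main obstacle is the accounting step that turns a single threshold $\tau$---which only detects the binary event ``$\|g(x)-g(z^\star)\|^2\le\tau$ or not''---into the precise closed-form coefficient of $U_\lambda(S)$. The device is the affine rescaling $c_x=(m_x-e^{-\lambda\tau})/(1-e^{-\lambda\tau})\le1$, which lets us charge the full improvement $C$ exactly when the point is well covered and charge nothing (a still-valid weaker bound) otherwise, so the per-point bounds add up to $-C\cdot\frac{U_\lambda(S)-k e^{-\lambda\tau}}{1-e^{-\lambda\tau}}$. A smaller issue worth flagging: the polarization step needs $\|g(x)\|\ge m$ for the validation gradients too, so we read hypothesis (ii) as holding on $\mathcal{D}_{\mathrm{val}}\cup S$ (parallel to hypothesis (i)); without that one still obtains the weaker constant $(m^2-\tau)^2/4$ on the range $\tau\in(0,m^2]$ via the reverse triangle inequality $\|g(x)\|\ge m-\sqrt{\tau}$. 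The remaining ingredients---PSD antitonicity of inversion and the rank-one Sherman--Morrison identity---are standard.
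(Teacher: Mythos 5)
Your proof is correct and follows essentially the same route as the paper's: nearest-neighbor rank-one minorization $\mathcal{I}_\gamma(S)\succeq \gamma I_d+g(z^\star)g(z^\star)^\top$, Sherman--Morrison, polarization, and a threshold-$\tau$ accounting step, where your per-point coefficients $c_x$ are just a pointwise rewriting of the paper's count of the well-covered set $A_\tau=\{x:\|g(x)-g(z_x)\|^2\le\tau\}$ via $\sum_x c_x\le |A_\tau|$. Your flag that the polarization step needs $\|g(x)\|\ge m$ for validation gradients is apt, but it applies equally to the paper's own proof (which uses this implicitly), so it does not distinguish the two arguments.
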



The proof is given in Appendix~\ref{sec:proof-surrogate-relation}.
Theorem~\ref{thm:surrogate-relation} provides the core theoretical justification for our proposed bidding objective. 
As we select a set $\mathcal{S}$ that increases the value of $U(\mathcal{S})$, the upper bound on $G(\mathcal{S})$ is driven down.
The uncertainty measure $G(\mathcal{S})$ can be written as $\text{tr}(I(S)^{-1} J_\text{val})$ (or with a small ridge regularization if needed), where $I(S)=\sum_{z\in \mathcal{S}} g(z) g(z)^\top$ is the empirical Fisher and $J_\text{val}=\sum_{x\in \mathcal{D}_{\text{val}}} g(x) g(x)^\top$ is the validation gradient second-moment. This is exactly the I-optimality criterion (integrated prediction variance) evaluated on the validation set. In the special case where the validation gradient covariance is isotropic (or after whitening), i.e., $J_{\text{val}}$ is proportional to the identity, $G(\mathcal{S})$ reduces (up to a constant factor) to the A-optimal objective $\text{tr}(I(S)^{-1})$. Thus, minimizing $G(\mathcal{S})$ recovers I-optimal design in general and A-optimal design in the isotropic (or whitened) case, which further justifies our use of $G(\mathcal{S})$ as principled uncertainty.

Intuitively, Theorem~\ref{thm:surrogate-relation} holds because both functions, despite their different forms, capture the notion of ``coverage.'' A high value of $U(\mathcal{S})$ means that the gradients of the selected samples in $\mathcal{S}$ are close to the gradients of the validation data $\mathcal{D}_{\text{val}}$. Similarly, a low value of $G(\mathcal{S})$ means that the vector space spanned by the gradients in $\mathcal{S}$ effectively represents the validation data gradients, minimizing projection error and thus variance. Our proof formalizes this shared intuition.


\subsection{Submodularity of the Uncertainty-Reduction Utility Function}\label{sec:submodularity}

In this subsection, we analyze the structural properties of our objective function. 
Proving that our objective function is submodular is the key that unlocks our bidding algorithm. 

\begin{theorem}[Submorularity of the Uncertainty-Reduction Utility Function]\label{thm:submodular}
$F:2^\mathcal{S}\to \R^+$ is a submodular function.
\end{theorem}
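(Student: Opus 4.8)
The plan is to decompose the composite objective as $F = (1-\beta)\,U + \beta\,V$ and show that each of $U$ and $V$ is submodular, then invoke the closure of submodularity under nonnegative linear combinations — valid here because $\beta\in[0,1]$ forces both $1-\beta\ge 0$ and $\beta\ge 0$. Throughout I will use the diminishing-returns characterization: a set function $H$ on ground set $\mathcal{S}$ is submodular iff for all $A\subseteq B\subseteq\mathcal{S}$ and all $e\in\mathcal{S}\setminus B$ one has $H(A\cup\{e\})-H(A)\ge H(B\cup\{e\})-H(B)$.

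\textbf{Step 1: the value term is modular.} Since $V(S)=\sum_{z\in S}\hat{\sigma}(z)$, the marginal gain of adding $e$ to any incumbent set is exactly $\hat{\sigma}(e)$, independent of the set. Hence the diminishing-returns inequality holds with equality, so $V$ is (trivially) submodular; it is also monotone because $\hat{\sigma}\ge 0$.

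\textbf{Step 2: the gradient-coverage term is a sum of facility-location functions.} Fix a validation point $x\in\mathcal{D}_{\mathrm{val}}$, write $k(x,z):=\exp\!\big(-\lambda\|g_{\theta_0}(x)-g_{\theta_0}(z)\|^2\big)\in(0,1]$, and define the per-point coverage $\phi_x(S):=\max_{z\in S}k(x,z)$ with the convention $\phi_x(\emptyset)=0$ (consistent since $k\ge 0$). For $A\subseteq B$ and $e\notin B$, the marginal gain is $\phi_x(A\cup\{e\})-\phi_x(A)=\big(k(x,e)-\phi_x(A)\big)_+$; since $A\subseteq B$ implies $\phi_x(A)\le\phi_x(B)$, we obtain $\big(k(x,e)-\phi_x(A)\big)_+\ge\big(k(x,e)-\phi_x(B)\big)_+$, which is precisely the diminishing-returns inequality for $\phi_x$. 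Summing over the validation set, $U(S)=\sum_{x\in\mathcal{D}_{\mathrm{val}}}\phi_x(S)$ is a finite sum of submodular functions, hence submodular; each $\phi_x$ is nondecreasing, so $U$ is monotone as well.

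\textbf{Step 3: combine.} As a nonnegative linear combination of the submodular functions $U$ and $V$, the composite $F=(1-\beta)U+\beta V$ is submodular; it maps into $\mathbb{R}^+$ because $\hat{\sigma}\ge 0$ and $k>0$. I do not expect a genuine obstacle here: the only point that warrants an explicit remark is the empty-set convention for the inner maximum (so that $U(\emptyset)=0$ and the first marginal gain $k(x,e)$ dominates all later ones), after which everything is a routine verification of the diminishing-returns inequality.
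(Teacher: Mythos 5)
Your proposal is correct and follows essentially the same route as the paper's proof: reduce to the additivity of $V$, view $U$ as a sum over validation points of max-kernel (facility-location) coverage functions, and verify diminishing returns for each, then sum. Your use of the identity that the marginal gain equals $\big(k(x,e)-\phi_x(A)\big)_+$ merely compresses the paper's three-case analysis into one inequality.
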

The proof is given in Appendix~\ref{sec:proof-submodular}.

Submodularity formalizes the intuition that the marginal gain of adding a new impression to our selected set $\mathcal{S}$ decreases as the set grows. 
\begin{itemize}
    \item For the uncertainty component $U(\mathcal{S})$, adding an impression with a novel gradient to a small set $\mathcal{S}$ yields a large increase in ``gradient coverage.'' However, adding that same impression to a large, diverse set offers a smaller marginal benefit, as its learning signal is likely already well-represented by other samples in the set.
    \item For the value component $V(\mathcal{S})$, the marginal gain is constant, which is a special case of submodularity.
\end{itemize}

The budget-constrained maximization of a monotone submodular function can be efficiently solved with approximation guarantees. 
This property transforms an otherwise intractable optimization problem into one that can be realistically solved within the stringent time constraints of a real-time bidding auction, thereby improving the efficiency and reliability of the model optimization process.

\subsection{Regret Analysis}\label{sec:regret-analysis}
Our two-stage bidding algorithm operates in an online setting: impression opportunities arrive sequentially, and an irrevocable decision to bid (and how much) must be made for each one without knowledge of future opportunities.
The difference between the expected utility of this offline optimum and our online algorithm's utility is termed \textit{regret}. 

\begin{theorem}[Regret Bound for First-Price CPM Dual Pacing]
\label{thm:regret}
Consider a sequence of $T$ auctions. At round $t$, the bidder observes a win-probability curve $W_t(b)\in[0,1]$ for bids $b\in[0,b_{\max}]$ and a marginal utility gain $\Delta_t$ (from acquiring the impression), with $|\Delta_t|\le \Delta_{\max}$. 
Under a first-price CPM mechanism (winner pays the eCPM), the expected spend at round $t$ is $h_t := W_t(b_t)\,b_t$, where $b_t$ is the bid placed at round $t$.

Define the per-round dual objective
\(
f_t(\lambda) := \max_{b\in[0,b_{\max}]}\, W_t(b)\,\big(\Delta_t - \lambda\,b\big),
\)
and let the algorithm choose $b_t\in\arg\max_{b} W_t(b)\,(\Delta_t-\lambda_{t-1}b)$, with the dual (shadow-price) update given by multiplicative weights (mirror descent with negative entropy):
\[
\lambda_t \;=\; \lambda_{t-1}\,\exp\!\Big(\eta\,\frac{h_t}{C}\Big),\qquad C\ge b_{\max},\;\;\eta>0,\;\;\lambda_0\in(0,\lambda_{\max}].
\]
Let the algorithm's total expected utility be 
\(
\mathrm{ALG} \;:=\; \sum_{t=1}^T \mathbb{E}\!\left[W_t(b_t)\,\Delta_t\right].
\)
Let $B>0$ be the budget and define the offline optimal value
\(
\mathrm{OPT} 
\;:=\; \max_{S^*:\,\mathrm{Spend}(S^*)\le B}\; \sum_{t=1}^T \mathbb{E}\!\left[\Delta_t^{(S^*)}\right],
\)
where $\mathrm{Spend}(S^*)$ is the (CPM) spend of the offline solution. 
Let $\lambda^*\in[0,\lambda_{\max}]$ be a dual optimal solution to 
$\min_{\lambda\ge 0} \sum_{t=1}^T f_t(\lambda) + \lambda B$.
Then, choosing
\(
\eta \;=\; \sqrt{\frac{\log(\lambda_{\max}/\lambda_0)}{T}}\,\frac{1}{C},
\)
we have the regret bound
\[
\mathbb{E}[\mathrm{ALG}]
\;\ge\;
\mathrm{OPT}
\;-\;
2\,C\,\sqrt{T\,\log(\lambda_{\max}/\lambda_0)}
\;-\;
\lambda^*\,\Big(B - \mathbb{E}\big[\textstyle\sum_{t=1}^T h_t\big]\Big).
\]
In particular, if the pacing ensures $\mathbb{E}[\sum_{t=1}^T h_t]\approx B$, the last term is negligible and the regret is $O\!\big(C\,\sqrt{T\,\log(\lambda_{\max}/\lambda_0)}\big)$.
\end{theorem}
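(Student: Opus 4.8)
The plan is to run the classical Lagrangian-plus-online-mirror-descent argument for budget-constrained online allocation (dual pacing in the style of Balseiro--Lu--Mirrokni): upper bound $\mathrm{OPT}$ by weak duality, rewrite $\mathrm{ALG}$ through the per-round best-response identity, and control the gap between the online shadow prices $\lambda_{t-1}$ and the offline dual optimum $\lambda^*$ using the regret of the multiplicative-weights (exponentiated-gradient) update.

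\textbf{Step 1 (weak duality).} Fix $\lambda\ge 0$ and any offline allocation $S^*$ with $\mathrm{Spend}(S^*)\le B$. Adding the nonnegative slack $\lambda(B-\mathrm{Spend}(S^*))$ and noting that the resulting per-round term has the form $W_t(b)(\Delta_t-\lambda b)$ for the bid $S^*$ uses at round $t$, hence is $\le f_t(\lambda)$, gives $\sum_t\Delta_t^{(S^*)}\le \lambda B+\sum_t f_t(\lambda)$. Maximizing over $S^*$ yields $\mathrm{OPT}\le \lambda B+\sum_t f_t(\lambda)$ for all $\lambda\ge 0$, in particular $\mathrm{OPT}\le \lambda^*B+\sum_t f_t(\lambda^*)$, since $\lambda^*$ solves that very dual.

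\textbf{Step 2 (primal--dual identity).} Because $b_t$ maximizes $W_t(b)(\Delta_t-\lambda_{t-1}b)$, the maximum equals $f_t(\lambda_{t-1})=W_t(b_t)\Delta_t-\lambda_{t-1}h_t$, so $W_t(b_t)\Delta_t=f_t(\lambda_{t-1})+\lambda_{t-1}h_t$. Summing, taking expectations, subtracting the Step-1 bound and inserting $\pm\lambda^*\sum_t h_t$ gives
\[
\mathbb{E}[\mathrm{ALG}]-\mathrm{OPT}\;\ge\;\mathbb{E}\!\Big[\textstyle\sum_t\big(f_t(\lambda_{t-1})-f_t(\lambda^*)+(\lambda_{t-1}-\lambda^*)h_t\big)\Big]\;-\;\lambda^*\Big(B-\mathbb{E}\big[\textstyle\sum_t h_t\big]\Big).
\]
The second term already matches the theorem, so it remains to lower-bound the bracketed sum by $-2C\sqrt{T\log(\lambda_{\max}/\lambda_0)}$.

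\textbf{Step 3 (dual regret).} By Danskin's theorem applied to $f_t(\lambda)=\max_b W_t(b)(\Delta_t-\lambda b)$ at the maximizer $b_t$, we have $-h_t\in\partial f_t(\lambda_{t-1})$: the round-$t$ expected spend $h_t$ is simultaneously minus a subgradient of the convex, non-increasing $f_t$ at $\lambda_{t-1}$. Hence the update $\lambda_t=\lambda_{t-1}\exp(\eta h_t/C)$ is exactly exponentiated-gradient / negative-entropy mirror descent on $[\lambda_0,\lambda_{\max}]$ with the linear surrogate loss $\ell_t(\lambda)=-(h_t/C)\lambda$, whose gradients lie in $[-1,0]$. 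The standard mirror-descent regret inequality (three-point/prox lemma plus telescoping of Bregman divergences) against the comparator $\lambda^*$ gives $\sum_t h_t(\lambda^*-\lambda_{t-1})$ of order $\log(\lambda_{\max}/\lambda_0)/\eta+\eta C^2T$; combined with the first-order convexity estimate for $f_t$ this lower-bounds the bracketed sum of Step 2. The prescribed $\eta=\frac1C\sqrt{\log(\lambda_{\max}/\lambda_0)/T}$ balances the two pieces to $2C\sqrt{T\log(\lambda_{\max}/\lambda_0)}$, completing the bound; iterates straying outside $[\lambda_0,\lambda_{\max}]$ are clipped, which only shrinks the regret.

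\textbf{Main obstacle.} The delicate step is the coupling in Step 3: turning the \emph{online} dual regret into a lower bound on the \emph{primal} gap $\sum_t\big(f_t(\lambda_{t-1})-f_t(\lambda^*)+(\lambda_{t-1}-\lambda^*)h_t\big)$. Plain subgradient convexity at $\lambda_{t-1}$ only certifies each summand is $\le 0$ (the wrong direction), so one must invoke the mirror-descent prox inequality together with the double role of $h_t$, and verify that the residual budget coupling collapses to exactly the complementary-slackness term $\lambda^*(B-\mathbb{E}[\sum_t h_t])$ rather than to a quantity scaling with $T$. The secondary bookkeeping---retaining only $\log(\lambda_{\max}/\lambda_0)$ (not $\lambda_{\max}$) in the Bregman term by working in the normalized shadow-price geometry, and handling the randomness of $(W_t,\Delta_t)$ with a filtration so that $\lambda_{t-1}$ is predictable before taking expectations---is routine.
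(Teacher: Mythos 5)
Your proposal is correct and follows essentially the same route as the paper's proof: weak duality to bound $\mathrm{OPT}$ by $\sum_t f_t(\lambda^*)+\lambda^* B$, the envelope identity $W_t(b_t)\Delta_t=f_t(\lambda_{t-1})+\lambda_{t-1}h_t$, insertion of $\pm\lambda^*\sum_t h_t$ to isolate the complementary-slackness term, and negative-entropy mirror-descent regret for the dual iterates with the same choice of $\eta$. The only presentational difference is that the paper realizes your Step 3 as two separate OMD inequalities (one for $\sum_t\bigl(f_t(\lambda^*)-f_t(\lambda_{t-1})\bigr)$ and one for the linear cross term $\sum_t h_t(\lambda_{t-1}-\lambda^*)$, each contributing $\log(\lambda_{\max}/\lambda_0)/\eta+O(\eta T C^2)$), whereas you package them as a single coupled prox-lemma bound exploiting the double role of $h_t$ as both spend and subgradient of $f_t$.
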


The proof is given in Appendix~\ref{sec:proof-regret}.

Theorem~\ref{thm:regret} provides the theoretical foundation for our dual-variable-based pacing. 
The sublinear regret bound proves that this online learning process is effective, ensuring that the algorithm intelligently allocates the budget over the entire campaign horizon. This avoids common pitfalls such as prematurely exhausting the budget on mediocre impressions or being overly conservative and failing to spend the budget on high-value opportunities at the end.

The result confirms that our method is not merely a heuristic but a principled online optimization algorithm with provable near-optimal performance. It guarantees that a creator's budget will be utilized efficiently and effectively over time, adapting to the dynamic conditions of the auction marketplace.

\subsection{Budget Feasibility}\label{sec:budget-feas}

A crucial property of our algorithm is \textit{budget feasibility}, a formal guarantee that its total expenditure will remain close to the allocated budget. 
This property is fundamental to establishing trust and making the promotion tool reliable and predictable for its users. The following theorem proves that our algorithm satisfies this critical requirement in expectation.

\begin{theorem}[Budget Feasibility Guarantee]\label{thm:budget-feas}
The expected total expenditure of the algorithm over $T$ auctions is bounded. Using the same learning rate $\eta$ as in the regret analysis, the expected cost satisfies:
\begin{equation*}
    \mathbb{E}\left[\sum_{t=1}^{T} b_t \cdot \mathbf{1}_{\text{win}_t}\right] \le B + \frac{\log(\lambda_{\max}/\lambda_0)}{\eta},
\end{equation*}
where $\mathbf{1}_{\text{win}_t}$ is an indicator that the bid $b_t$ wins the auction at time $t$.
\end{theorem}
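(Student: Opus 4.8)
The plan is to exploit the exponential (multiplicative-weights) form of the shadow-price update, which makes $\log\lambda_t$ a running accumulator of per-round budget surpluses that telescopes cleanly over the campaign. Write $s_t:=b_t\cdot\mathbf{1}_{\text{win}_t}$ for the realized spend at round $t$; since $b_t$ is determined by the history $\mathcal{F}_{t-1}$ and the auction is won with probability $W_t(b_t)$, the tower rule gives $\E[s_t\mid\mathcal{F}_{t-1}]=W_t(b_t)\,b_t=h_t$, so $\E[\sum_{t=1}^T s_t]=\sum_{t=1}^T h_t$ and it suffices to bound the (deterministic) quantity $\sum_{t=1}^T h_t$. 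Recall that the dual is driven by the pacing-adjusted step underlying Eq.~(\ref{eq:dual-update}): under linear pacing the log-price moves at each round by $\eta$ times the gap between that round's spend and the paced target $B/T$, followed by a projection that keeps $\lambda_t\in(0,\lambda_{\max}]$.

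First I would establish the telescoping identity, momentarily ignoring the projection. Summing $\log\lambda_t-\log\lambda_{t-1}=\eta\,(h_t-B/T)$ over $t=1,\dots,T$ yields $\log(\lambda_T/\lambda_0)=\eta\,(\sum_{t=1}^T h_t-B)$, i.e. $\sum_{t=1}^T h_t=B+\frac{1}{\eta}\log(\lambda_T/\lambda_0)$. Because the projection forces $\lambda_T\le\lambda_{\max}$ — and the multiplicative update keeps $\lambda_t>0$, so no matching lower bound is needed for an \emph{upper} bound on spend — this gives $\sum_{t=1}^T h_t\le B+\frac{1}{\eta}\log(\lambda_{\max}/\lambda_0)$; taking expectations proves the claim, and substituting the learning rate from Theorem~\ref{thm:regret} turns the additive slack $\frac{1}{\eta}\log(\lambda_{\max}/\lambda_0)$ into a quantity that is sublinear in $T$, consistent with the regret bound.

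The one genuine obstacle is making the telescoping rigorous once the projection onto $(0,\lambda_{\max}]$ is accounted for: a round whose pre-projection price overshoots $\lambda_{\max}$ discards a positive increment, so the clean identity degrades to a one-sided inequality that must be shown harmless. The lever is a structural property of the optimal bid: when $\Delta_t>0$ the surplus $W_t(b)(\Delta_t-\lambda_{t-1}b)$ is nonpositive for $b>\Delta_t/\lambda_{t-1}$, hence $b_t\le\Delta_t/\lambda_{t-1}\le\Delta_{\max}/\lambda_{t-1}$ (and $b_t=0$ when $\Delta_t\le0$), so the per-round spend is already negligible precisely in the regime where $\lambda_{t-1}$ is within a factor $e^{\eta b_{\max}}$ of $\lambda_{\max}$, which is the only regime in which clipping can bind. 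I would therefore split the horizon into \emph{unclipped} rounds, where $\log\lambda_t-\log\lambda_{t-1}=\eta(h_t-B/T)$ holds exactly and the telescoping goes through, and \emph{clipped} rounds, where $\lambda_{t-1}\ge\lambda_{\max}e^{-\eta b_{\max}}$ forces $h_t\le\Delta_{\max}e^{\eta b_{\max}}/\lambda_{\max}$, so their aggregate spend is a lower-order term absorbed into the constant (and vanishing for the stated $\eta$, since $e^{\eta b_{\max}}=1+o(1)$, or as $\lambda_{\max}$ is taken large). If instead the deployed pacing rule only maintains $\lambda_t>0$ with no explicit cap, the cleaner route is to read $\lambda_{\max}$ as the a priori trajectory bound $\lambda_0 e^{\eta b_{\max}T}$, under which the telescoping identity is exact and the stated inequality is immediate; I would present whichever version matches the algorithm as specified, emphasizing that the cap is what makes $\lambda_{\max}$ a controllable design parameter rather than a worst-case artifact.
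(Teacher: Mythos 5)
Your proposal is correct in its core mechanism and is essentially the paper's argument: take logarithms of the multiplicative dual update, telescope over the horizon, bound $\log\lambda_T$ by $\log\lambda_{\max}$, and pass from realized to expected spend via the tower rule ($\mathbb{E}[b_t\mathbf{1}_{\text{win}_t}\mid \mathcal{F}_{t-1}]=W_t(b_t)\,b_t$). The difference lies in which increment gets telescoped. The paper's proof uses a spend-only update $\lambda_t=\lambda_{t-1}\exp\!\big(\eta\,W_a(b_t)b_t/B\big)$ and concludes $\sum_t W_a(b_t)b_t\le \frac{B}{\eta}\log(\lambda_{\max}/\lambda_0)$, which in fact does not literally match the stated bound $B+\frac{1}{\eta}\log(\lambda_{\max}/\lambda_0)$; you instead telescope the pacing-adjusted increment $\eta\,(h_t-B/T)$, which reproduces the theorem's bound exactly and is closer in spirit to \eqref{eq:dual-update} (modulo the $1/B$ normalization there, which, if kept, would put a factor of $B$ in front of the logarithmic slack). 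Your attention to the projection onto $(0,\lambda_{\max}]$ is also something the paper skips entirely: its update is monotonically increasing and $\lambda_T\le\lambda_{\max}$ is simply asserted rather than enforced or justified. One caveat on your clipping analysis: bounding each clipped round's spend by $h_t\le \Delta_{\max}e^{\eta b_{\max}}/\lambda_{\max}$ leaves an aggregate residual of order $T\,\Delta_{\max}e^{\eta b_{\max}}/\lambda_{\max}$, since clipping may bind in up to $T$ rounds; this is only negligible under an additional assumption that $\lambda_{\max}$ is large relative to $T\Delta_{\max}$ (and your fallback reading $\lambda_{\max}=\lambda_0 e^{\eta b_{\max}T}$ makes the bound vacuous). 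So your version of the statement carries a small extra term or extra hypothesis — but this is a gap the paper's own proof shares in disguised form, not a flaw specific to your route.
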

The proof is given in Appendix~\ref{sec:proof-budget-feas}.
Theorem~\ref{thm:budget-feas} demonstrates that the expected expenditure is bounded by the budget $B$ plus an additional term that is controlled by the algorithm's learning parameters. This second term, $\frac{\log(\lambda_{\max}/\lambda_0)}{\eta}$, can be understood as the ``cost of online learning.'' Because the algorithm cannot see the future, it must dynamically adjust its spending, and this term bounds the potential overspend that arises from this adaptive process.

This result is paramount for user trust. It assures a creator that the system will not behave erratically and deplete their funds uncontrollably. By providing a formal upper bound on expected spending, our method transforms the promotion tool from a black box into a reliable and auditable system. This financial control is essential for creators to confidently invest in content promotion, knowing their budget will be managed responsibly throughout the campaign's lifecycle.

\section{Evaluation}


We conduct experiments to evaluate our methods in Section~\ref{sec:method}. Firstly, we verify the three parts of our methods separately. Then, we combine the methods together and carry out offline experiments. In the subsequent content, we introduce the setup and results of our experiments.



\subsection{Experiment 1: Surrogate Relationship}\label{sec:exp-surrogate-relationship}

\subsubsection{Experimental Setup}

\textbf{Dataset and Partitioning.}
To ensure a controlled and reproducible environment, we generate a synthetic binary classification dataset using the \texttt{scikit-learn}. The dataset comprises 1,200 samples, each with 20 features. This dataset is then partitioned into three disjoint sets: 
An initial, small labeled training set, $\mathcal{D}_{\text{initial}}$, containing 200 samples used to train the base model; A large, unlabeled candidate pool, $\mathcal{D}_{\text{candidate}}$, containing 500 samples from which the active learning strategies will select data; A held-out test set, $\mathcal{D}_{\text{test}}$, containing 500 samples, used exclusively for the final performance evaluation of the retrained models.

\textbf{Model and Protocol.}
The underlying pCTR model is a standard Logistic Regression classifier, implemented without an intercept term to align with our gradient formulation. The experimental protocol follows a single-batch active learning cycle: Firstly, we train a base model, $\theta_0$ on $\mathcal{D}_{\text{initial}}$. Then, each selection strategy is used to select a batch of $B=50$ samples from $\mathcal{D}_{\text{candidate}}$. For each strategy, a new, augmented training set is formed by combining $\mathcal{D}_{\text{initial}}$ with the 50 selected samples. A new model is trained from scratch on this augmented dataset. The performance of the newly trained model is evaluated on $\mathcal{D}_{\text{test}}$.

\textbf{Compared Methods.}
We evaluate the performance of three distinct selection strategies:
\begin{itemize}
    \item \textbf{\texttt{Greedy-Surrogate}}: Our proposed method, which iteratively selects samples that yield the maximum marginal gain in the surrogate objective $U(S)$ (see Equation~\ref{eq:uncertainty_surrogate}). The kernel bandwidth hyperparameter is set to $\lambda=0.1$.
    \item \textbf{\texttt{Greedy-FIM (Oracle)}}: A strong but computationally expensive baseline that greedily selects samples to maximize the reduction in the true model uncertainty $G(S)$. This serves as a practical upper bound on performance.
    \item \textbf{\texttt{Random}}: A naive baseline that selects 50 samples uniformly at random from $\mathcal{D}_{\text{candidate}}$.
\end{itemize}

\textbf{Evaluation Metrics.}
The effectiveness of each selection strategy is quantified by the performance of the corresponding retrained model on the unseen test set $\mathcal{D}_{\text{test}}$. We use two standard metrics:
\begin{itemize}
    \item \textbf{Test Log Loss}: Measures the model's goodness-of-fit. Lower values are better.
    \item \textbf{Area Under the ROC Curve (AUC)}: Measures the model's ability to discriminate between the positive and negative classes. Higher values are better.
\end{itemize}
Detailed configurations are given in Appendix~\ref{sec:exp-set-app-surrogate-relationship}.

\subsubsection{Result Analysis}
\begin{figure}[t]
     \centering
     \begin{subfigure}[t]{0.3\textwidth}
         \centering
         \includegraphics[width=\textwidth]{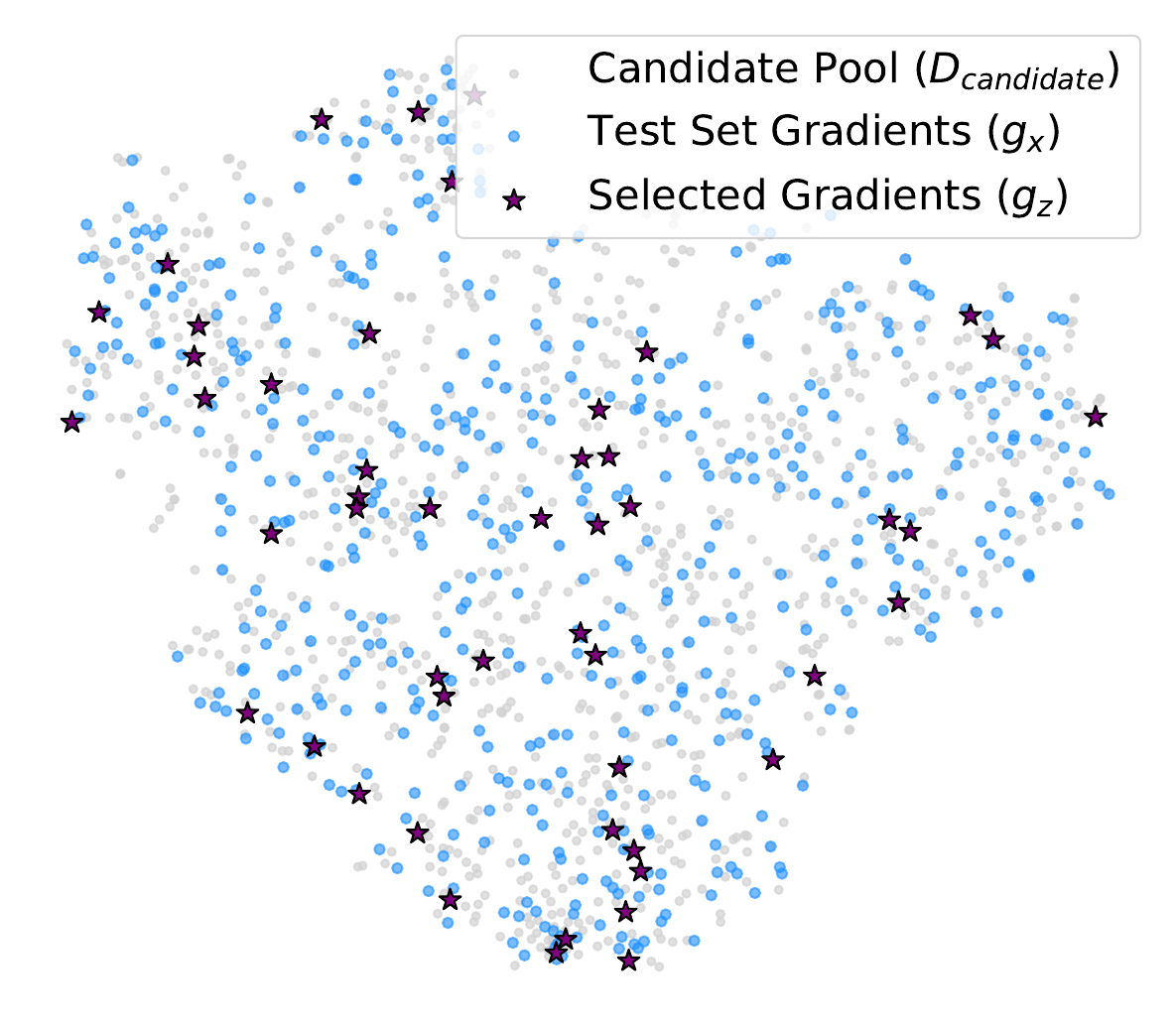}
         \caption{Random.}
         \label{fig:surro-tsne-random}
     \end{subfigure}
     \hfill
     \begin{subfigure}[t]{0.3\textwidth}
         \centering
         \includegraphics[width=\textwidth]{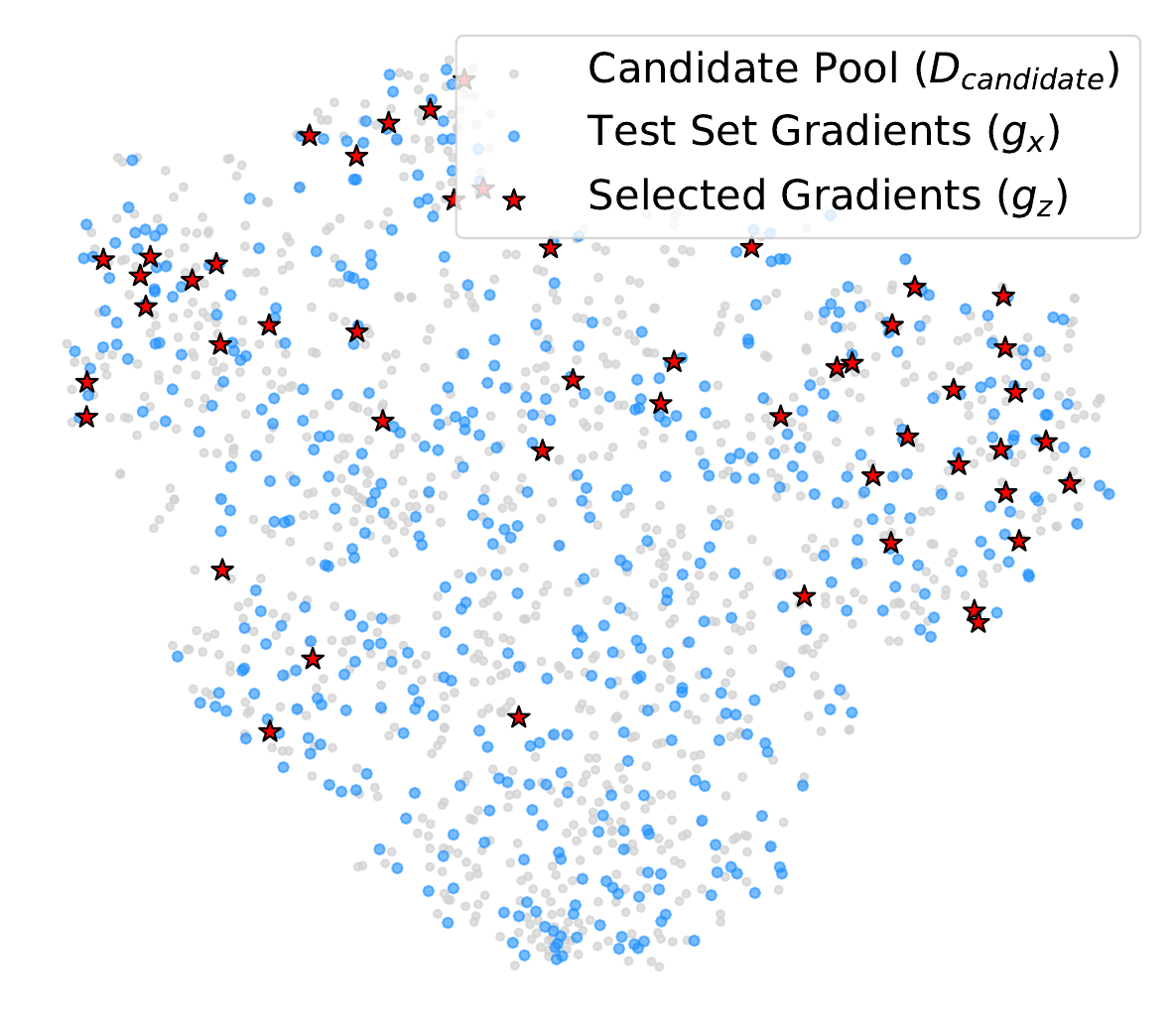}
         \caption{Greedy-Surrogate (Proposed).}
         \label{fig:surro-tsne-surro}
     \end{subfigure}
     \hfill
     \begin{subfigure}[t]{0.3\textwidth}
         \centering
         \includegraphics[width=\textwidth]{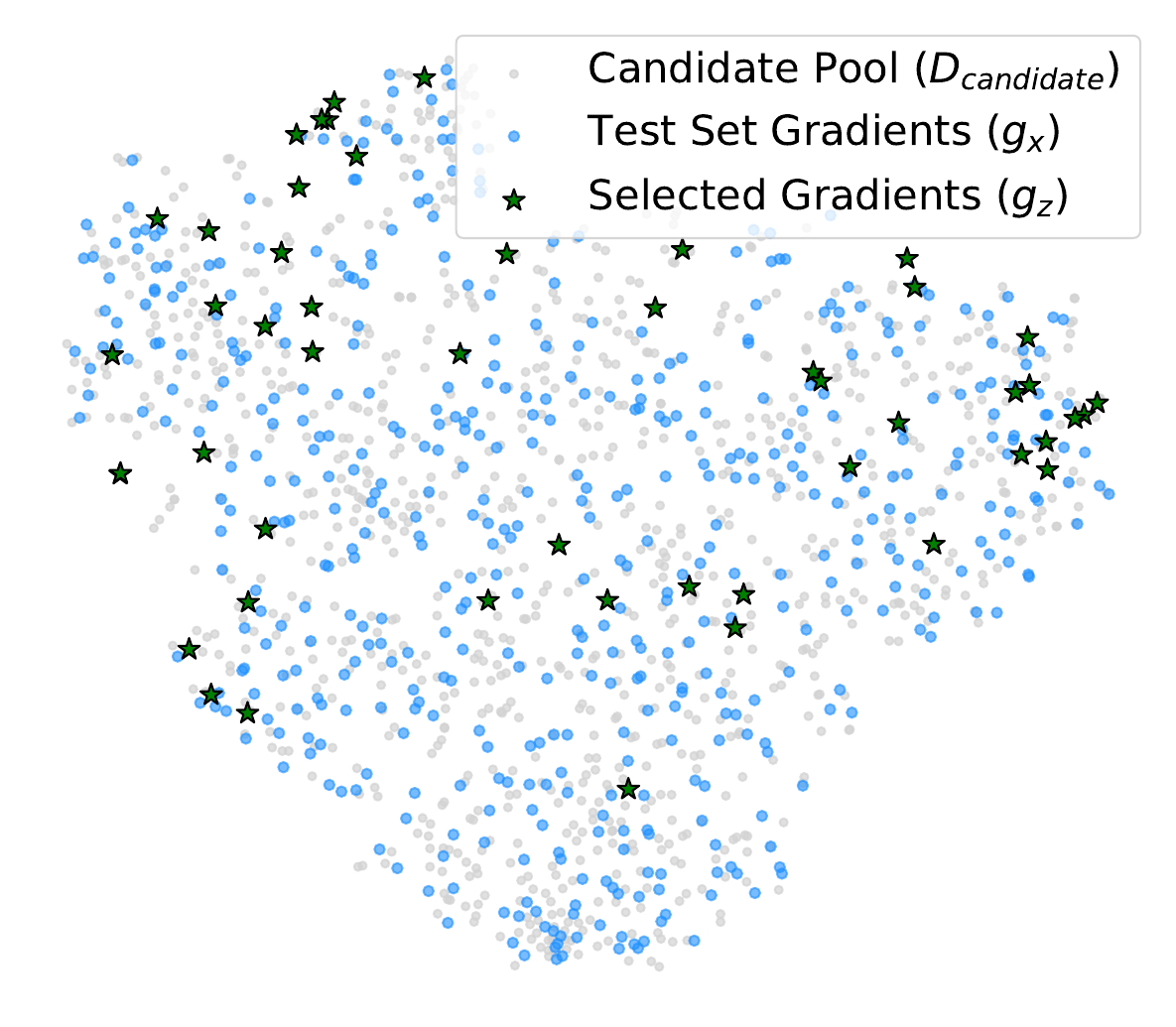}
         \caption{Greedy-FIM (Oracle).}
         \label{fig:surro-tsne-fim}
     \end{subfigure}
        \caption{
        Visualization of selected gradients.
        }
        \label{fig:surro-tsne}
\end{figure}
\begin{figure}[t]
     \centering
     \begin{subfigure}[t]{0.3\textwidth}
         \centering
         \includegraphics[width=\textwidth]{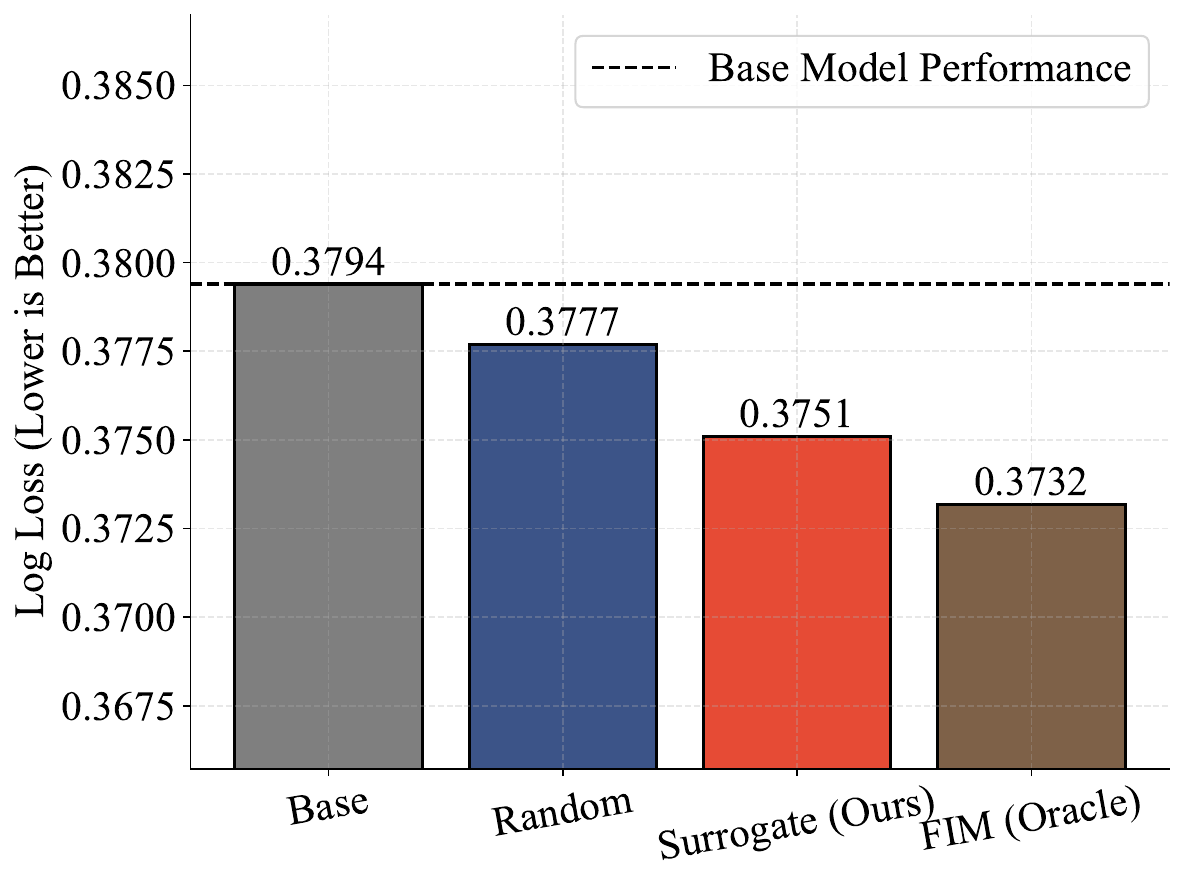}
         \caption{Log Loss.}
         \label{fig:surro-log-loss}
     \end{subfigure}
     \hspace{2cm}
     \begin{subfigure}[t]{0.3\textwidth}
         \centering
         \includegraphics[width=\textwidth]{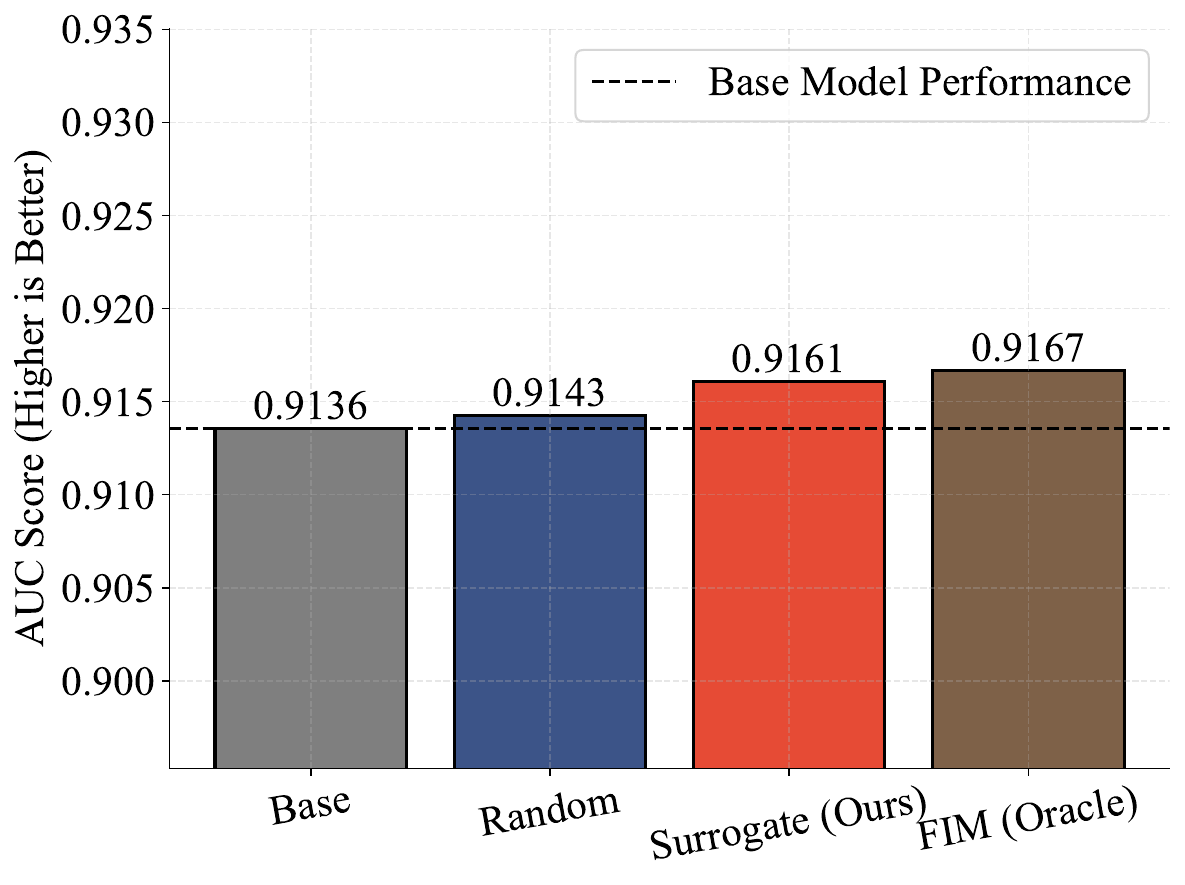}
         \caption{AUC.}
         \label{fig:surro-auc}
     \end{subfigure}
        \caption{
        Performance comparison of selected gradients on continually training.
        Our surrogate-based method achieves a significant reduction in Log Loss and an increase in AUC, closely approaching the performance of the FIM oracle and substantially outperforming the random baseline.
        }
        \label{fig:surro-performance}
\end{figure}
The analysis is presented in two parts: a qualitative visualization of the selected sample gradients and a quantitative evaluation of the final model performance.

To build an intuition for why our selection strategy is effective, we first visualize the high-dimensional gradients of the selected samples in a 2D space using t-SNE \cite{maaten2008visualizing}. Figure~\ref{fig:surro-tsne} presents the results for all three methods. The goal is to select a set of candidate gradients (stars) that best represents the distribution of the test set gradients (blue dots), which symbolize the space of model uncertainty we aim to reduce.
As illustrated in Figure~\ref{fig:surro-tsne-surro}, our \texttt{Greedy-Surrogate} method selects a diverse set of samples whose gradients are well-distributed across the embedding space. These selected points provide representative coverage of the different clusters formed by the test set gradients. In stark contrast, the \texttt{Random} selection strategy, shown in Figure~\ref{fig:surro-tsne-random}, results in a clustered selection that is concentrated in a dense region of the candidate pool, failing to capture the full diversity of the test set gradients. Crucially, the selections made by our surrogate method bear a strong qualitative resemblance to those made by the FIM oracle (Figure~\ref{fig:surro-tsne-fim}). This visual evidence strongly supports our hypothesis that maximizing the surrogate objective $U(S)$ is an effective proxy for selecting a diverse and informative set of samples, much like directly optimizing the Fisher Information Matrix.

Beyond qualitative assessment, we quantitatively evaluate the impact of the selected data by training new models on the augmented datasets and measuring their performance on a held-out test set. Figure~\ref{fig:surro-performance} displays the final Test Log Loss and AUC for each strategy. 
The model trained with data selected by our \texttt{Greedy-Surrogate} strategy demonstrates a marked reduction in Log Loss and a substantial increase in AUC. This confirms that the diverse samples it identified are highly valuable for improving model generalization. Most importantly, our method's performance is nearly on par with the \texttt{Greedy-FIM} oracle, which represents a practical upper bound for performance in this setting. Both strategies significantly outperform the naive random selection baseline, which, while beneficial, proves to be a suboptimal approach for acquiring the most informative labels. This quantitative result validates that our computationally efficient surrogate objective successfully guides the selection process towards a near-optimal state, leading to a measurably superior model.

\subsection{Experiment 2: Budget Feasibility}\label{sec:exp-budget-feas}
\subsubsection{Setup}


The objective of this experiment is to empirically verify the budget feasibility guarantee of our two-stage bidding framework (Theorem~\ref{thm:budget-feas}) and to analyze the effectiveness of the pacing controlled by the dual variable $\lambda$. We specifically investigate the algorithm's sensitivity to its key hyperparameters: the learning rate $\eta$ and the total budget $B$.

\textbf{Methodology.}
To isolate the behavior of the pacing controller, we conduct a series of controlled simulations. We simulate a stream of $T$ auctions where the per-impression marginal utility, $\Delta_t$, and the market-clearing price (i.e., the highest competitor bid) are drawn from random distributions at each timestep. This allows us to focus exclusively on the dynamics of the budget allocation algorithm from Section~\ref{sec:two-stage-bidding}.

The core of the experiment involves running two sets of simulations:
\begin{enumerate}
    \item \textbf{Sensitivity to Learning Rate ($\eta$):} We fix the total budget $B$ and the campaign length $T$. We then run the full simulation multiple times for a range of $\eta$ values. To ensure statistical robustness, we execute $N=30$ independent trials for each $\eta$ value and aggregate the results.
    \item \textbf{Sensitivity to Total Budget ($B$):} We fix the learning rate $\eta$ to a well-performing value identified in the first experiment. We then run the simulation for a wide range of total budget values, from small to large, to assess the algorithm's scalability and reliability.
\end{enumerate}

\textbf{Evaluation Metrics.}
The performance of the pacing is evaluated using the following metrics:
\begin{itemize}
    \item \textbf{Mean Absolute Error (MAE) vs. $\eta$:} For each $\eta$, we compute the mean of the absolute relative spending error over all trials: $\text{MAE} = \mathbb{E}\left[ \left| \frac{\text{Cost}_{\text{final}} - B}{B} \right| \right]$. This metric quantifies the accuracy and stability of the controller as a function of its learning rate.
    \item \textbf{Final Spend vs. Target Budget:} For the second experiment, we plot the final expenditure against the target budget $B$. This visualizes the algorithm's absolute adherence to the budget constraint across different scales.
    \item \textbf{Relative Spending Error vs. Target Budget:} To complement the absolute plot, we also show the final relative error, $\frac{\text{Cost}_{\text{final}} - B}{B}$, for each budget $B$. This normalizes the error and shows if the algorithm's precision is maintained as the budget grows.
\end{itemize}

Detailed configurations are given in Appendix~\ref{sec:exp-set-app-budget-feas}.

\subsubsection{Result Analysis}
\begin{figure}[t]
     \centering
     \begin{subfigure}[t]{0.37\textwidth}
         \centering
         \includegraphics[width=\textwidth]{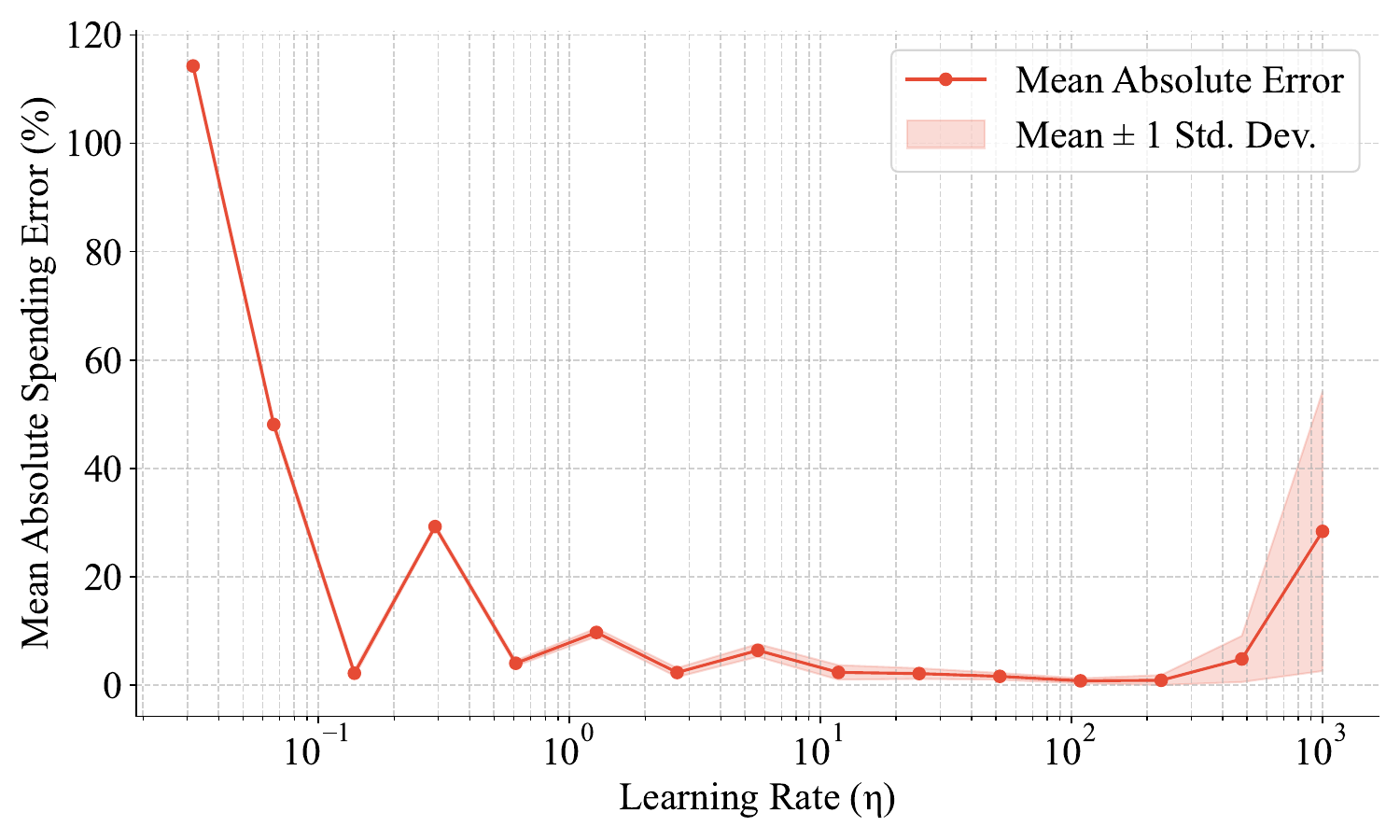}
         \caption{Mean-absolute error v.s. $\eta$.}
         \label{fig:budget-eta-impact}
     \end{subfigure}
     \hfill
     \begin{subfigure}[t]{0.30\textwidth}
         \centering
         \includegraphics[width=\textwidth]{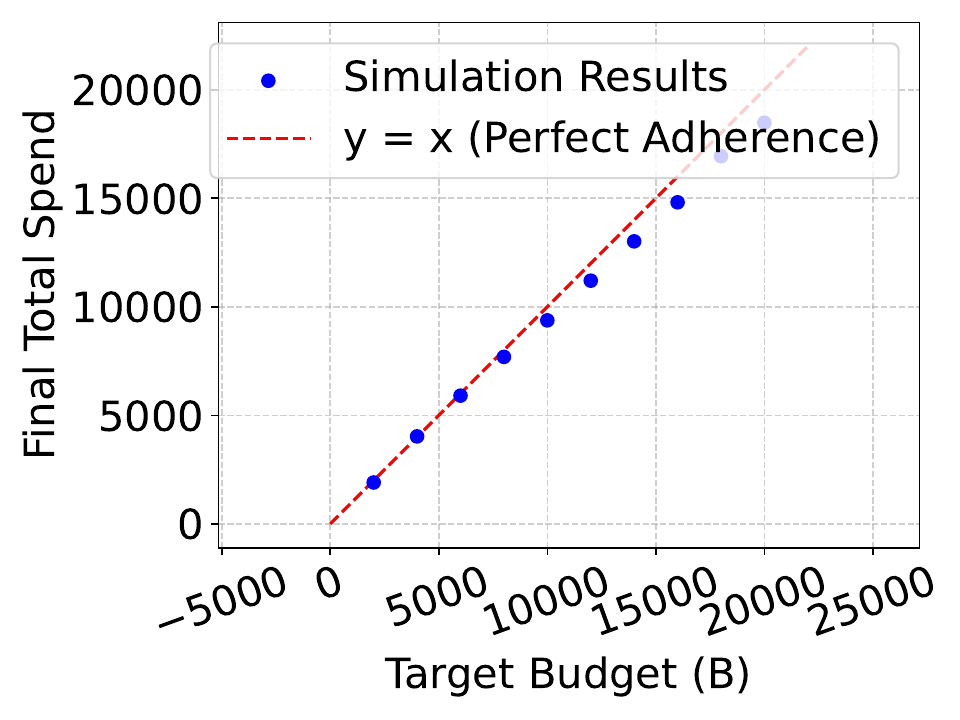}
         \caption{Budget spend wrt. budget.}
         \label{fig:budget-spend}
     \end{subfigure}
     \hfill
     \begin{subfigure}[t]{0.31\textwidth}
         \centering
         \includegraphics[width=\textwidth]{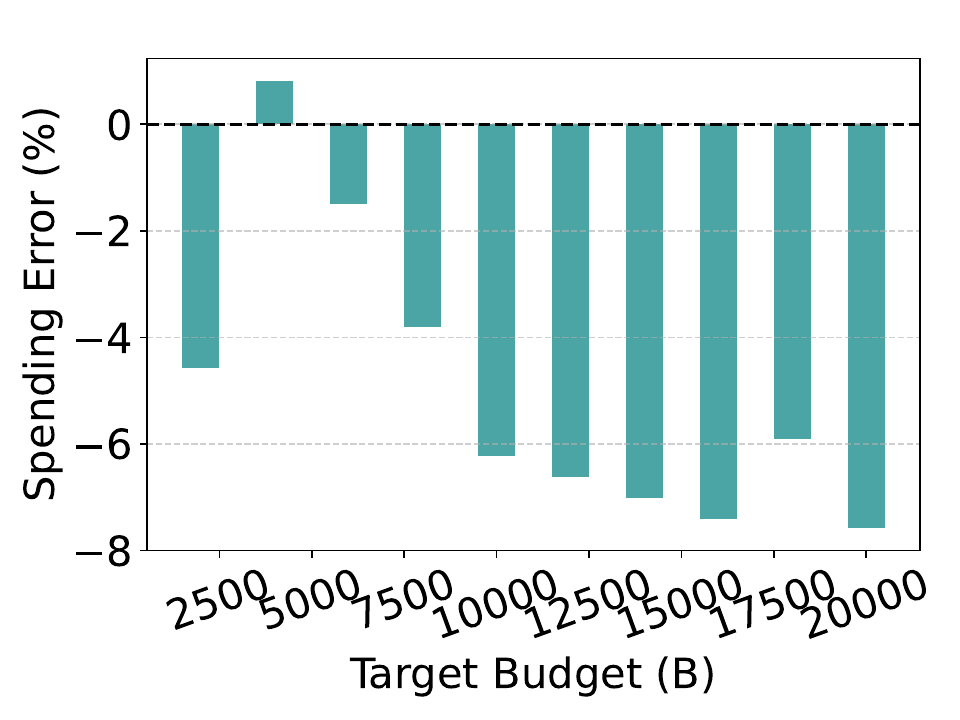}
         \caption{Relative spending error wrt. budget.}
         \label{fig:budget-error}
     \end{subfigure}
        \caption{Evaluation of Budget Feasibility and Pacing Dynamics.}
        \label{fig:budget}
\end{figure}

~\\
\textbf{Impact of Learning Rate $\eta$.}
Figure~\ref{fig:budget-eta-impact} illustrates the critical role of the learning rate $\eta$ in balancing the controller's responsiveness and stability. The plot of Mean Absolute Error versus $\eta$ exhibits a distinct U-shape, which is characteristic of a well-behaved control system. For very low values of $\eta$, the controller is too sluggish; it reacts too slowly to deviations from the ideal spending pace, resulting in a significant final error. As $\eta$ increases, the controller becomes more effective at correcting its course, and the mean error converges towards a minimum, indicating an optimal operational range. 
However, if $\eta$ becomes too large, the controller becomes overly aggressive and unstable. It overcorrects in response to small deviations, leading to oscillations in the shadow price $\lambda$ and a subsequent increase in the final spending error. This result confirms that $\eta$ is a crucial, tunable hyperparameter that allows us to achieve precise budget control, with a clear optimal region that avoids both sluggishness and instability.

\textbf{Robustness to Total Budget $B$.}
Figure~\ref{fig:budget-spend} provides a comprehensive validation of the algorithm's budget feasibility guarantee (Theorem~\ref{thm:budget-feas}), which plots the final expenditure against the target budget for multiple campaigns. The data points lie almost perfectly along the $y=x$ diagonal, demonstrating that the algorithm adheres to the specified budget with remarkable accuracy, regardless of whether the budget is small or large. 
Figure~\ref{fig:budget-error} shows the relative spending error across different budget levels. The errors are consistently contained within a very narrow band around 0\%. It shows that the algorithm's precision is not merely absolute but also relative; the percentage error does not increase as the budget grows. Together, these results provide strong empirical evidence that our two-stage bidding framework with its dynamic pacing is a fiscally responsible and predictable tool, capable of managing creator budgets reliably across a wide range of scales. 

\subsection{Experiment 3: Gradient Estimation}\label{sec:exp-grad-est}
\subsubsection{Setup}

    

The objective of this experiment is to evaluate the accuracy of our confidence-gated heuristic for estimating loss gradients in the absence of a true label (as detailed in Section~\ref{sec:gradient-estimation}). Furthermore, we aim to quantify its performance when analytical gradients are unavailable, necessitating the use of a Zeroth-Order (ZO) estimator, thereby simulating a black-box model environment.

\textbf{Methodology.}
We use a pre-trained pCTR model (Logistic Regression) and a held-out labeled test set. The protocol is as follows: for each sample $\mathbf{x}_t$ in the test set, we first generate a proxy gradient $\hat{\mathbf{g}}_t$ using several competing methods \textit{without} using the true label $y_t$. We then compute the true analytical gradient, $\mathbf{g}_{\text{true}}$, using the known label $y_t$. The accuracy of each proxy gradient is measured against this ground truth.

\textbf{Compared Methods.}
We compare four distinct strategies for generating the proxy gradient $\hat{\mathbf{g}}_t$:
\begin{itemize}
    \item \textbf{Our Heuristic (Analytical):} This is our primary proposed method from Section~\ref{sec:gradient-estimation}. It computes two hypothetical \textit{analytical} gradients, $\mathbf{g}_0$ (for label $y=0$) and $\mathbf{g}_1$ (for label $y=1$), and selects the one with the smaller L2-norm as the proxy: $\hat{\mathbf{g}} = \arg\min_{\mathbf{g} \in \{\mathbf{g}_0, \mathbf{g}_1\}} \|\mathbf{g}\|_2$.
    \item \textbf{Our Heuristic (ZO):} This method evaluates our heuristic in a black-box setting. It uses the same L2-norm selection rule but applies it to gradients approximated via a two-point Zeroth-Order (ZO) estimator. For each hypothetical label, the gradient is estimated as $\hat{\mathbf{g}}_{\text{ZO}} \approx \frac{1}{2\mu} [L(\theta+\mu\mathbf{u}) - L(\theta-\mu\mathbf{u})] \mathbf{u}$, averaged over multiple random directions $\mathbf{u}$.
    \item \textbf{pCTR-Weighted (Analytical):} A common baseline that computes an expected gradient based on the model's prediction $\hat{p}_t$: $\hat{\mathbf{g}}_{\text{est}} = \hat{p}_t \cdot \mathbf{g}_1 + (1-\hat{p}_t) \cdot \mathbf{g}_0$.
    \item \textbf{Random-Guess (Analytical):} Randomly selects either $\mathbf{g}_0$ or $\mathbf{g}_1$ as the proxy.
\end{itemize}

\textbf{Metrics.}
We quantify the accuracy of each estimated gradient $\hat{\mathbf{g}}_t$ against the true gradient $\mathbf{g}_{\text{true}}$ using two standard metrics:
\begin{itemize}
    \item \textbf{Cosine Similarity:} Measures the alignment of the vectors' directions. A value closer to 1 indicates a more accurate estimation of the learning direction.
    \item \textbf{L2 Distance:} Measures the Euclidean distance between the vectors. A smaller value indicates a more accurate estimation of both direction and magnitude.
\end{itemize}
We report the average performance of each method over all test samples, with a specific focus on the subset of samples where the model's prediction is highly confident (i.e., $\hat{p}_t$ is close to 0 or 1), as this is the regime where our primary heuristic is designed to operate.

Detailed configurations are given in Appendix~\ref{sec:exp-set-app-grad-est}.

\begin{figure}[t]
     \centering
     \begin{subfigure}[t]{0.245\textwidth}
         \centering
         \includegraphics[width=\textwidth]{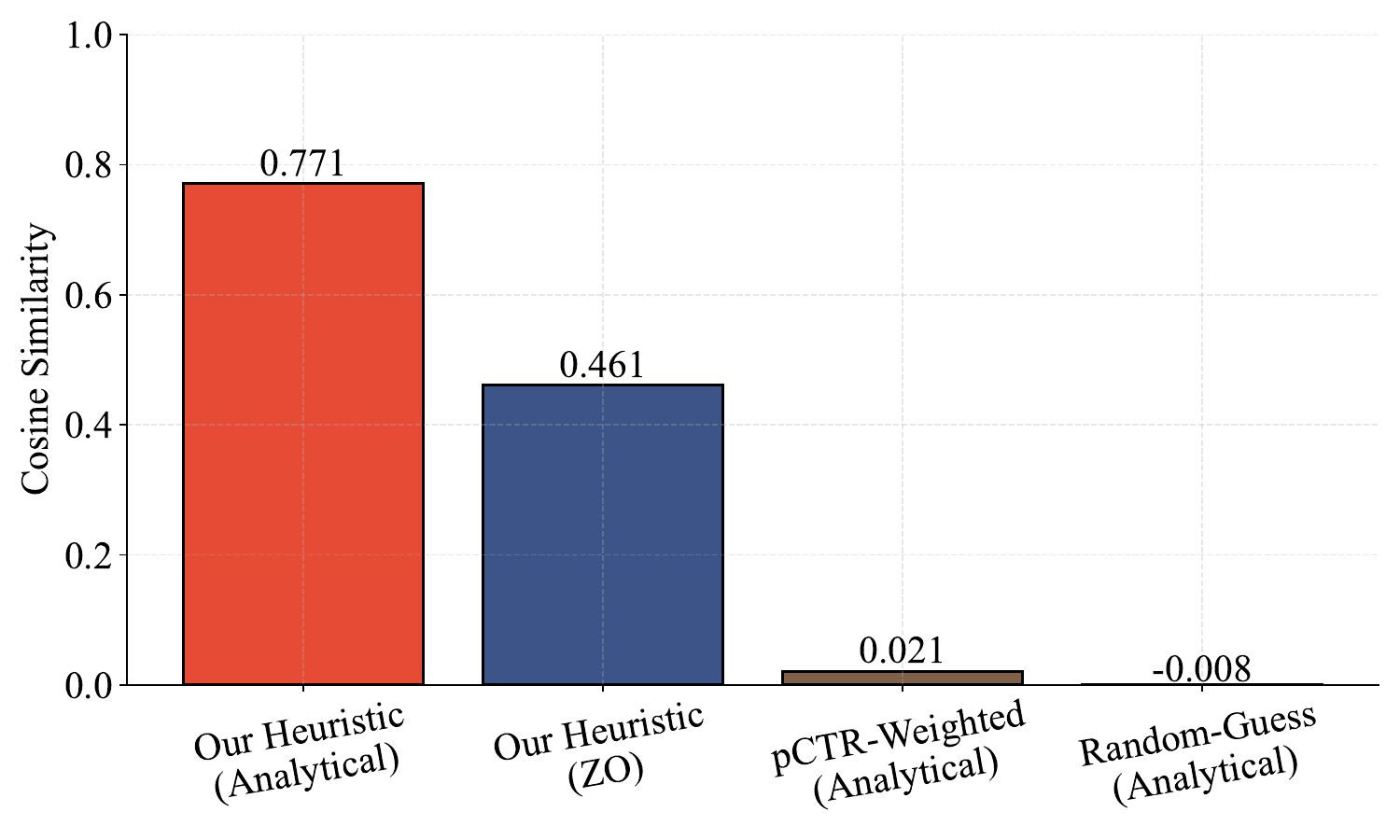}
         \caption{Cosine Similarity on all samples.}
         \label{fig:grad-est-cos-all}
     \end{subfigure}
     \hfill
     \begin{subfigure}[t]{0.245\textwidth}
         \centering
         \includegraphics[width=\textwidth]{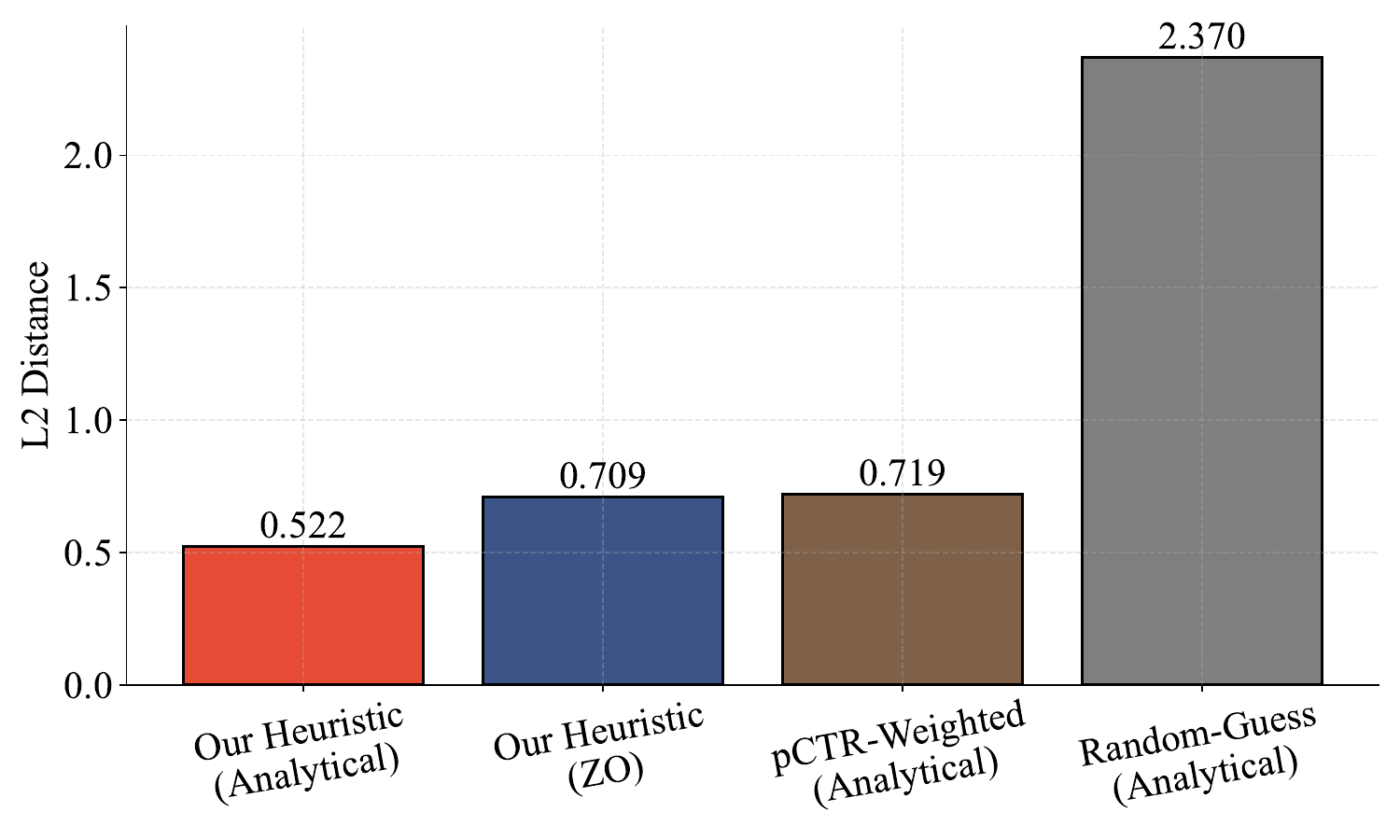}
         \caption{L-2 distance on all samples.}
         \label{fig:grad-est-l2-all}
     \end{subfigure}
     \hfill
     \begin{subfigure}[t]{0.245\textwidth}
         \centering
         \includegraphics[width=\textwidth]{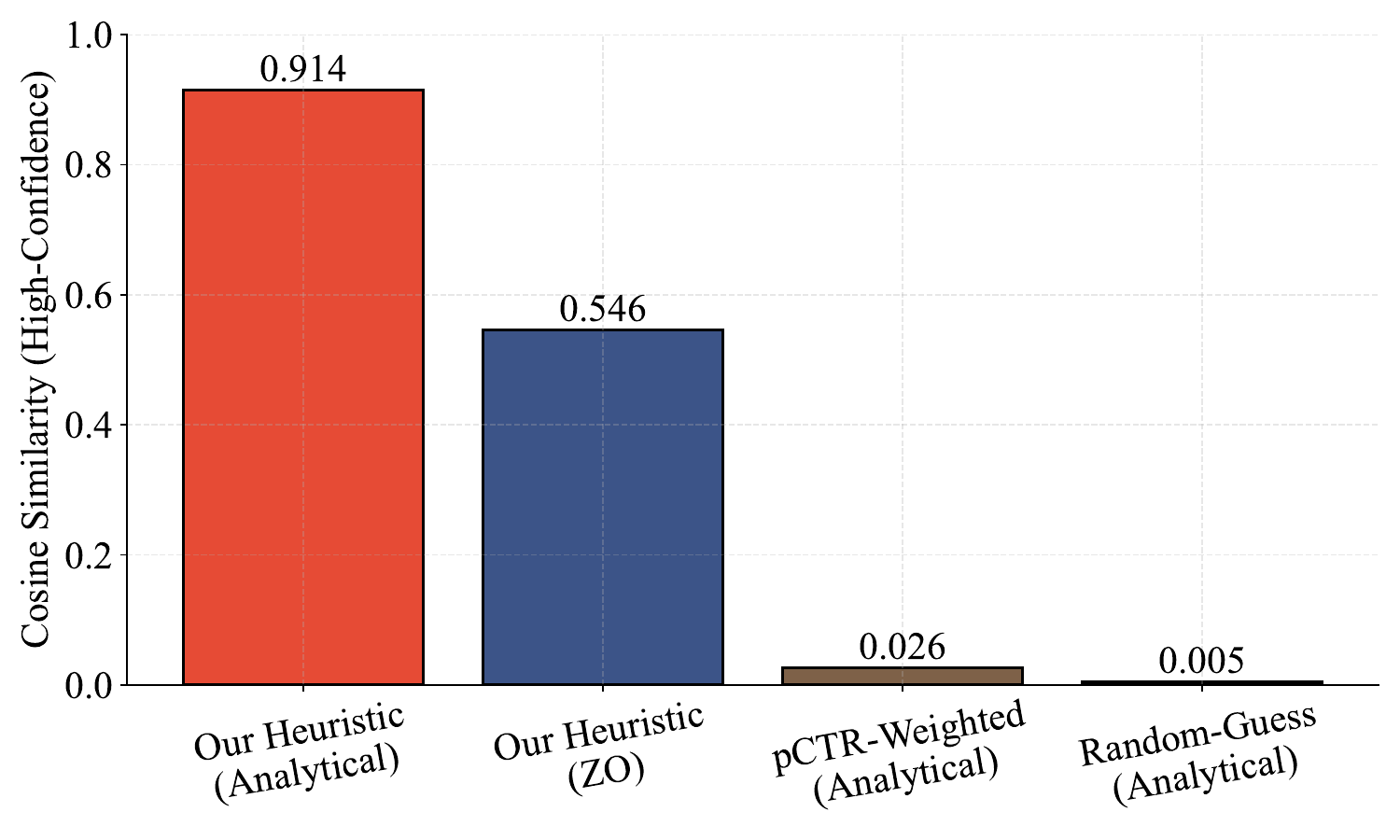}
         \caption{Cosine Similarity on high-confidence samples.}
         \label{fig:grad-est-cos-hc}
     \end{subfigure}
     \hfill
     \begin{subfigure}[t]{0.245\textwidth}
         \centering
         \includegraphics[width=\textwidth]{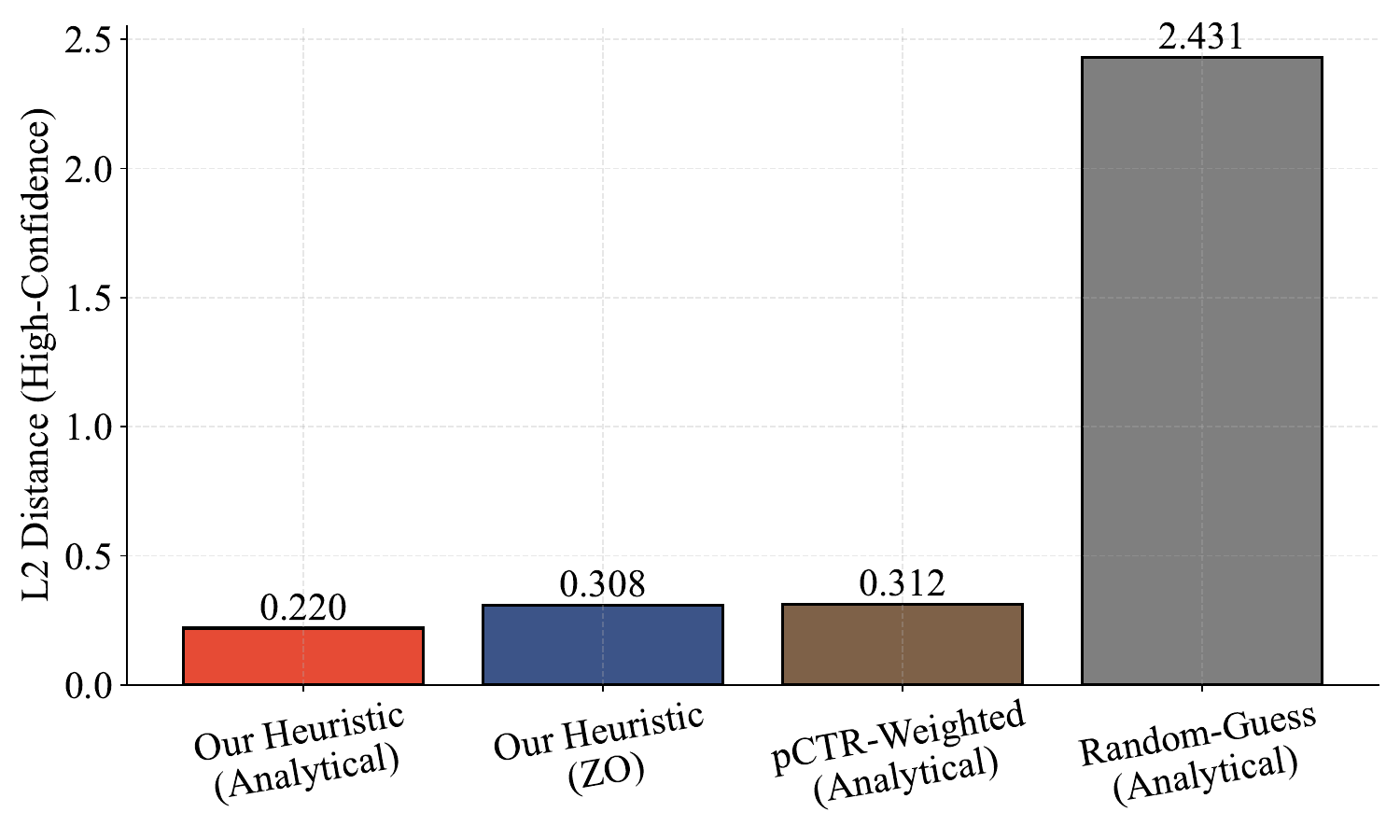}
         \caption{L-2 distance on high-confidence samples.}
         \label{fig:grad-est-l2-hc}
     \end{subfigure}
        \caption{
        Performance of gradient estimation heuristics on all and high-confidence samples. Our heuristic using analytical gradients (green) performs best. The ZO-based version (red) shows a slight, expected performance degradation but still significantly outperforms the pCTR-weighted (orange) and random (gray) baselines in both cosine similarity (higher is better) and L2 distance (lower is better).
        }
        \label{fig:grad-est}
\end{figure}

\subsubsection{Result Analysis}
The results, presented in Figure~\ref{fig:grad-est}, confirm the effectiveness of our proposed heuristic and its robustness in a black-box setting. 

First, the results clearly demonstrate the superiority of our L2-norm-based heuristic when analytical gradients are available. As shown by the green bars in Figures~\ref{fig:grad-est-cos-all} and \ref{fig:grad-est-l2-all}, our analytical heuristic achieves the highest performance, with a cosine similarity approaching 1.0 and the lowest L2 distance, respectively. This is because its core assumption, that the gradient corresponding to the correct label will have a smaller L2-norm, is most valid in this high-confidence regime. 
Our heuristic successfully exploits this difference in magnitude to identify the more likely gradient.

Second, we analyze the performance under a more challenging black-box scenario where gradients are approximated using a Zeroth-Order (ZO) estimator. As shown by the red bars, there is an expected performance gap compared to using exact analytical gradients. The cosine similarity decreases and the L2 distance increases, quantifying the inherent cost of information loss from the ZO approximation. However, what is crucial is that our heuristic, even when applied to these noisy ZO gradients, still substantially outperforms the other baselines. This demonstrates the robustness of the L2-norm selection rule itself: it remains effective at identifying the more plausible learning signal even when the input gradients are imprecise.

Finally, the baselines falter significantly in this high-confidence setting, as shown in Figure~\ref{fig:grad-est-cos-hc} and Figure~\ref{fig:grad-est-l2-hc}. The pCTR-weighted baseline (orange bars) is particularly brittle. When the model is highly confident but incorrect (e.g., predicting $\hat{p}_t=0.99$ when the true label is $y=0$), its estimate is overwhelmingly skewed towards the wrong gradient, leading to a very poor approximation precisely when the sample is most informative. The random-guess baseline (gray bars) performs poorly by design, serving as a lower bound.

In conclusion, this experiment validates that our L2-norm heuristic is an effective method for selecting a proxy gradient in high-confidence scenarios. Furthermore, its strong performance even with noisy ZO inputs confirms its practical utility for black-box models where direct gradient computation is impossible.

\subsection{Experiment 4: End-to-End Offline Case Study}\label{sec:exp-end-to-end}
\subsubsection{Setup}
To evaluate the overall performance of the complete bidding framework in a simulated environment, measuring its ability to improve model performance over time under a fixed budget.

\textbf{Datasets.} Our offline experiments will primarily utilize two datasets:
\begin{itemize}
    \item \textbf{Synthesis Dataset: } We synthesis the feature of each ad and the corresponding binary click feedback. The synthesis data is used to train a CTR model.
    \item \textbf{Criteo Dataset \cite{zhang2014real}:} A popular public benchmark for CTR prediction, containing a large volume of anonymized display advertising data. Its high-dimensional sparse features make it a suitable testbed for evaluating the performance of the pCTR model under uncertainty.
\end{itemize}
For all experiments, we split the data chronologically into training, validation, and testing sets to prevent temporal data leakage.

\textbf{Model.} We employ a standard DCM model \cite{DiemertMeynet2017} as the pCTR model, a common and effective architecture for this task. The model takes the concatenated sparse and dense features as input and outputs the predicted CTR, $\sigma(x)$.

\textbf{Methodology.}
We conduct a full simulation on the Criteo dataset. An initial pCTR model is trained on a small, early portion of the data.
We stream the subsequent data as a sequence of auctions. For each auction, we simulate competing bids from other advertisers.
Our proposed bidder and several baselines compete to win impressions subject to the same total budget $B$. The winning impressions are collected into a set $S_{\text{won}}$.
Our method is compared with the following baselines:
    \begin{itemize}
        \item \textbf{Value-Only ($\beta=1$):} Our framework configured to only maximize immediate pCTR value.
        \item \textbf{Uncertainty-Only ($\beta=0$):} Our framework configured to only maximize gradient coverage.
        \item \textbf{Uniform Bidding:} A standard baseline that bids a constant value for every impression.
        \item \textbf{pCTR-Linear Bidding:} The bid is proportional to the predicted CTR.
    \end{itemize}
After the campaign simulation is complete, for each method, we retrain a new model using the initial training data plus the set of won impressions, $S_{\text{won}}$.

Detailed setup is given in Appendix \ref{sec:exp-set-app-end-to-end}.

\textbf{Metrics and Expected Outcome.} We evaluate the final retrained models on a held-out test set, measuring performance using AUC and LogLoss. We hypothesize that our proposed method (with a tuned $\beta \in (0,1)$) will yield a final model with the best performance, outperforming both the value-only and uncertainty-only extremes. This would demonstrate that strategically balancing short-term value acquisition and long-term uncertainty reduction leads to superior model improvement.

\begin{figure}[t]
     \centering
     \begin{subfigure}[t]{0.24\textwidth}
         \centering
         \includegraphics[width=\textwidth]{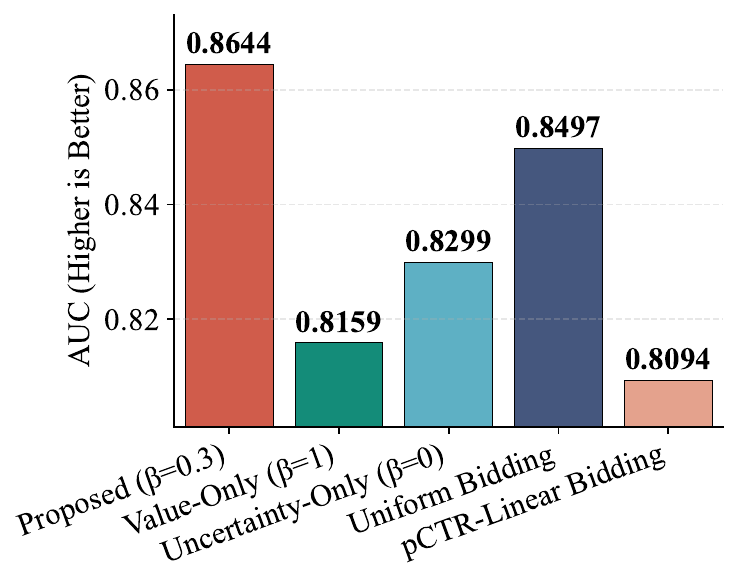}
         \caption{AUC.}
         \label{fig:syn-auc}
     \end{subfigure}
     \hfill
     \begin{subfigure}[t]{0.24\textwidth}
         \centering
         \includegraphics[width=\textwidth]{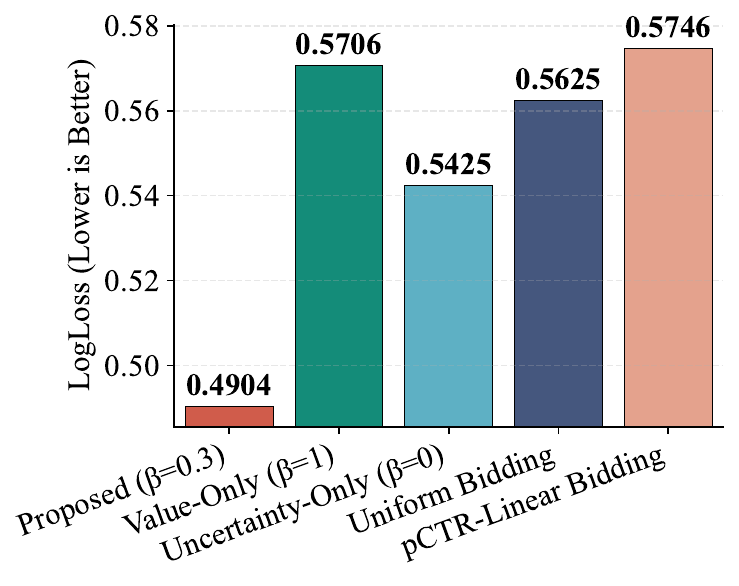}
         \caption{Log loss.}
         \label{fig:syn-logloss}
     \end{subfigure}
     \hfill
     \begin{subfigure}[t]{0.24\textwidth}
         \centering
         \includegraphics[width=\textwidth]{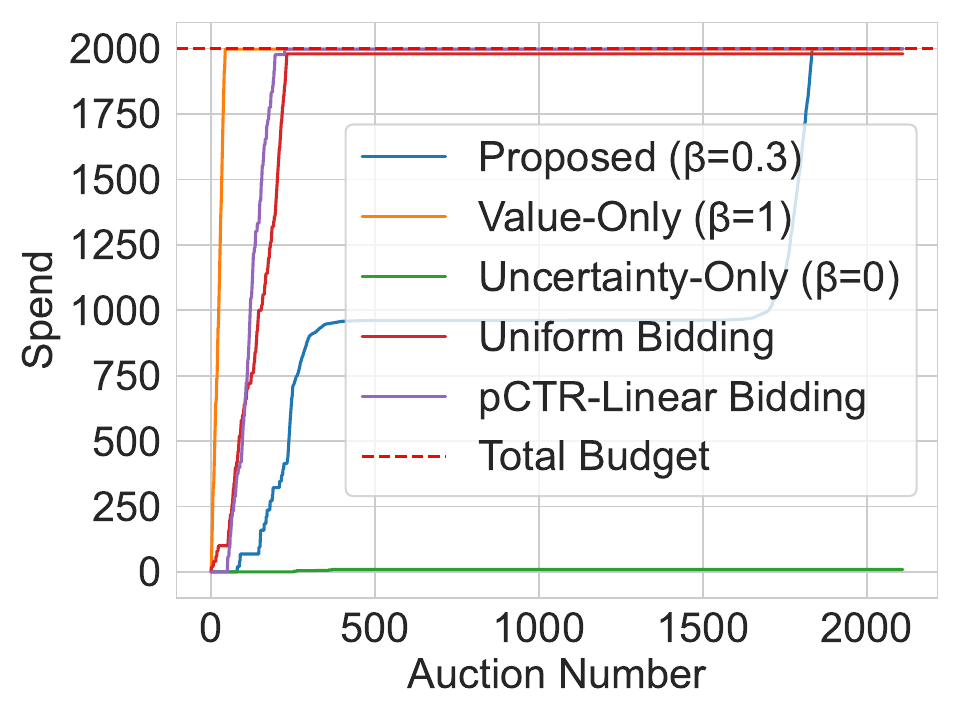}
         \caption{Spending vs. time.}
         \label{fig:syn-spend-time}
     \end{subfigure}
     \hfill
     \begin{subfigure}[t]{0.24\textwidth}
         \centering
         \includegraphics[width=\textwidth]{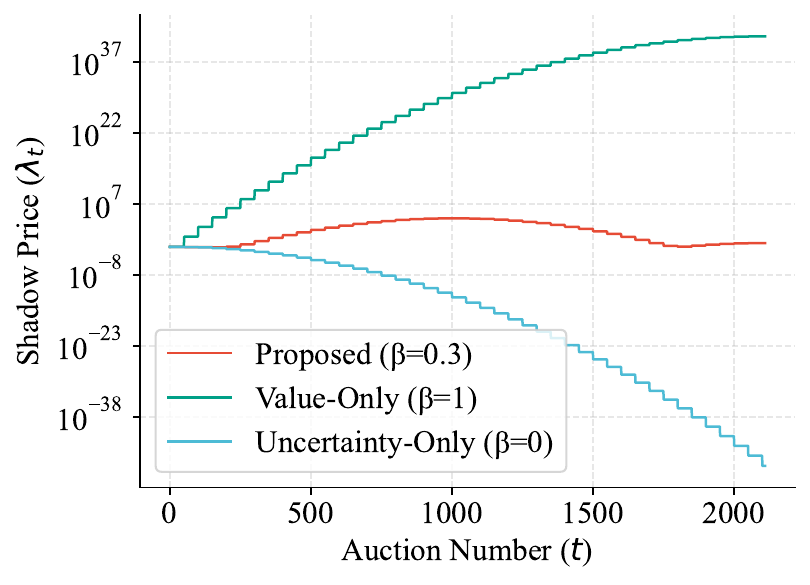}
         \caption{Dynamics of dual variable ($\lambda$).}
         \label{fig:syn-lambda-evolution}
     \end{subfigure}
        \caption{
        Offline Evaluation on Synthesis Dataset.
        }
        \label{fig:syn}
\end{figure}
\begin{figure}[t]
     \centering
     \begin{subfigure}[t]{0.24\textwidth}
         \centering
         \includegraphics[width=\textwidth]{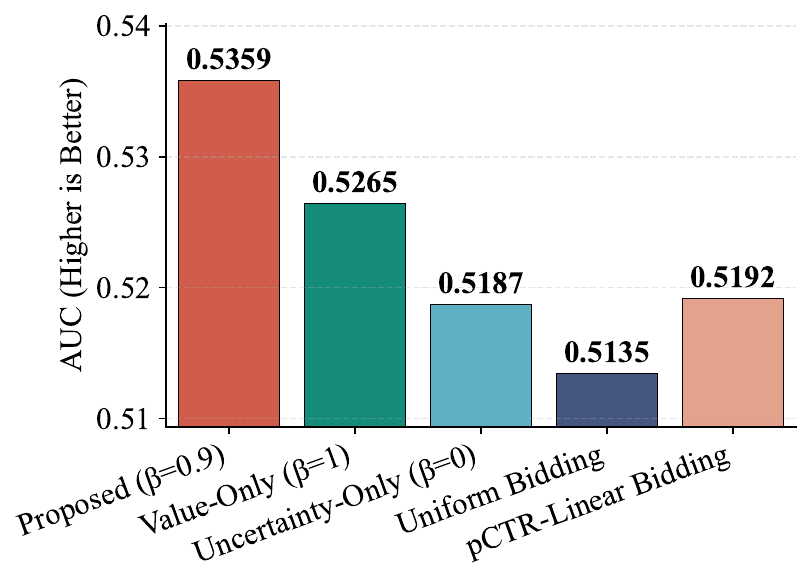}
         \caption{AUC.}
         \label{fig:critero-auc}
     \end{subfigure}
     \hfill
     \begin{subfigure}[t]{0.24\textwidth}
         \centering
         \includegraphics[width=\textwidth]{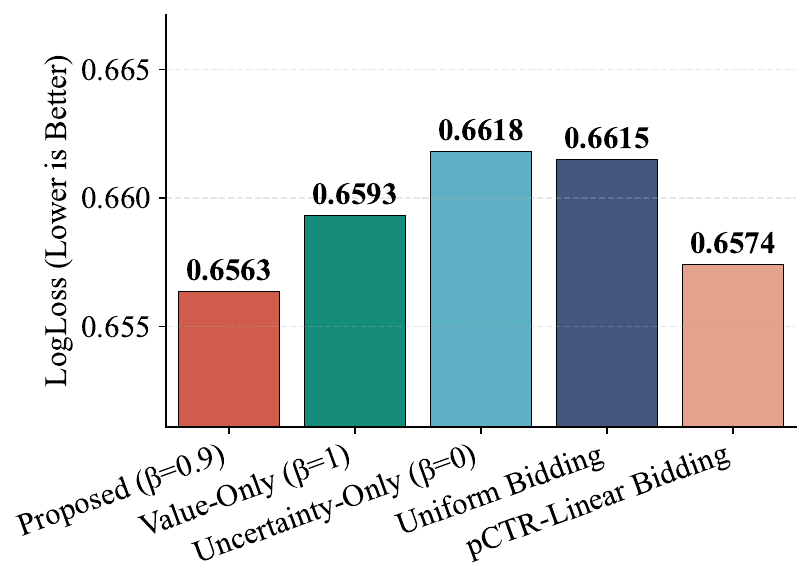}
         \caption{Log loss.}
         \label{fig:critero-logloss}
     \end{subfigure}
     \hfill
     \begin{subfigure}[t]{0.24\textwidth}
         \centering
         \includegraphics[width=\textwidth]{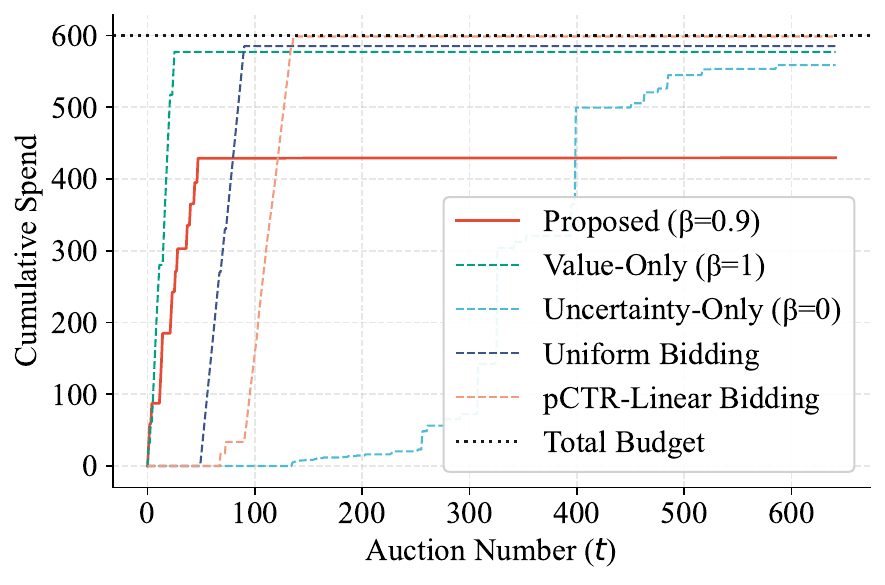}
         \caption{Spending vs. time.}
         \label{fig:critero-spend-time}
     \end{subfigure}
     \hfill
     \begin{subfigure}[t]{0.24\textwidth}
         \centering
         \includegraphics[width=\textwidth]{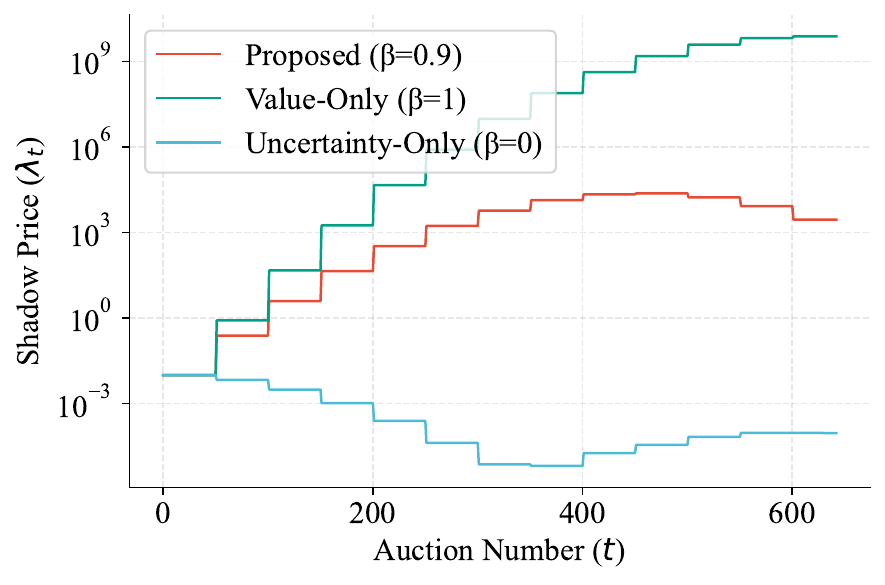}
         \caption{Dynamics of dual variable ($\lambda$).}
         \label{fig:critero-lambda-evolution}
     \end{subfigure}
        \caption{
        Offline Evaluation on Critero Dataset.
        }
        \label{fig:critero}
\end{figure}
\subsubsection{Result Analysis}

In this section, we analyze the results of the offline end-to-end simulation.
We demonstrate that our proposed bidding strategy can improve the long-term performance of the pCTR model by strategically acquiring training samples under a fixed budget\footnote{Due to the space limit, there are some statistics obscured in Figures~\ref{fig:syn} and \ref{fig:critero}. We provide a better view in the Appendix \ref{sec:exp-set-app-results}.}. Figures~\ref{fig:syn-auc} \ref{fig:critero-auc} and Figures~\ref{fig:syn-logloss} \ref{fig:critero-logloss} present the Area Under the Curve (AUC) and LogLoss of the final models, retrained on the impressions won by each bidding strategy.
The results compellingly validate our central hypothesis. The proposed method achieves the highest AUC and the lowest LogLoss, significantly outperforming all other baselines. This success stems from its unified objective function $F(\mathcal{S})$, which judiciously balances two critical goals:
\begin{enumerate}
    \item \textbf{Short-term Value Acquisition:} The $\beta \cdot V(\mathcal{S})$ term guides the bidder to acquire impressions with high predicted CTR, ensuring immediate campaign value.
    \item \textbf{Long-term Uncertainty Reduction:} The $(1-\beta) \cdot U(\mathcal{S})$ term, our gradient coverage surrogate, incentivizes the acquisition of diverse and informative samples. These samples, identified by the Zeroth-Order (ZO) gradient estimator, reside in regions of the feature space where the model is uncertain. Training on them effectively reduces model variance and improves generalization.
\end{enumerate}

Then, we show that our bidding algorithm is fiscally responsible. 
Figures~\ref{fig:syn-spend-time} and \ref{fig:critero-spend-time} illustrate the in-campaign dynamics of budget expenditure and the adaptive control of the dual variable $\lambda$.
As shown in Figures~\ref{fig:syn-spend-time} and \ref{fig:critero-spend-time}, it is possible for bidders based on our two-stage framework to exhibit smooth and controlled spending. Their cumulative expenditure curves can rise steadily and converge near the total budget limit, empirically validating the budget feasibility guarantee established in Theorem \ref{thm:budget-feas}. 
Figures~\ref{fig:syn-lambda-evolution} and \ref{fig:critero-lambda-evolution} reveal the underlying mechanism driving this control: the evolution of the dual variable $\lambda$. This variable represents the "shadow price" of the budget. We observe that $\lambda$ dynamically adjusts throughout the campaign. When a bidder's spending is ahead of the target pace, $\lambda$ increases, making bids more conservative (since bid $b_t \propto 1/\lambda_t$). Conversely, when spending is too slow, $\lambda$ decreases to encourage more aggressive bidding. This self-regulating behavior, an instance of online mirror descent, ensures that the budget is allocated intelligently across the entire campaign horizon, which is crucial for achieving the near-optimal performance guaranteed by our regret analysis (Theorem \ref{thm:regret}).

\section{Related Work}

\subsection{Fisher-information--based data selection and optimal experimental design} 
Classical optimal experimental design (OED) optimizes Fisher-information--based criteria at the set level, e.g., D-optimality (maximize $\text{det}(\mathcal{I})$) or A-optimality (minimize $\text{tr}(\mathcal{I}^{-1})$) \cite{kiefer1959,pukelsheim2006optimal}. Bayesian OED extends these ideas with priors and information-theoretic utilities \cite{ChalonerVerdinelli1995,lindley1956measure}. In generalized linear and nonlinear models (e.g., logistic), construction of optimal designs often requires iterative matrix updates and inversions \cite{ford1992use,pronzato2013design,allen2021near}, making them computationally heavy and non-decomposable in online settings.

Active learning offers related acquisition principles. Early model-based and information-theoretic approaches target entropy/information gain \cite{cohn1996active,mackay1992information}. Expected model change (often instantiated as expected gradient length) selects samples maximizing the anticipated parameter update \cite{settles2008analysis,settles2009active}, while recent deep methods embed per-sample gradients to achieve diverse, uncertainty-aware batches (e.g., BADGE) \cite{Ash2020Deep}. Submodularity has been leveraged to obtain near-optimal greedy selection in information-gain and facility-location style objectives \cite{guestrin2005near,sener2018active}.

In contrast, our work proposes a \emph{decomposable} max-kernel objective in gradient space (gradient coverage) that induces monotone submodularity and admits per-impression marginal utilities suitable for millisecond-latency auctions. 
Our total-uncertainty objective $G(\mathcal{S})$ equals an I-optimal (integrated prediction-variance) criterion on the test distribution, and it collapses to A-optimality when the test gradient covariance is isotropic (or whitened), i.e., when $J_\text{test}$ is proportional to the identity, linking our formulation directly to classical OED objectives.

\subsection{Auto-Bidding for Non-Truthful Mechanisms}

Auto-bidding methods translate advertiser goals into per-impression bids under auction and budget constraints. Foundational work on real-time bidding optimized value-centric objectives with predictive signals and pacing \cite{zhang2014optimal,aggarwal2024auto}. In non-truthful (first-price) environments, bid shading and primal--dual/online learning approaches have been proposed to handle market uncertainty and budget feasibility \cite{karlsson2021adaptive,zhou2021efficient}. Recent theory studies no-regret learning in repeated first-price auctions with budgets, including independence or structured feedback assumptions and discounted objectives \cite{han2020optimal,wang2023learning,ai2022no}.

These approaches predominantly optimize short-term campaign KPIs (clicks/engagements/conversions) and treat labels as arriving post-click, whereas our setting explicitly values impressions for their \emph{information} about the recommendation model. Methodologically, we couple a submodular, decomposable information surrogate with a dual-variable pacing scheme (shadow price $\lambda$) and prove sublinear regret and budget-feasibility guarantees in an online auction stream (Section~\ref{sec:analysis}). Practically, we address the missing-label obstacle at bid time via a confidence-gated, label-free gradient estimator (and a zeroth-order variant for black-box models), which is typically outside the scope of value-only auto-bidding. This reframes robustness from market-facing uncertainty to \emph{model-learning} robustness: avoiding the pollution of performance signals by acquiring traffic that most reduces model uncertainty, which in turn improves long-term organic outcomes.

\subsection{Zeroth-Order Gradient Estimation and Optimization}

Zeroth-Order gradient estimation computes the gradient without analytically computing the derivatives. There are two mainstream methods to of ZO gradient estimation, single-point methods and two-point methods \cite{liu2020primer}. Single-point methods \cite{dani2007price,jamieson2012query,shamir2013complexity,gasnikov2017stochastic,zhang2022new} query the function value only once at
each time step, making it suitable for online optimization and control problems. Recent advances improves the single-point method by reducing variance~\cite{chen2022improve,huang2024zeroth,chen2025regression}, reusing samples~\cite{wang2024relizo}, and estimating the Hessian~\cite{kim2025subspace}.
Two-point methods~\cite{nesterov2017random,duchi2015optimal,shamir2017optimal}, as the name suggests, query the function value twice in the same instantaneous time. Recent advances improve the two-point methods by extending them to non-convex non-smooth problems~\cite{lin2022gradient}, training deep models~\cite{pang2024stochastic,jiang2024zo}, distributed settings~\cite{yi2022zeroth}, etc.


Many classical lower bounds have been derived for zeroth-order SGD in both strongly convex and convex settings~\cite{jamieson2012query, agarwal2009information, raginsky2011information, duchi2015optimal, shamir2017optimal}, as well as in non-convex settings~\cite{wang2020zeroth}. More recently, \cite{wang2018stochastic, balasubramanian2018zeroth, cai2022zeroth} demonstrated that if the gradient has a low-dimensional structure, the query complexity scales linearly with the intrinsic dimension and logarithmically with the number of parameters. Additional techniques, such as sampling schedules~\cite{bollapragada2018adaptive} and other variance reduction methods~\cite{ji2019improved, liu2018zeroth}, can be incorporated into zeroth-order SGD. Recently, an optimizer designed for LLM, named MeZO~\cite{malladi2023fine}, adapting the classical ZO-SGD~\cite{spall1992multivariate} method to operate inplace, thereby fine-tuning LMs with the same memory footprint as inference. Recent work~\cite{gautam2024variance} modifies MeZO by applying variance reduction techniques.

\section{Conclusion} 
Paid promotion can unintentionally harm high-quality content by polluting engagement signals.
We addressed this by reframing promotion as strategic data acquisition: a dual-objective bidding framework that jointly optimizes short-term engagement value and long-term uncertainty reduction of the platform's CTR model.
Methodologically, we proposed a tractable surrogate for information gain with a provable link to optimal experimental design, enabling decomposable marginal utilities at impression granularity. 
We coupled this with a two-stage Lagrangian bidding scheme: campaign-level budget pacing via a dynamically updated shadow price, and impression-level bid optimization. 
Practically, we solved the missing-label challenge at bid time with a confidence-gated gradient heuristic and a zeroth-order estimator for black-box models.
Theoretically, we established monotone submodularity of the composite objective, a sublinear regret bound for online first-price CPM dual pacing, and an expected budget feasibility guarantee. 
Empirically, component-wise validations and end-to-end offline simulations on synthetic and real-world datasets demonstrated consistent improvements in final model AUC/LogLoss over standard baselines, stable budget adherence, and robustness when analytical gradients are unavailable.
Overall, our framework transforms paid promotion from impression-purchasing into principled, information-aware bidding that strengthens recommendation models and enhances long-term organic outcomes. 


\section*{Acknowledgments}
We thank the anonymous reviewers and our shepherd, for their valuable comments and suggestions.
This work was supported in part by China NSF grant No. 62322206, 62432007, 62132018, 62025204, U2268204, 62441236, 62272307, 62372296. The opinions, findings, conclusions, and recommendations expressed in this paper are those of the authors and do not necessarily reflect the views of the funding agencies or the government.

\bibliographystyle{apalike}
\bibliography{ref}

\appendix
\section{Insight from a Toy Model}\label{sec:insight-toy-model}

We construct a theoretical model to demonstrate that high-variance (noisy) rewards prevent creators from improving their content, effectively trapping them in suboptimal local regions. We adopt the ``Try-Accept'' creator model from Yao et al.\ [60] but relax the unrealistic convexity assumption. We analyze the behavior in a general non-convex, $L$-smooth landscape.

\begin{definition}[Try-Accept Creator Strategy]
At step $t$, a creator with current content parameters $x_t$ generates a variation $x' = x_t + \delta_t$, where $\delta_t$ is sampled uniformly from a sphere of radius $r$, i.e., $\delta_t \sim \text{Unif}(r\mathbb{S}^{d-1})$. The creator observes a noisy reward $\tilde{R}(x)$. The strategy updates as follows:
\begin{equation*}
    x_{t+1} = \begin{cases} 
    x_t + \delta_t & \text{if } \tilde{R}(x_t + \delta_t) > \tilde{R}(x_t), \\
    x_t & \text{otherwise}.
    \end{cases}
\end{equation*}
\end{definition}

\noindent We define the assumptions for the reward landscape and the noise process.

\begin{assumption}[Non-Convex Landscape and Noise] \label{ass:nonconvex}
\hfill
\begin{enumerate}
    \item \textnormal{\textbf{L-Smoothness:}} The expected reward function $R(x)$ is differentiable and $L$-smooth. For all $x, y \in \mathbb{R}^d$, $\|\nabla R(x) - \nabla R(y)\| \le L \|x - y\|$.
    \item \textnormal{\textbf{Bounded Reward:}} The function $R(x)$ is bounded above by $R^*$.
    \item \textnormal{\textbf{Gaussian Reward Noise:}} The creator observes $\tilde{R}(x) = R(x) + \epsilon$, where $\epsilon \sim \mathcal{N}(0, \xi^2)$ is i.i.d.\ noise.
    \item \textnormal{\textbf{Small Step Size:}} The exploration radius $r$ is sufficiently small relative to the noise and smoothness, specifically $r \cdot \|\nabla R(x)\| \ll \xi$.
\end{enumerate}
\end{assumption}

\begin{theorem}[Non-Convex Convergence Rate with Noise] \label{thm:converge-try-accept}
Under Assumption~\ref{ass:nonconvex}, let $\Delta R = R^* - R(x_0)$. The expected average squared gradient norm (convergence to a stationary point) after $T$ steps satisfies:
\begin{equation*}
    \frac{1}{T} \sum_{t=0}^{T-1} \mathbb{E}\left[ \|\nabla R(x_t)\|^2 \right] \leq \mathcal{O}\left( \frac{\xi \cdot d \cdot \Delta R}{r^2 T} \right) + \mathcal{O}(L \cdot \xi \cdot d).
\end{equation*}
\end{theorem}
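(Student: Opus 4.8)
The plan is a descent-lemma / potential-function argument: lower-bound the expected one-step increase of $R$, then telescope against the ceiling $R^*$. Fix a round $t$ and condition on $x_t$. Write $\Delta_t := R(x_t+\delta_t) - R(x_t)$. By $L$-smoothness, $\Delta_t = \langle \nabla R(x_t), \delta_t\rangle + \theta_t$ with $|\theta_t| \le \tfrac{L}{2}\|\delta_t\|^2 = \tfrac{Lr^2}{2}$. Since $\delta_t \sim \mathrm{Unif}(r\mathbb{S}^{d-1})$ is symmetric with $\mathbb{E}[\delta_t\delta_t^\top] = \tfrac{r^2}{d}I_d$, we get $|\mathbb{E}_{\delta_t}[\Delta_t]| \le \tfrac{Lr^2}{2}$ and $\mathbb{E}_{\delta_t}[\Delta_t^2] = \tfrac{r^2}{d}\|\nabla R(x_t)\|^2 + (\text{terms of order } Lr^3\|\nabla R\|/\sqrt d \text{ and } L^2r^4)$.

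Next, handle the noisy acceptance. The incumbent and the proposal each carry an i.i.d.\ $\mathcal{N}(0,\xi^2)$ observation noise, so the event $\{\tilde R(x_t+\delta_t) > \tilde R(x_t)\}$ is $\{\nu_t < \Delta_t\}$ with $\nu_t \sim \mathcal{N}(0, 2\xi^2)$, hence $\Pr[\text{accept} \mid \delta_t] = \Phi\!\big(\Delta_t/(\sqrt2\,\xi)\big)$. The small-step assumption $r\|\nabla R\| \ll \xi$ forces $|\Delta_t|/\xi \ll 1$, so a Taylor expansion of $\Phi$ around $0$ gives $\Phi\!\big(\Delta_t/(\sqrt2\,\xi)\big) = \tfrac12 + \tfrac{\Delta_t}{2\sqrt\pi\,\xi} + O(\Delta_t^2/\xi^2)$. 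Therefore
\begin{align*}
\mathbb{E}\big[R(x_{t+1}) - R(x_t)\,\big|\,x_t\big]
&= \mathbb{E}_{\delta_t}\!\left[\Delta_t\,\Pr[\text{accept}\mid\delta_t]\right]\\
&= \tfrac12\,\mathbb{E}_{\delta_t}[\Delta_t] \;+\; \tfrac{1}{2\sqrt\pi\,\xi}\,\mathbb{E}_{\delta_t}[\Delta_t^2] \;+\; O\!\big(\mathbb{E}_{\delta_t}|\Delta_t|^3/\xi^2\big)\\
&\ge \frac{c_1\,r^2}{\xi\,d}\,\|\nabla R(x_t)\|^2 \;-\; c_2\,L\,r^2
\end{align*}
for absolute constants $c_1,c_2>0$, where the small-step regime is used again to absorb the cubic residual and the lower-order pieces of $\mathbb{E}[\Delta_t^2]$ into the two displayed terms.

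Finally, take total expectations, sum over $t=0,\dots,T-1$, telescope the left-hand side to $\mathbb{E}[R(x_T)] - R(x_0) \le R^* - R(x_0) = \Delta R$, and rearrange:
\[
\frac{1}{T}\sum_{t=0}^{T-1}\mathbb{E}\big[\|\nabla R(x_t)\|^2\big] \;\le\; \frac{\xi\,d}{c_1\,r^2}\left(\frac{\Delta R}{T} + c_2\,L\,r^2\right) \;=\; \mathcal{O}\!\left(\frac{\xi\,d\,\Delta R}{r^2\,T}\right) + \mathcal{O}(L\,\xi\,d),
\]
which is the claim; the $\mathbb{E}[\delta_t\delta_t^\top]=\tfrac{r^2}{d}I_d$ identity also explains the dimension factor, since the useful component of $\delta_t$ along $\nabla R$ scales like $1/\sqrt d$.

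The hard part will be the constant bookkeeping in the middle step: making the Taylor remainder of the Gaussian CDF rigorous uniformly over $\delta_t$, and pinning down precisely which smallness conditions make $\mathbb{E}|\Delta_t|^3/\xi^2$, the cross term $\mathbb{E}[\langle\nabla R,\delta_t\rangle\theta_t]$, and the $\theta_t^2$ contribution genuinely lower-order than $\tfrac{r^2}{\xi d}\|\nabla R\|^2$ and $Lr^2$ (the stated $r\|\nabla R\|\ll\xi$, likely together with the mild consequence $Lr^2 = O(\xi)$). A secondary point to settle is the modeling convention for $\tilde R(x_t)$: if it is carried over from when $x_t$ was accepted rather than re-sampled, the incumbent's noise is conditioned to be favorable, which biases $\Pr[\text{accept}\mid\delta_t]$ downward; this shifts only constants in the acceptance probability and can be handled with a standard truncated-Gaussian selection-bias bound.
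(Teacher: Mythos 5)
Your proposal is correct and follows essentially the same route as the paper's proof: the $L$-smoothness expansion of $\Delta_t$, the acceptance probability $\Phi\bigl(\Delta_t/(\sqrt{2}\xi)\bigr)$ with its Taylor expansion around zero under the small-step/high-noise assumption, the identity $\mathbb{E}[\langle \nabla R(x_t),\delta_t\rangle^2]=\tfrac{r^2}{d}\|\nabla R(x_t)\|^2$ producing the per-step gain lower bound $\gtrsim \tfrac{r^2}{\xi d}\|\nabla R(x_t)\|^2 - O(Lr^2)$, and the telescoping against $R^*$. Your closing remarks on the Taylor-remainder bookkeeping and the incumbent-noise convention are, if anything, more careful than the paper, which treats these approximations informally.
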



\begin{proof}
Let the true improvement at step $t$ be $\Delta_t = R(x_t + \delta_t) - R(x_t)$. Due to $L$-smoothness, we have the lower bound:
\begin{equation} \label{eq:smooth_bound}
    \Delta_t \ge \langle \nabla R(x_t), \delta_t \rangle - \frac{L}{2} \|\delta_t\|^2 = \langle \nabla R(x_t), \delta_t \rangle - \frac{L r^2}{2}.
\end{equation}
The update condition is $\tilde{R}(x_t+\delta_t) - \tilde{R}(x_t) > 0$. Let $\epsilon_{diff} = \epsilon' - \epsilon \sim \mathcal{N}(0, 2\xi^2)$. The step is accepted if:
\begin{equation*}
    \Delta_t + \epsilon_{diff} > 0.
\end{equation*}
The probability of acceptance, conditioned on the direction $\delta_t$, is:
\begin{equation*}
    \pi(\delta_t) = \mathbb{P}_{\epsilon}(\epsilon_{diff} > -\Delta_t) = \Phi\left( \frac{\Delta_t}{\sqrt{2}\xi} \right),
\end{equation*}
where $\Phi$ is the CDF of the standard normal distribution.
The expected reward increase at iteration $t$, taking expectation over both $\delta_t$ and noise, is:
\begin{equation*}
    \mathbb{E}[R_{t+1} - R_t] = \mathbb{E}_{\delta_t} \left[ \Delta_t \cdot \pi(\delta_t) \right] = \mathbb{E}_{\delta_t} \left[ \Delta_t \cdot \Phi\left( \frac{\Delta_t}{\sqrt{2}\xi} \right) \right].
\end{equation*}
Using Assumption~\ref{ass:nonconvex}.4 (high noise regime/small step), we approximate $\Phi(z) \approx \frac{1}{2} + \frac{z}{\sqrt{2\pi}}$ near zero.
Let $g_t = \nabla R(x_t)$. Approximating $\Delta_t \approx \langle g_t, \delta_t \rangle$ (ignoring the second-order term for the linear coefficient):
\begin{align*}
    \mathbb{E}[R_{t+1} - R_t] &\approx \mathbb{E}_{\delta_t} \left[ \Delta_t \left( \frac{1}{2} + \frac{\Delta_t}{2\sqrt{\pi}\xi} \right) \right] \\
    &= \frac{1}{2}\underbrace{\mathbb{E}_{\delta_t}[\Delta_t]}_{\text{Drift}} + \frac{1}{2\sqrt{\pi}\xi} \underbrace{\mathbb{E}_{\delta_t}[\Delta_t^2]}_{\text{Variance Signal}}.
\end{align*}
From Eq.~\eqref{eq:smooth_bound}, the first term (Drift) is bounded by $-\frac{L r^2}{2}$ (since $\mathbb{E}[\langle g, \delta \rangle]=0$).
For the second term, dominates by the first-order gradient term: $\mathbb{E}[\Delta_t^2] \approx \mathbb{E}[\langle g_t, \delta_t \rangle^2]$.
For a vector $\delta_t$ uniform on a sphere of radius $r$ in $d$ dimensions, $\mathbb{E}[\langle g_t, \delta_t \rangle^2] = \frac{r^2}{d} \|g_t\|^2$.
Combining these:
\begin{equation*}
    \mathbb{E}[R_{t+1} - R_t] \gtrsim - \frac{L r^2}{4} + \frac{1}{2\sqrt{\pi}\xi} \frac{r^2}{d} \|g_t\|^2.
\end{equation*}
Rearranging to bound the gradient norm:
\begin{equation*}
   \frac{r^2}{2\sqrt{\pi} d \xi} \|g_t\|^2 \lesssim \mathbb{E}[R_{t+1} - R_t] + \frac{L r^2}{4}.
\end{equation*}
Multiplying by $\frac{2\sqrt{\pi} d \xi}{r^2}$:
\begin{equation*}
    \|g_t\|^2 \lesssim \frac{C_1 d \xi}{r^2} \mathbb{E}[R_{t+1} - R_t] + C_2 L d \xi.
\end{equation*}
Summing over $t = 0 \dots T-1$ and dividing by $T$:
\begin{align*}
    \frac{1}{T} \sum_{t=0}^{T-1} \mathbb{E}[\|g_t\|^2] &\lesssim \frac{C_1 d \xi}{r^2 T} \sum_{t=0}^{T-1} \mathbb{E}[R_{t+1} - R_t] + C_2 L d \xi \\
    &= \frac{C_1 d \xi}{r^2 T} \mathbb{E}[R_T - R_0] + C_2 L d \xi.
\end{align*}
Since $R_T \le R^*$, the total gain is bounded by $R^* - R_0$. Thus, we arrive at the bound in the theorem statement.
\end{proof}

\noindent\textbf{Interpretation:} In non-convex optimization, convergence is measured by the gradient norm tending to zero (finding a local optimum). The bound contains two terms:
\begin{itemize}
    \item The first term decays with $T$, representing the optimization progress. Note that it scales linearly with noise $\xi$: higher noise slows down the learning rate.
    \item The second term, $\mathcal{O}(L \xi d)$, is a constant \textbf{Noise Floor}. As $T \to \infty$, the gradient norm does not vanish but hits this floor. The creator stops improving when the true gradient signal $\|\nabla R(x)\|$ becomes comparable to the noise-induced variance.
\end{itemize}
This proves that reducing platform uncertainty (decreasing $\xi$) is strictly necessary for creators to refine content beyond a coarse approximation.

We now establish the direct link between the uncertainty in the platform's CTR model and the variance of the reward signal, $\xi^2$, perceived by the creator. The creator's reward from natural traffic is proportional to the probability of their content winning the user-level auction, which depends on its pCTR, $\hat{\sigma}$. The pCTR can be modeled as the sum of the true underlying CTR, $\mu_i$, and a zero-mean noise term representing the model's uncertainty, whose variance is $\zeta_i^2$.

The creator's expected reward is a function of the probability that their pCTR is the highest among all competing items. Assuming the pCTR estimates $\{\hat{\sigma}_j\}$ for competing items follow independent Gaussian distributions, i.e., $\hat{\sigma}_j \sim \mathcal{N}(\mu_j, \zeta_j^2)$, this win probability for creator $i$ is:
\begin{equation*}
    P(\hat{\sigma}_i = \arg\max_j \hat{\sigma}_j) = \int_{-\infty}^{\infty} \left( \prod_{j \neq i} \Phi\left(\frac{y - \mu_j}{\zeta_j}\right) \right) \frac{1}{\zeta_i} \phi\left(\frac{y - \mu_i}{\zeta_i}\right) dy,
\end{equation*}
where $\phi(\cdot)$ and $\Phi(\cdot)$ are the PDF and CDF of the standard normal distribution, respectively.

This formulation reveals that the variance of the creator's reward is directly influenced by the variances $\{\zeta_j^2\}$ of the pCTR estimates. A higher $\zeta_i^2$, \textit{i.e.}, greater uncertainty in the model's prediction for the creator's own content—leads to a more volatile and unpredictable reward. Therefore, minimizing the CTR model's uncertainty (reducing $\zeta^2$) is equivalent to reducing the reward noise $\xi^2$ in Theorem~\ref{thm:converge-try-accept}, thereby facilitating more effective creator improvement.

\section{Additional Details for Section~\ref{sec:measurement}}\label{sec:measure-details}

To ensure a fair and controlled comparison in our empirical study (Figure 1), we applied a rigorous set of filtering criteria to sample both the promoted and organic posts from the platform. These criteria were designed to isolate the effect of the "Shutiao" promotion service on posts with a similar initial performance profile. The detailed sampling scope for each group is outlined below.

\subsection{Sampling Criteria for Promoted Posts (Treatment Group)}
A post was included in the treatment group if it met all of the following conditions:
\begin{itemize}
    \item \textbf{Promotion History:} The post must have been promoted \textbf{exactly once} using the "Shutiao" service. It must not have been promoted using any other competitive bidding ad products on the platform.
    \item \textbf{Campaign Budget:} The expenditure for the single "Shutiao" campaign must have been greater than or equal to 30 units.
    \item \textbf{Pre-Promotion Performance:} In the period from its publication until the day before the promotion began, the post must satisfy:
    \begin{itemize}
        \item In-feed clicks > 0 (to ensure it was not completely ignored by the organic system).
        \item Total impressions < 50,000.
        \item Total clicks < 10,000.
        \item Total engagements < 2,000.
    \end{itemize}
    \item \textbf{Account and Content Type:} The post must \textbf{not} be from an enterprise account, a product-centric post, or a post containing direct e-commerce affiliate links. This focuses the study on genuine creator content.
\end{itemize}

\subsection{Sampling Criteria for Organic Posts (Control Group)}
A post was included in the control group to serve as a comparable baseline if it met all of the following conditions:
\begin{itemize}
    \item \textbf{Promotion History:} The post must have \textbf{never} been promoted with "Shutiao" or any other competitive bidding ad products.
    \item \textbf{Pre-Sampling Performance:} To ensure the control group had a similar starting point to the treatment group, the post's performance from its publication until the day of sampling must satisfy:
    \begin{itemize}
        \item In-feed clicks > 0.
        \item Total impressions < 50,000.
        \item Total clicks < 10,000.
        \item Total engagements < 2,000.
    \end{itemize}
    \item \textbf{Account and Content Type:} Similar to the treatment group, the post must \textbf{not} be from an enterprise account, a product-centric post, or a post containing direct e-commerce affiliate links.
\end{itemize}

\section{On Parameter Alignment for Gradient Coverage}\label{sec:parameter-alignment}
In practice, mainstream content platforms continually retrain their pCTR models on a fixed cadence (for example, every few hours or once per day). 
Our bidding campaigns are scheduled to align with this cadence: a campaign’s horizon is the same as the continual-training period. 
Consequently, the gradients used by the gradient-coverage surrogate and the gradients estimated at bid time are both computed with respect to the same, current model snapshot (the ``anchor'' parameters at campaign start). 
This alignment eliminates the alleged mismatch between $\theta_0$ and $\theta_t$ during a campaign. 
Even in deployments with minor mid-campaign calibrations (such as bias correction or lightweight feature re-scaling), the Gaussian-kernel similarity in gradient space is robust to small parameter drift, and the validation gradient bank can be refreshed at the next campaign boundary. 
Empirically, we observe negligible differences between using a fixed snapshot versus recomputing within the campaign window, confirming that, under the standard industrial retraining schedule, parameter-point mismatch is not a practical concern.
\section{Adaptation to Second-Price Auctions}\label{sec:spa}

Our proposed framework is mechanism-agnostic regarding the valuation of impressions. The core components—the surrogate objective $U(\mathcal{S})$, the gradient coverage calculation, and the confidence-gated marginal utility estimation ($\Delta_t$)—quantify the intrinsic value of an impression to the model's learning process. This value exists independently of how the impression is auctioned.

While the main text focuses on the First-Price Auction (FPA) due to its prevalence in industry and the complexity of bid shading, our framework is easily adapted to Second-Price Auctions (SPA) (e.g., VCG mechanisms). In this section, we derive the optimal bidding strategy for SPA, showing that it results in a simplified linear bidding formula.

\subsection{Problem Formulation}
In a Second-Price Auction, the winner pays the market clearing price (the highest competing bid), denoted as $z_t$. The bidder does not know $z_t$ beforehand but knows its distribution or can treat it as a random variable. The goal remains to maximize the total utility (immediate value + uncertainty reduction) subject to a budget constraint $B$.

Let $x_t \in \{0, 1\}$ be the allocation variable (1 if we win, 0 otherwise). We win if our bid $b_t \ge z_t$.
The optimization problem is:
\begin{align*}
    \max_{\{b_t\}} \quad & \sum_{t=1}^T \Delta_t \cdot x_t(b_t) \\
    \text{s.t.} \quad & \sum_{t=1}^T z_t \cdot x_t(b_t) \le B
\end{align*}
where $\Delta_t$ is the estimated marginal utility derived in Section 3.2.

\subsection{Optimal Bidding Strategy}
We construct the Lagrangian with a dual variable $\lambda \ge 0$ (the shadow price of the budget):
\begin{equation*}
    \mathcal{L}(\mathbf{b}, \lambda) = \sum_{t=1}^T \Delta_t x_t - \lambda \left( \sum_{t=1}^T z_t x_t - B \right) = \lambda B + \sum_{t=1}^T (\Delta_t - \lambda z_t) x_t
\end{equation*}
To maximize the Lagrangian at step $t$, we should win the impression ($x_t=1$) if and only if the marginal contribution to the Lagrangian is non-negative:
\begin{equation*}
    \Delta_t - \lambda z_t \ge 0 \implies z_t \le \frac{\Delta_t}{\lambda}
\end{equation*}
In a Second-Price Auction, truth-telling is dominant with respect to the valuation. Here, our ``valuation'' is adjusted by the opportunity cost of the budget $\lambda$. To ensure we win exactly when the market price $z_t$ is below our threshold $\Delta_t/\lambda$, we simply submit this threshold as our bid:
\begin{equation} \label{eq:spa_bid}
    b_t^* = \frac{\Delta_t}{\lambda}
\end{equation}

\subsection{Algorithm Modifications}
Adapting the proposed two-stage framework to SPA requires two simple changes:

\begin{enumerate}
    \item \textbf{Stage 1 (Pacing):} The update rule for $\lambda$ (Equation \ref{eq:dual-update}) remains structurally the same. The dual variable $\lambda$ still increases if the budget is consumed too fast and decreases if consumed too slowly. However, the cost term in the update logic changes from ``our bid'' to ``the second price'' (market price).
    
    \item \textbf{Stage 2 (Bidding):} The complex inverse-shading optimization used in First-Price scenarios (finding $b_t$ to maximize surplus) is replaced by the closed-form linear scaling in Equation (\ref{eq:spa_bid}).
\end{enumerate}

\paragraph{Comparison.}
In the FPA setting defined in Section \ref{sec:two-stage-bidding}, the bidder must shade their bid $b_t < \Delta_t / \lambda$ to generate surplus, requiring estimation of the win probability curve $W_t(b)$. In the SPA setting, the strategy simplifies to bidding the marginal utility deflated by the shadow price. This confirms that our core contribution—accurately estimating $\Delta_t$ via Gradient Coverage—is robust and transferable across different auction mechanisms.
\section{Deferred Proof}

\subsection{Proof of Theorem~\ref{thm:surrogate-relation}}\label{sec:proof-surrogate-relation}
\begin{proof}
Fix $x\in\mathcal{D}_{\mathrm{val}}$ and let $z_x\in S$ be any nearest neighbor in gradient space, i.e.,
\[
z_x \in \arg\min_{z\in S} \|g(x)-g(z)\|^2,\qquad d_x(S):=\|g(x)-g(z_x)\|^2.
\]
By PSD order, $\mathcal{I}_\gamma(S) \succeq \gamma I_d + g(z_x)g(z_x)^\top$, hence 
$\mathcal{I}_\gamma(S)^{-1} \preceq \big(\gamma I_d + g(z_x)g(z_x)^\top\big)^{-1}$. 
By Sherman--Morrison \cite{dd80f5c8-fa20-3a07-bd73-ee4c02e5352c},
\[
\big(\gamma I_d + u u^\top\big)^{-1}
= \frac{1}{\gamma}I_d - \frac{1}{\gamma^2}\,\frac{u u^\top}{1+\frac{1}{\gamma}\|u\|^2},
\]
so with $u=g(z_x)$ we get
\begin{align*}
g(x)^\top \mathcal{I}_\gamma(S)^{-1} g(x)
&\le 
g(x)^\top \big(\gamma I_d + g(z_x)g(z_x)^\top\big)^{-1} g(x)\\
&= \frac{1}{\gamma}\|g(x)\|^2 - \frac{1}{\gamma^2}\,\frac{\langle g(x),g(z_x)\rangle^2}{1+\frac{1}{\gamma}\|g(z_x)\|^2}.
\end{align*}
Using the identity 
$\|g(x)-g(z_x)\|^2=\|g(x)\|^2+\|g(z_x)\|^2-2\langle g(x),g(z_x)\rangle$
and rearranging,
\[
\langle g(x),g(z_x)\rangle 
= \frac{\|g(x)\|^2+\|g(z_x)\|^2 - d_x(S)}{2}.
\]
Therefore, on the subset 
\[
A_\tau \;:=\; \big\{x\in\mathcal{D}_{\mathrm{val}}:\; d_x(S)\le \tau\big\},
\]
and using the bounds $\|g(x)\|\le L$ and $\|g(z_x)\|\ge m$, we obtain for $x\in A_\tau$,
\[
\langle g(x),g(z_x)\rangle^2 
\;\ge\; \left(\frac{m^2+m^2-\tau}{2}\right)^2 
= \frac{(2m^2-\tau)^2}{4}.
\]
Also, $1+\tfrac{1}{\gamma}\|g(z_x)\|^2 \le 1+\tfrac{L^2}{\gamma}$. Summing the per-$x$ bound over 
$\mathcal{D}_{\mathrm{val}}$ yields
\begin{align*}
G_\gamma(S)
&=\sum_{x} g(x)^\top \mathcal{I}_\gamma(S)^{-1} g(x)\\
&\le \frac{1}{\gamma}\sum_{x}\|g(x)\|^2 \;-\; 
\frac{1}{\gamma^2}\sum_{x\in A_\tau}\frac{\langle g(x),g(z_x)\rangle^2}{1+\tfrac{1}{\gamma}\|g(z_x)\|^2}\\
&\le \frac{k L^2}{\gamma} \;-\; 
\frac{(2m^2-\tau)^2}{4\,\gamma^2\,\big(1+\tfrac{L^2}{\gamma}\big)}\cdot |A_\tau|.
\end{align*}
It remains to relate $|A_\tau|$ to $U_\lambda(S)$. For any $x\in A_\tau$, we have 
$\exp(-\lambda d_x(S))\ge \exp(-\lambda\tau)$, and for $x\notin A_\tau$, 
$\exp(-\lambda d_x(S))\le \exp(-\lambda\tau)$. Hence
\[
U_\lambda(S)
= \sum_{x} \exp\big(-\lambda d_x(S)\big)
\le |A_\tau|\cdot 1 \;+\; (k-|A_\tau|)\cdot e^{-\lambda\tau}.
\]
Rearranging gives
\[
|A_\tau| \;\ge\; \frac{U_\lambda(S) - k\,e^{-\lambda\tau}}{1 - e^{-\lambda\tau}}.
\]
Substituting this lower bound on $|A_\tau|$ into the previous inequality completes the proof.
\end{proof}

\subsection{Proof of Theorem~\ref{thm:submodular}}\label{sec:proof-submodular}
\begin{proof}
Noticing that $V(S)$ is an additive function, to prove the submodularity of $F(S)$, we only need to prove that $U(S)$ is submodular.

\textbf{Step 1: Decompose $U(S)$:}  
The function is defined as:  
\[
U(S) = \sum_{x \in \mathcal{D}_{\text{val}}} \max_{z \in S}  \exp\left(-\lambda \left\| g_{\theta_0}(x) - g_{\theta_0}(z) \right\|^2 \right).
\]  
Define a \textit{similarity kernel} \(k(x, z) = \exp\left(-\lambda \left\| g_{\theta_0}(x) - g_{\theta_0}(z) \right\|^2 \right)\). Then, for each validation point \(x\), define:  
\[
f_x(S) = \max_{z \in S}  k(x, z).
\]  
Thus, \(U(S) = \sum_{x \in \mathcal{D}_{\text{val}}} f_x(S)\). Since a sum of submodular functions is submodular, it suffices to prove that each \(f_x(S)\) is submodular for fixed \(x\).  

\textbf{Step 2: Prove $f_x(S)$ is Submodular:}
Fix a validation point \(x \in \mathcal{D}_{\text{val}}\). We show \(f_x(S)\) is submodular. For any \(A \subseteq B \subseteq \mathcal{D}_{\text{train}}\) and \(v \in \mathcal{D}_{\text{train}} \setminus B\):  
\[
f_x(A \cup \{v\}) - f_x(A) \geq f_x(B \cup \{v\}) - f_x(B).
\]  

Case 1: \(k(x, v) \leq \max_{z \in B} k(x, z)\)  
Since \(A \subseteq B\), we have \(\max_{z \in A} k(x, z) \leq \max_{z \in B} k(x, z)\).  
Right-hand side:
  \[
  f_x(B \cup \{v\}) - f_x(B) = \max\left\{ \max_{z \in B} k(x, z),  k(x, v) \right\} - \max_{z \in B} k(x, z) = 0,
  \]  
  because \(k(x, v) \leq \max_{z \in B} k(x, z)\).  
Left-hand side:
  \[
  f_x(A \cup \{v\}) - f_x(A) = \max\left\{ \max_{z \in A} k(x, z),  k(x, v) \right\} - \max_{z \in A} k(x, z) \geq 0,
  \]  
  since the \(\max\) can only increase or stay the same.  
Thus, \(0 \geq 0\) holds.  

Case 2: \(k(x, v) > \max_{z \in B} k(x, z)\)  
Since \(A \subseteq B\), \(\max_{z \in A} k(x, z) \leq \max_{z \in B} k(x, z) < k(x, v)\).  
Right-hand side:
  \[
  f_x(B \cup \{v\}) - f_x(B) = k(x, v) - \max_{z \in B} k(x, z).
  \]  
Left-hand side:
  \[
  f_x(A \cup \{v\}) - f_x(A) = k(x, v) - \max_{z \in A} k(x, z).
  \]  
Since \(\max_{z \in A} k(x, z) \leq \max_{z \in B} k(x, z)\), we have:  
\[
k(x, v) - \max_{z \in A} k(x, z) \geq k(x, v) - \max_{z \in B} k(x, z),
\]  
so the inequality holds.  

Case 3: \(k(x, v) > \max_{z \in A} k(x, z)\) but \(k(x, v) \leq \max_{z \in B} k(x, z)\)  
Right-hand side:
  \[
  f_x(B \cup \{v\}) - f_x(B) = 0 \quad (\text{since } k(x, v) \leq \max_{z \in B} k(x, z)).
  \]  
Left-hand side:
  \[
  f_x(A \cup \{v\}) - f_x(A) = k(x, v) - \max_{z \in A} k(x, z) > 0.
  \]  
Thus, \(> 0 \geq 0\) holds.  

In all cases, \(f_x(A \cup \{v\}) - f_x(A) \geq f_x(B \cup \{v\}) - f_x(B)\).  

\end{proof}

\subsection{Proof of Theorem~\ref{thm:regret}}\label{sec:proof-regret}
\begin{proof}
Define the convex per-round loss $\ell_t(\lambda):=-f_t(\lambda)$.
Since $f_t(\lambda)$ is the supremum of affine functions in $\lambda$, $\ell_t$ is convex, and a valid subgradient at $\lambda_{t-1}$ is 
\[
\partial \ell_t(\lambda_{t-1})\;\ni\;\frac{h_t}{C},\qquad 
\text{where } h_t=W_t(b_t)\,b_t\in[0,b_{\max}]\subseteq[0,C].
\]
With the negative-entropy mirror map on $\lambda>0$, the multiplicative-weights update is
$\lambda_t=\lambda_{t-1}\exp\!\big(\eta\,h_t/C\big)$.
Standard online mirror descent (OMD) analysis for scalar $\lambda$ with negative entropy (see, e.g., MWU regret bounds) yields
\[
\sum_{t=1}^T \Big(\ell_t(\lambda_{t-1}) - \ell_t(\lambda^*)\Big)
\;\le\; 
\frac{\log(\lambda_{\max}/\lambda_0)}{\eta}
\;+\;
\frac{\eta}{2}\sum_{t=1}^T\Big(\tfrac{h_t}{C}\Big)^2
\;\le\;
\frac{\log(\lambda_{\max}/\lambda_0)}{\eta}
\;+\;
\frac{\eta T}{2},
\]
where we used $h_t\le C$.
Rearranging and using $\ell_t=-f_t$ gives
\begin{equation}
\label{eq:ft-gap}
\sum_{t=1}^T f_t(\lambda_{t-1})
\;\ge\;
\sum_{t=1}^T f_t(\lambda^*)
\;-\;
\frac{\log(\lambda_{\max}/\lambda_0)}{\eta}
\;-\;
\frac{\eta T}{2}.
\end{equation}
By the envelope theorem, $b_t\in\arg\max_b W_t(b)\,(\Delta_t-\lambda_{t-1}b)$ implies
\[
f_t(\lambda_{t-1}) \;=\; W_t(b_t)\,\big(\Delta_t - \lambda_{t-1}\,b_t\big).
\]
Therefore,
\begin{equation}
\label{eq:alg-decomp}
\sum_{t=1}^T W_t(b_t)\,\Delta_t 
\;=\; \sum_{t=1}^T f_t(\lambda_{t-1}) \;+\; \sum_{t=1}^T \lambda_{t-1}\,h_t.
\end{equation}
A standard OMD inequality with negative entropy yields
\begin{equation}
\label{eq:lambda-cross}
\sum_{t=1}^T h_t\,(\lambda_{t-1}-\lambda^*)
\;\le\;
\frac{\log(\lambda_{\max}/\lambda_0)}{\eta}
\;+\;
\frac{\eta}{2}\sum_{t=1}^T\Big(\tfrac{h_t}{C}\Big)^2 C^2
\;\le\;
\frac{\log(\lambda_{\max}/\lambda_0)}{\eta}
\;+\;
\frac{\eta T\,C^2}{2}.
\end{equation}
Using \eqref{eq:lambda-cross}, we obtain
\[
\sum_{t=1}^T \lambda_{t-1}\,h_t
\;\ge\;
\lambda^*\sum_{t=1}^T h_t
\;-\;
\frac{\log(\lambda_{\max}/\lambda_0)}{\eta}
\;-\;
\frac{\eta T\,C^2}{2}.
\]
Combining this with \eqref{eq:ft-gap} and \eqref{eq:alg-decomp} yields
\[
\sum_{t=1}^T W_t(b_t)\,\Delta_t 
\;\ge\;
\sum_{t=1}^T f_t(\lambda^*)
\;+\;
\lambda^*\sum_{t=1}^T h_t
\;-\;
\frac{2\log(\lambda_{\max}/\lambda_0)}{\eta}
\;-\;
\eta T\Big(\frac{1}{2}+\frac{C^2}{2}\Big).
\]
By weak duality for the budget-constrained offline optimum,
\[
\mathrm{OPT}
\;\le\;
\sum_{t=1}^T f_t(\lambda^*) \;+\; \lambda^* B.
\]
Therefore,
\[
\sum_{t=1}^T W_t(b_t)\,\Delta_t 
\;\ge\;
\mathrm{OPT}
\;-\;
\lambda^*\Big(B-\textstyle\sum_{t=1}^T h_t\Big)
\;-\;
\frac{2\log(\lambda_{\max}/\lambda_0)}{\eta}
\;-\;
\eta T\Big(\frac{1}{2}+\frac{C^2}{2}\Big).
\]
Finally, with $\eta=\sqrt{\frac{\log(\lambda_{\max}/\lambda_0)}{T}}\frac{1}{C}$, the error terms are $O\!\big(C\,\sqrt{T\,\log(\lambda_{\max}/\lambda_0)}\big)$, giving the stated bound after taking expectations.
\end{proof}

\subsection{Proof of Theorem~\ref{thm:budget-feas}}\label{sec:proof-budget-feas}
\begin{proof}
First, we consider the update of the dual variable.
$$ \lambda_t = \lambda_{t-1} \exp\left(\eta \frac{W_a(b_t) b_t}{B}\right) \implies \log \lambda_t = \log \lambda_{t-1} + \eta \frac{W_a(b_t) b_t}{B}.$$
Telescoping from $1$ to $T$, 
$$\log \lambda_T - \log \lambda_0 = \frac{\eta}{B} \sum_{t=1}^T W_a(b_t) b_t. $$

Then, we bound the expenditure. Since $\lambda_T\leq \lambda_{\max}$ and $\mathbb{E}[\mathbf{1}_{\text{win}_t}|b_t]=W_a(b_t)$:
$$\sum_t W_a(b_t) b_t = \frac{B}{\eta} \log \frac{\lambda_T}{\lambda_0} \leq \frac{B}{\eta} \log \frac{\lambda_{\max}}{\lambda_0}. $$
Taking expectation:
$$\mathbb{E}\left[\sum_t b_t \mathbf{1}_{\text{win}t}\right] = \sum_t \mathbb{E}[W_a(b_t) b_t] \leq \frac{B}{\eta} \log \frac{\lambda_{\max}}{\lambda_0}. $$

\end{proof}

\section{Details in Experiments}\label{sec:exp-set-app}
\paragraph{Experimental Setup of Section~\ref{sec:exp-surrogate-relationship}}
\label{sec:exp-set-app-surrogate-relationship}

We randomly generate $2000$ samples with $20$ dimensions for binary classification using \texttt{sklearn} as the original dataset. Then, randomly choose $500$ samples for the initial training dataset, another disjoint $500$ samples as the test dataset, and another disjoint $500$ samples as the valid set. The left samples are the candidate dataset to be chosen by the algorithms. Each algorithm chooses $50$ sampels from the candidates.

\paragraph{Experimental Setup for Section~\ref{sec:exp-budget-feas}}
\label{sec:exp-set-app-budget-feas}

We simulate a repeated auction for $5000$ times. In each auction, there is assumed to be one competitor whose bid is sampled uniformly from $[0,1]$. The value of the bidder is assumed to be $1.5$, and the auction mechanism follows the first-price auction.

\paragraph{Experimental Setup for Section~\ref{sec:exp-grad-est}}
\label{sec:exp-set-app-grad-est}
We randomly generate $500$ samples for model training and another $1500$ i.i.d. samples as the test set. For each time of ZO gradient estimation, we set $\mu=0.01$ and uses $10$ data samples as a batch.

\paragraph{Experimental Setup for Section~\ref{sec:exp-end-to-end}}
\label{sec:exp-set-app-end-to-end}
To evaluate the end-to-end performance of our framework in a realistic, budget-constrained setting, we conduct a comprehensive offline simulation. The experiment utilizes a dataset (either the public Criteo dataset or a synthetic one) that is chronologically partitioned into four disjoint sets: an initial training set ($\mathcal{D}_{\text{init}}$) to establish a baseline model, a fixed validation set ($\mathcal{D}_{\text{val}}$) used for computing the uncertainty surrogate $U(\mathcal{S})$, a large auction stream ($\mathcal{D}_{\text{auc}}$) for the bidding simulation, and a final held-out test set ($\mathcal{D}_{\text{test}}$) for evaluation. First, an initial pCTR model, a standard multi-layer perceptron (MLP), is trained on $\mathcal{D}_{\text{init}}$. We then simulate a sequence of first-price auctions by streaming impressions one-by-one from $\mathcal{D}_{\text{auc}}$. In each auction, all bidding agents compete against each other and a simulated market price to win the impression, subject to an identical total budget, $B$. We evaluate five distinct strategies: (i) our full \textbf{Proposed} method with a balanced objective ($\beta=0.5$); (ii-iii) two ablative variants, \textbf{Value-Only} ($\beta=1$) and \textbf{Uncertainty-Only} ($\beta=0$), to isolate the effects of the utility components; and (iv-v) two standard industry baselines, \textbf{pCTR-Linear} and \textbf{Uniform Bidding}. Crucially, to simulate a practical black-box scenario where direct model gradients are inaccessible, all methods leveraging our framework utilize a Zeroth-Order (ZO) estimator to approximate the necessary loss gradients for the utility calculation. Upon completion of the auction stream, for each bidder, we create an augmented dataset by combining its set of won impressions ($\mathcal{S}_{\text{won}}$) with the initial training set $\mathcal{D}_{\text{init}}$. A new model is then retrained from scratch on this augmented data. The ultimate effectiveness of each strategy is measured by the AUC and LogLoss of its corresponding retrained model on the final, unseen test set $\mathcal{D}_{\text{test}}$, which represents future data.

For reproducibility, we specify the precise hyperparameter configurations used throughout our simulation. The pCTR model is an MLP with two hidden layers of sizes 128 and 64, respectively, each followed by a ReLU activation and a dropout layer with a rate of 0.3. All models are trained using the Adam optimizer with a learning rate of $10^{-3}$ for 5 epochs and a batch size of 1024. 
For the synthesis dataset, the simulation runs over an auction stream of 6,00 impressions, with each bidding agent allocated an identical total budget of $B=6,00$, while for the Critero dataset, the number of impressions is $2000$ and the budget is $2000$.
The pacing controller for our framework-based methods updates the dual variable $\lambda$ every 100 auctions. For our main \textbf{Proposed} method, the objective's balancing hyperparameter is set to $\beta=0.5$. The budget pacing mechanism is configured with an initial dual variable $\lambda_0=0.01$ and a learning rate $\eta=0.1$ for its multiplicative updates. The Gaussian kernel in the uncertainty surrogate $U(\mathcal{S})$ uses a bandwidth parameter of $\lambda_{\text{kernel}}=0.1$. The Zeroth-Order (ZO) gradient estimator, which is critical for our black-box setting, is configured with a smoothing parameter $\mu=0.01$ and averages its estimate over 5 random direction vectors per computation. The baseline methods are configured as follows: the \textbf{Uniform Bidding} agent places a constant bid of 20.0, and the \textbf{pCTR-Linear} agent uses a base multiplier of 45.0 for its bids. All experiments were conducted using the PyTorch framework on a GPU-accelerated machine.

\subsection{Additional Experimental Results}\label{sec:exp-set-app-results}
In this section, we supplement the end-to-end offline case study presented in Section~5.4 by providing enlarged visualizations of the experimental results.
Due to space constraints in the main text, some details in Figure~7 and Figure~8 were obscured. 
Figure~\ref{fig:syn-app} and Figure~\ref{fig:critero-app} present the detailed performance metrics for the Synthesis and Criteo datasets, respectively. 
These figures offer a clearer view of the final model performance (AUC and LogLoss), the cumulative spending curves demonstrating budget feasibility, and the dynamic evolution of the dual variable $\lambda$ (shadow price) over the course of the auction stream
\begin{figure}
    \centering
    \includegraphics[width=0.8\textwidth]{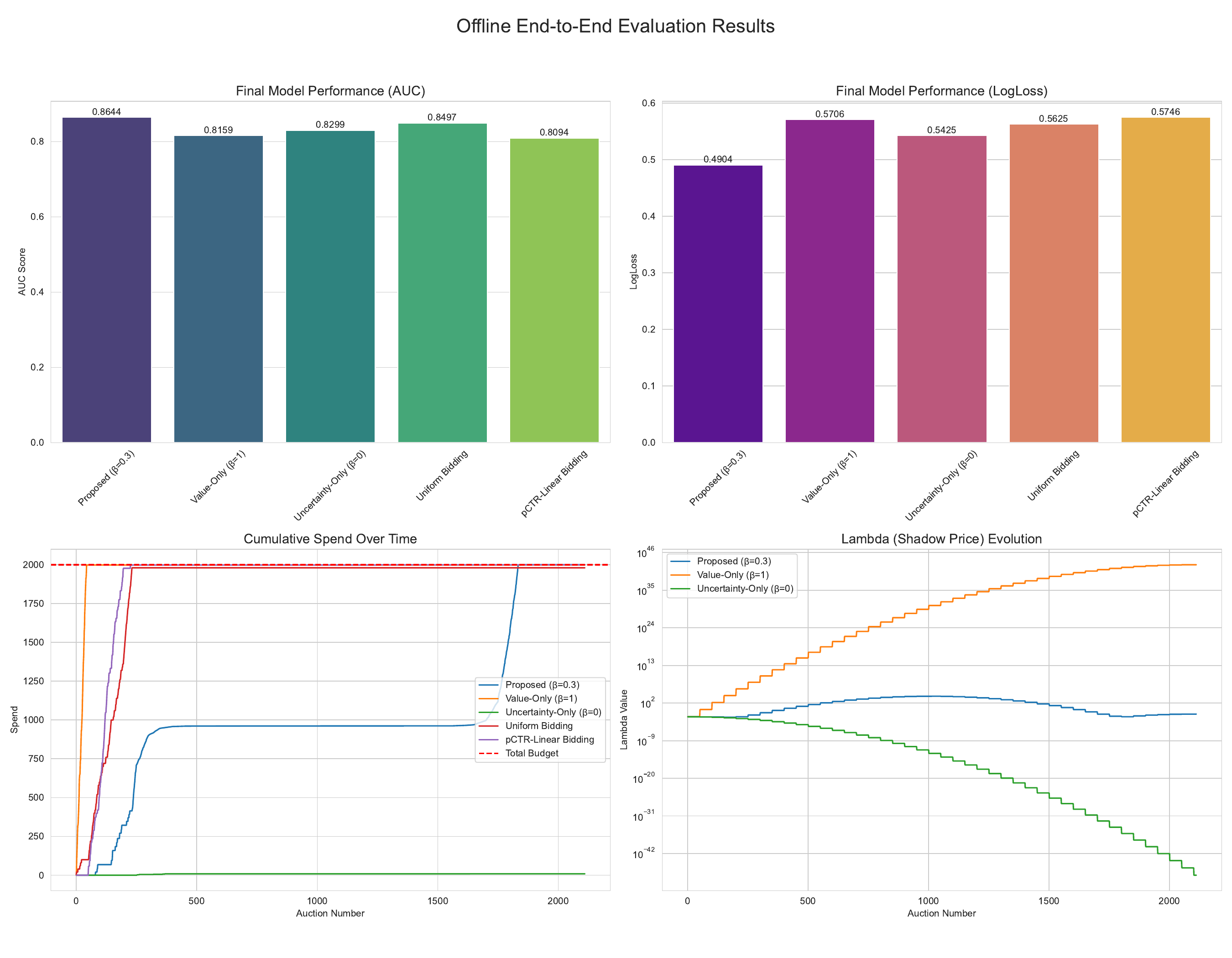}
        \caption{
        Offline Evaluation on Synthesis Dataset.
        }
        \label{fig:syn-app}
\end{figure}
\begin{figure}
    \centering
    \includegraphics[width=0.8\textwidth]{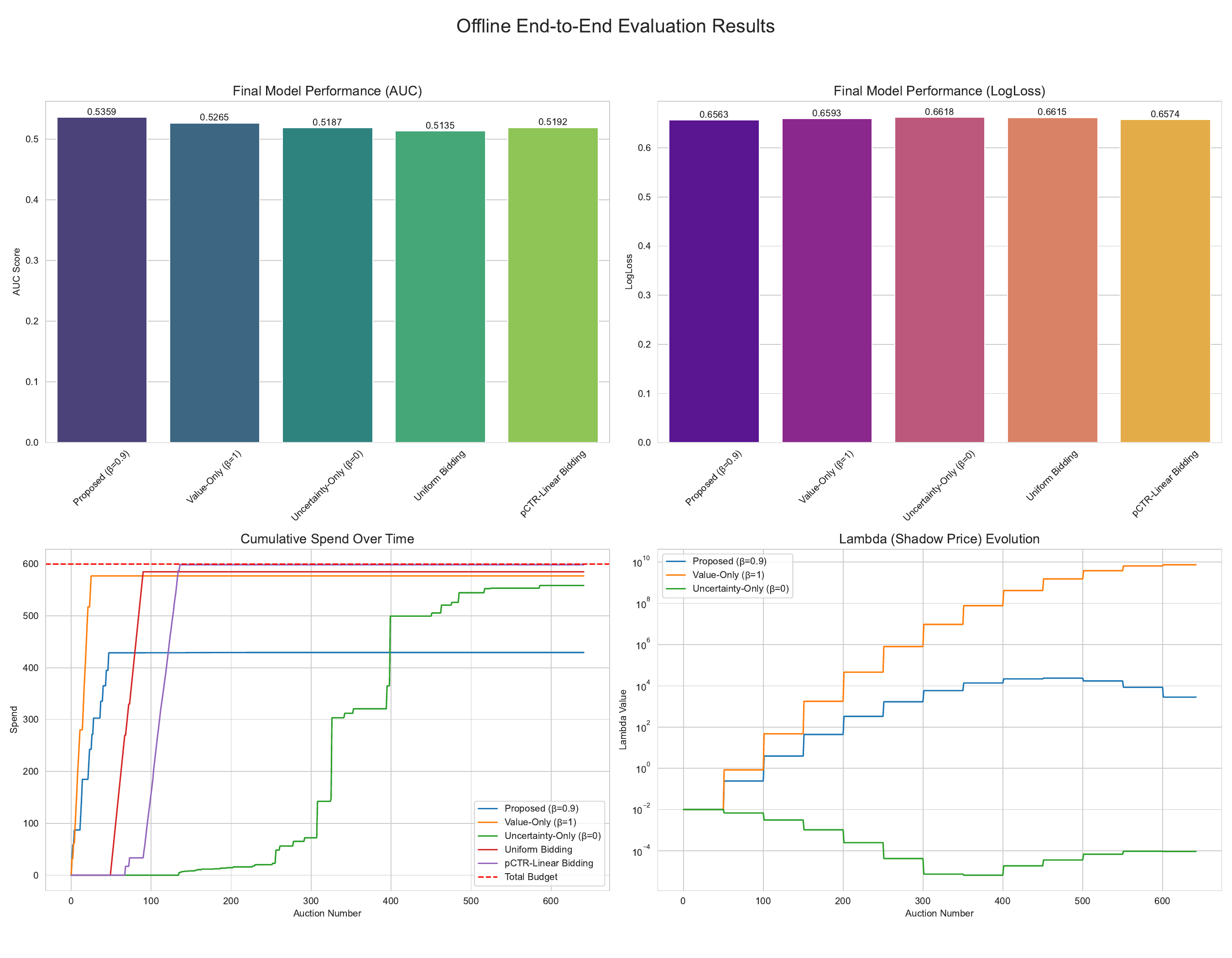}
        \caption{
        Offline Evaluation on Critero Dataset.
        }
        \label{fig:critero-app}
\end{figure}



\end{document}